\theoremstyle{plain}
\newtheorem{theorem}{Theorem}
\newtheorem{lemma}{Lemma}
\theoremstyle{definition}
\newtheorem{defi}{Definition}
\newtheorem{corollary}{Corollary}
\theoremstyle{remark}
\newcommand\numberthis{\addtocounter{equation}{1}\tag{\theequation}}
\DeclareMathOperator*{\argmin}{argmin}
\newcommand{\nm}[1]{\textcolor{blue}{\textbf{NM-comment: [#1]}}}
\newcommand{\sst}[1]{\st{#1}}
\newcommand{\nma}[1]{{\color{blue}#1}}
\newcommand{\size}[1]{| #1 |}
\newcommand{\supp}{\mathrm{supp}}
\newcommand{\stkout}[1]{\ifmmode\text{\color{red}\sout{\ensuremath{#1}}}\else\sout{#1}\fi}
\begin{document}
\setstcolor{red}
\setulcolor{red}
\setul{red}{2pt}
\title{Energy-Efficient  Interactive Beam-Alignment\\for Millimeter-Wave Networks}
\author{Muddassar Hussain, and Nicolo Michelusi
\thanks{M. Hussain and N. Michelusi are with the School of Electrical and Computer Engineering, Purdue University. \emph{email}: \{hussai13, michelus\}@purdue.edu. This research has been funded by NSF under grant CNS-1642982.
 Part of this work appeared at Asilomar'18 \cite{asilomar2018}.}
\vspace{-12mm}
}
\maketitle
\begin{abstract}
Millimeter-wave will be a key technology in next-generation wireless networks thanks to abundant bandwidth availability. However, the use of large antenna arrays with beamforming demands precise beam-alignment between transmitter and receiver, and may entail huge overhead in mobile environments. This paper investigates the design of an  optimal interactive beam-alignment and data communication protocol, with the goal of minimizing power consumption under a minimum rate constraint. The base-station selects beam-alignment or data communication and the beam parameters, based on feedback from the user-end. Based on the sectored antenna model and uniform prior on  the angles of departure and arrival (AoD/AoA), the optimality of a \emph{fixed-length} beam-alignment phase followed by a data-communication phase is demonstrated. Moreover, a \emph{decoupled fractional} beam-alignment method is shown to be optimal, which decouples over time the alignment of AoD and AoA, and iteratively scans a fraction of their region of uncertainty. A heuristic policy is proposed for non-uniform prior on AoD/AoA, with provable performance guarantees, and it is shown that the uniform prior is the worst-case scenario. The performance degradation due to detection errors is studied analytically and via simulation. The numerical results with analog beams depict {up to $4 \mathrm{dB}$, $7.5 \mathrm{dB}$, and $14 \mathrm{dB}$  gains over a state-of-the-art bisection method}, conventional and interactive exhaustive search policies, respectively,
 {and demonstrate that the sectored model provides valuable insights for beam-alignment design.}
\end{abstract}
\begin{IEEEkeywords}
Millimeter-wave, beam-alignment, initial access, Markov decision process
\end{IEEEkeywords}
\vspace{-5mm}
\section{Introduction}
Mobile traffic has witnessed a tremendous growth over the last decade, 18-folds over the past five years alone, and is expected to grow
 with a compound annual growth rate of 47\% from 2016 to 2021 \cite{cisco2}.
This rapid increase poses  a severe burden to current systems operating below 6 GHz, due to limited bandwidth availability.
Millimeter-wave (mm-wave) is emerging as a promising solution to enable multi-Gbps communication, thanks to abundant bandwidth availability \cite{channel_model}. 
However, high isotropic path loss and sensitivity to blockages
pose challenges in
 supporting high capacity and mobility~\cite{rappaport_mmwave_book}.
To overcome the path loss, mm-wave systems will thus leverage narrow-beams, by using large antenna arrays at both base stations (BSs) and user-ends (UEs).

\par Nonetheless, narrow transmission and reception beams are susceptible to frequent loss
of alignment, due to mobility or blockage, which necessitate the use of beam-alignment protocols. Maintaining beam-alignment between transmitter and receiver can be
challenging, especially in mobile scenarios, and may entail significant overhead, thus potentially offsetting the benefits of mm-wave directionality. Therefore, it is imperative to 
design schemes to mitigate its overhead. 
\par To address this challenge, in our previous work \cite{icc2018,ita2017, ita2018,asilomar2017}, we address the optimal design of beam-alignment protocols.
In~\cite{icc2018}, we optimize the trade-off between data communication and beam-sweeping, under the assumption of an exhaustive search method, in a mobile scenario where the BS widens its beam to mitigate the uncertainty on the UE position. 
 In \cite{ita2017,ita2018}, we design a throughput-optimal beam-alignment scheme for one and two UEs, respectively, and we prove the optimality of a \emph{bisection search}.   However, the model therein does not consider the energy cost of beam-alignment, which may be significant when targeting high detection accuracy. {It is noteworthy that, if the energy consumption of beam-alignment is small, bisection search is the best policy since it is the fastest way to reduce the uncertainty region of the angles of arrival (AoA) and departure (AoD). For this reason, it has been employed in previous works related to multi-resolution codebook design, such as \cite{new_benchmark}}.  In \cite{asilomar2017,asilomar2018}, we incorporate the energy cost of beam-alignment, and prove the optimality of a \emph{fractional search} method.
 Yet, in \cite{ita2017, asilomar2017, ita2018, icc2018}, optimal design is carried out under restrictive assumptions that the UE receives isotropically, and that the duration of beam-alignment is fixed. In practice, the BS may switch to data transmission upon finding a strong beam, as in \cite{interactiveexhaustive}, and \emph{both} BS and UE
 may use narrow beams to fully leverage the beamforming gain.
 
To the best of our knowledge, the optimization of \emph{interactive}  beam-alignment,
\emph{jointly} at both BS and UE, is still an open problem.
Therefore, in this paper, we consider a more flexible model than our previous papers \cite{ita2017, asilomar2017, ita2018, icc2018}, by allowing dynamic switching between beam-alignment and data-communication and joint optimization over BS-UE beams, BS transmission power and rate. 
\emph{Indeed, we prove that a fixed-length beam-alignment scheme followed by data communication is optimal, 
and we prove the optimality of a decoupled fractional search method, which decouples over time the alignment of AoD and AoA, and iteratively scans a fraction of their region of uncertainty.}
Using Monte-Carlo simulation with analog beams, we demonstrate superior performance, with
 {up to $4 \mathrm{dB}$, $7.5 \mathrm{dB}$, and $14 \mathrm{dB}$ power gains over the state-of-the-art bisection method~\cite{new_benchmark}}, conventional exhaustive, and interactive exhaustive search policies, respectively.
Compared to our recent paper \cite{asilomar2018}, the system model adopted in this paper is more realistic since it captures the effects of fading and resulting outages,\
non-uniform priors on AoD/AoA, and detection errors.
Additionally, the model in \cite{asilomar2018} is restricted to a two-phase protocol with deterministic beam-alignment duration. In this paper, we show that this is indeed optimal.
\vspace{-2mm}
 \subsection{Related Work}
  Beam-alignment has been a subject of intense research due to its importance in mm-wave communications. The research in this area can be categorized into beam-sweeping \cite{exhaustive,iterative,ita2017,asilomar2017, ita2018, icc2018,new_benchmark,MDP_ba}, data-assisted schemes \cite{radar,lowfreq,va,inverse_finger},
and AoD/AoA estimation \cite{alkhateeb,marzi}. The simplest and yet most popular beam-sweeping scheme is \emph{exhaustive} search  \cite{exhaustive}, which sequentially scans through all possible BS-UE beam pairs and selects the one with maximum signal power. A version of this scheme has been adopted in existing mm-wave standards including IEEE 802.15.3c  \cite{ieee80215c} and IEEE 802.11ad \cite{ieee80211ad}.
An interactive version of exhaustive search has been proposed in \cite{interactiveexhaustive}, wherein the beam-alignment phase is terminated once the power of  the received beacon is above a certain threshold. 
 The second popular scheme is \emph{iterative search}  \cite{iterative}, where scanning is first performed using wider beams followed by refinement using narrow beams. 
A variant of \emph{iterative search} is studied in \cite{multi_res}, where
the beam sequence is chosen adaptively from a pre-designed multi-resolution codebook. However, this codebook is designed independently of the beam-alignment protocol, thereby potentially resulting in suboptimal design. 
In \cite{MDP_ba}, the authors consider the design of a beamforming vector sequence
based on a partially observable (PO-) Markov decision processes (MDPs).  However, 
POMDPs are generally not amenable to closed-form solutions, and have high complexity.
To reduce the computational overhead, the authors focus on a greedy algorithm, which yields a sub-optimal policy. 

\par Data-aided schemes utilize the information from sensors to aid beam-alignment and reduce the beam-sweeping cost (e.g., from radar \cite{radar},
 lower frequencies \cite{lowfreq}, position information \cite{inverse_finger,va}).
AoD/AoA estimation schemes leverage the sparsity of mm-wave channels, and include compressive sensing schemes \cite{alkhateeb} or approximate maximum likelihood estimators~\cite{marzi}.
\label{page6}
 {In \cite{bf_tradeoff}, the authors compare different schemes and conclude that the performance of beam-sweeping is comparable with the best performing estimation schemes based on compressed sensing. Yet, beam-sweeping has the added advantage of low complexity over compressed sensing schemes, which often involve solving complex optimization problems, and is more amenable to analytical insights on the beam-alignment process. For these reasons, in this paper we focus on beam-sweeping, and derive insights on its optimal design.}

 All of the aforementioned schemes choose the beam-alignment beams from pre-designed codebooks,
 use heuristic protocols, {or are not amenable to analytical insights.} By choosing the beams from  a restricted beam-space or a predetermined protocol, 
 optimality may not be achieved. Moreover, all of these papers do not consider the energy and/or time overhead of beam-alignment as part of their design.
 In this paper, we address these open challenges by optimizing the
  beam-alignment protocol to maximize the communication performance.
  \vspace{-2mm}
 \subsection{Our Contributions}
 \begin{enumerate}[leftmargin=.60cm]
 \item Based on a MDP formulation, under the \emph{sectored} antenna model \cite{sectored_model}, uniform AoD/AoA prior, and small detection error assumptions, we prove the optimality of a \emph{fixed-length} two-phase protocol, with a beam-alignment phase  of fixed duration followed by a data communication phase. We provide an algorithm to compute the optimal duration. 
 \item  We prove the optimality of  a \emph{decoupled fractional search} method, 
 which scans a fixed fraction of the region of uncertainty of the AoD/AoA in each beam-alignment slot. Moreover, the beam refinements over the AoD and AoA dimensions are decoupled over time, thus proving the sub-optimality of \emph{exhaustive search} methods.
 \item Inspired by the decoupled fractional search method, we propose a heuristic scheme for the case of non-uniform prior on AoD/AoA with provable performance,
 and prove that the  uniform prior is indeed the worst-case scenario. 
 \item We analyze the effect of detection errors on the performance of the proposed protocol.
 \item We evaluate its performance via simulation using analog beams, and demonstrate up to $4 \mathrm{dB}$, $7.5 \mathrm{dB}$, and $14 \mathrm{dB}$ power gains compared to the state-of-the-art bisection scheme \cite{new_benchmark}, conventional and interactive exhaustive search policies, respectively. 
 {Remarkably, the sectored model provides valuable insights for beam-alignment design.}
 \end{enumerate}
 
\par The rest of this paper is organized as follows. In Secs.~\ref{sec:sysmodel}, we describe the system model. In Sec. \ref{sec:probform}, we formulate the optimization problem. In Secs. \ref{sec:unifprior}-\ref{sec:nonunifprior}, we provide the analysis for the case of uniform and non-uniform priors on AoD/AoA.
    In Sec. \ref{sec:fa_md_impact}, we analyze the effects of false-alarm and misdetection errors.
     The numerical results are provided in Sec. \ref{numres}, followed by concluding remarks in Sec. \ref{sec:concl}.
     The main analytical proofs are provided in the Appendix.

\begin{figure}[!t]
\centering
  \includegraphics[width=.6\linewidth,trim={25 15 35 20},clip]{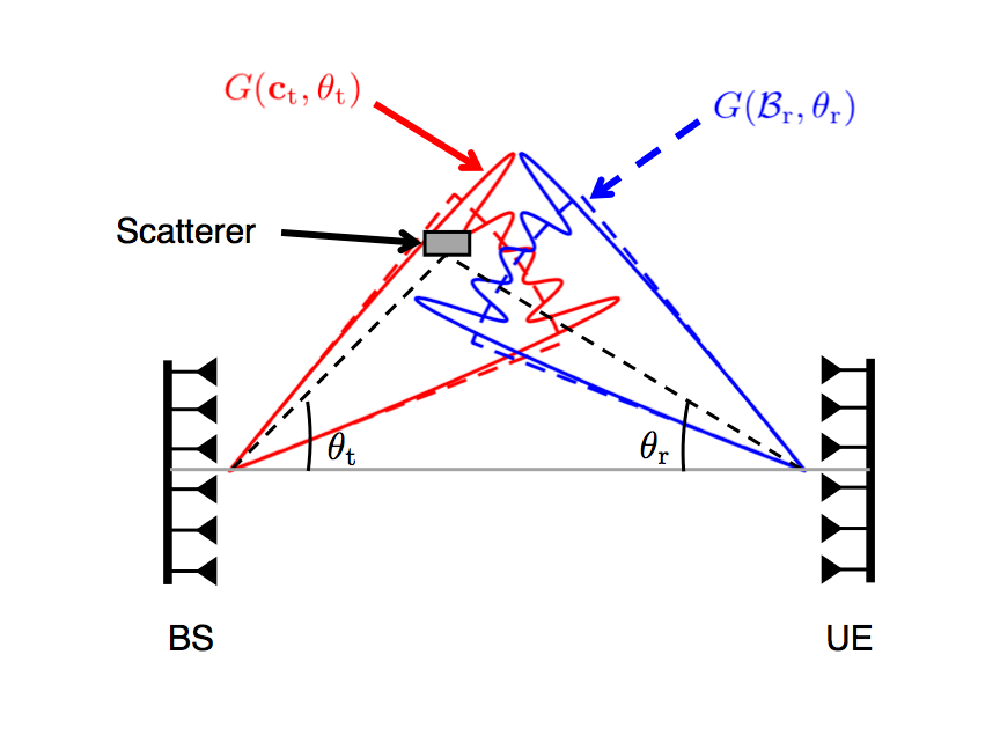}
\caption{{Actual beam pattern $G(\mathbf c_{\mathrm x},\theta_{\mathrm x})$ generated using the algorithm in~\cite{rate_maximization2} with $M_{\mathrm t}{=}M_{\mathrm r}{=}128$  antennas. (solid lines)
versus sectored model $G(\mathcal B_{\mathrm x},\theta_{\mathrm x})$ (dashed lines) \cite{sectored_model}, on a linear scale.
Sidelobes are not visible due to their small magnitude.}}
\label{fig:beam_pattern}
\vspace{-7mm}
\end{figure}
\vspace{-7mm}
\section{System Model }
\label{sec:sysmodel}
We consider a downlink scenario in a mm-wave cellular system with one base-station (BS) and one mobile user (UE) at distance $d$ from the BS, both equipped with uniform linear arrays (ULAs) with $M_{\mathrm t}$ and $M_{\mathrm r}$  antennas, respectively, depicted in Fig.~\ref{fig:beam_pattern}.
Communication occurs over frames of fixed duration $T_{\mathrm{fr}}$, each composed of $N$ slots indexed by $\mathcal I{\equiv}\{0,1, \ldots, N{-}1\}$ of duration $T{=}T_{\mathrm{fr}}/N$, each carrying $S$ symbols of duration $T_{\mathrm{sy}}{=}T/S$. 
Let $s$ be the transmitted symbol, with $\mathbb E[|s|^2]{= }1$. Then, the signal received at the UE is
\begin{align}
\label{eq:receivedsignal}
 y = \sqrt{P}\mathbf c_{\mathrm r}^H \mathbf H \mathbf c_{\mathrm t}  s + \mathbf c_{\mathrm r}^H {\mathbf w},
\end{align}
where $P$ is the average transmit power of the BS; $\mathbf H{\in}\mathbb C^{M_{\mathrm r} \times M_{\mathrm t}}$ is the channel matrix; $\mathbf c_{\mathrm t}{\in}\mathbb C^{M_{\mathrm t}}$ is the BS beam-forming vector; $\mathbf c_{\mathrm r}{\in}\mathbb C^{M_{\mathrm r}}$ is the UE combining vector; ${\mathbf w}\sim \mathcal{CN}(\boldsymbol 0,N_0W_{\mathrm{tot}} \mathbf I)$ is additive white Gaussian noise (AWGN). The symbols $N_0$ and $W_{\mathrm{tot}}$ denote the one-sided power spectral density of AWGN and the system bandwidth, respectively. By assuming analog beam-forming at both BS and UE, $\mathbf c_{\mathrm t}$ and $\mathbf c_{\mathrm r}$ satisfy the unit norm constraints $\| \mathbf c_{\mathrm t}\|_2^2 =\|\mathbf c_{\mathrm r}\|_2^2=1$.  
{
\label{p1}
The channel matrix $\mathbf H$ follows the
 extended Saleh-Valenzuela geometric model \cite{extended},
\begin{align}
\label{eq:channel_matrixx}
\mathbf H = \sqrt{\frac{M_{\mathrm t} M_{\mathrm r}}{K} }\sum_{\ell=1}^{K} h_{\ell}\;  \mathbf d_{\mathrm r}(\theta_{\mathrm r,\ell})\mathbf d_{\mathrm t}^H(\theta_{\mathrm t,\ell}),
\end{align}
where $h_{\ell} \in \mathbb C$, $\theta_{\mathrm t,\ell}$ and $\theta_{\mathrm r,\ell}$ denote the small scale fading coefficient, AoD and AoA of the $\ell^{th}$ cluster, respectively.}
The terms $\mathbf d_{\mathrm r}(\theta_{\mathrm r,\ell}){\in}\mathbb C^{M_{\mathrm r}}$
{and} $\mathbf d_{\mathrm t}(\theta_{\mathrm t,\ell}){\in}\mathbb C^{M_{\mathrm t}}$
 are the UE and BS array response vectors, respectively. For ULAs,
 {$\theta_{\mathrm t,\ell}$ (respectively, $\theta_{\mathrm r,\ell}$) is the angle formed between the outgoing (incoming) rays of the $l$th channel cluster and the perpendicular to the BS (UE) antenna array,  as represented in Fig.~\ref{fig:beam_pattern}, so that
\begin{align}
\nonumber
\mathbf  d_{\mathrm x }(\theta_{\mathrm x }) = &\frac{1}{\sqrt{M_{{\mathrm x}}}}\left[1, e^{j\frac{2\pi d_{\mathrm x}}{\lambda} \sin \theta_{\mathrm x }} ,\ldots,  e^{j(M_{{\mathrm x}}-1)\frac{2\pi d_{\mathrm x}}{\lambda} \sin \theta_{\mathrm x }} \right]^{\top},
\end{align}
 where $\mathrm x\in \{{\mathrm t },{\mathrm r}\}$, $d_{\mathrm t}$ and $d_{\mathrm r}$ are the antenna spacing of the BS and UE arrays, respectively, $\lambda$ is {the} wavelength of {the} carrier signal. 
{In \eqref{eq:channel_matrixx}, $K\geq\mathrm{rank}(\mathbf H)$ is the total number of clusters.
Note that $\mathbf H$ has low-rank if $K\ll \min\{M_{{\mathrm t}},M_{{\mathrm r}}\}$.
 In this paper, we assume that there is a single dominant cluster ($K{=}1$).  This assumption has been adopted in several previous works (e.g., see \cite{zorzi,jeffery_andrew_ba}), and is motivated by channel measurements and modeling works such as \cite{rappaport_channel_model}, where it is shown that, in \emph{dense urban environments}, 
with high probability the mm-wave channel exhibits only one or two clusters, with the dominant one containing most of the signal energy. While our analysis is based on a single cluster model, in Sec. \ref{numres} we demonstrate by simulation that the proposed scheme is robust also against multiple clusters.}
 For the single cluster model, we obtain
\begin{align}
\label{eq:channel_matrix}
\mathbf H = \sqrt{M_{\mathrm t} M_{\mathrm r} }h\;  \mathbf d_{\mathrm r}(\theta_{\mathrm r})\mathbf d_{\mathrm t}^H(\theta_{\mathrm t}),
\end{align}
where  $\mathbb E[|h|^2]=1/\ell(d)$, $\ell(d)$ denotes the path loss between BS and UE as a function of distance $d$, and $\bm \theta  = (\theta_{\mathrm t} , \theta_{\mathrm r} )$ is the single-cluster AoD/AoA pair.
{We assume that  $\bm \theta$ has prior joint distribution $f_0(\bm\theta)$ with support
$\supp(f_0){=}{\mathcal U}_{\mathrm t,0}\times  {\mathcal U}_{\mathrm r,0}$, which reflects the availability of prior AoD/AoA information acquired from previous beam-alignment phases, or based on geometric constraints (\emph{e.g.}, presence of buildings blocking the signal in certain directions). We assume that $h$ and $\bm \theta$ do not change over a frame, whose duration $T_{\mathrm{fr}}$ is chosen based upon the channel and beam coherence times $T_c$ and $T_b$ (time duration over which the AoD/AoA do not change appreciably)  to satisfy this property. In \cite{beam_coherence_time}, it has been reported that $T_c{\ll}T_b$. In the numerical values given below, $T_b{\sim}100T_c$.  
 Therefore, by choosing $T_{\mathrm{fr}}{\leq}T_c$, we ensure that the variations in $h$ and $\bm \theta$ over the frame duration $T_{\mathrm{fr}}$ are small and can be ignored. 
 For example, using the relationships of $T_c$ and $T_b$ in~\cite{beam_coherence_time}, we obtain $T_c{\simeq}10$[ms] and $T_b {\simeq}1$[s]
 for a UE velocity of 100[km/h]. In our numerical evaluations, we will therefore use $T_{\mathrm{fr}}{=}10$[ms]. It is noteworthy that this assumption has also been used  extensively in previous beam-alignment works, such as \cite{zorzi,alkhateeb,marzi}.} 
 
We assume that blockage occurs at {longer} time-scales than the frame duration,
determined by the geometry of the environment and mobility of users, {hence we neglect blockage dynamics within a frame duration} \cite{blockage}. 
 By replacing (\ref{eq:channel_matrix}) into (\ref{eq:receivedsignal}),
and defining the BS  and UE beam-forming  gains
$G_{\mathrm x}(\mathbf c_{\mathrm x},\theta_{\mathrm x})  =  M_{\mathrm x} |\mathbf d_{\mathrm x}^H(\theta_{\mathrm x}) \mathbf c_{\mathrm x}|^2,
\mathrm x\in\{\mathrm t,\mathrm r\}$,
 we get
\begin{align}
\label{eq:receivedsignal1}
 y =  h \sqrt{ P G_{\mathrm t}(\mathbf c_{\mathrm t},\theta_{\mathrm t})\cdot G_{\mathrm r}(\mathbf c_{\mathrm r},\theta_{\mathrm r})}
e^{j\Psi(\bm\theta)}   s + \hat{ w},
\end{align}
where $\hat{w}{\triangleq}\mathbf c_{\mathrm r}^H {\mathbf w} \sim \mathcal{CN}(0,N_0 W_{\mathrm{tot}})$ is the noise component and
 $\Psi(\bm\theta)=\angle{\mathbf d_{\mathrm t}^H(\theta_{\mathrm t}) \mathbf c_{\mathrm t}}
-\angle\mathbf d_{\mathrm r}^H(\theta_{\mathrm r})\mathbf c_{\mathrm r}
$ is the phase.

\label{page2}
\par In this paper, we use the \emph{sectored antenna} model \cite{sectored_model} to approximate the BS and UE beam-forming gains, represented in Fig.~\ref{fig:beam_pattern}. Under this model,
\begin{align}
\label{eq:gains}
G_{\mathrm x}(\mathbf c_{\mathrm x},\theta_{\mathrm x})\approx
G_{\mathrm x}(\mathcal B_{\mathrm x},\theta_{\mathrm x}) = \frac{2\pi}{\size{\mathcal B_{\mathrm x}}} \;\chi_{ \mathcal B_{\mathrm x}}\!\left(\theta_{\mathrm x}\right),\quad
\mathrm x\in\{\mathrm t,\mathrm r\},
\end{align}
where $\mathcal B_{\mathrm t}\subseteq (-\pi,\pi]$ is the range of AoD covered by $\mathbf c_{\mathrm t}$,  $\mathcal B_{\mathrm r}\subseteq (-\pi,\pi]$ is the range of AoA covered by $\mathbf c_{\mathrm r}$, $\chi_{\mathcal A}(\theta)$ is the indicator function of the event $\theta\in \mathcal A$, and $|\mathcal A|= \int_{\mathcal A} \mathrm d \theta$ is the measure of the set $\mathcal A$.
 Hereafter, the two sets $\mathcal B_{\mathrm t}$ and $\mathcal B_{\mathrm r }$ will be referred to as BS and UE beams, respectively.
 Additionally, we define $\mathcal B_k = \mathcal B_{\mathrm t,k} \times \mathcal B_{\mathrm r,k}$ as the 2-dimensional (2D) AoD/AoA support defined by the BS-UE beams.
 \label{page3}
{Note that the sectored model is used as an abstraction of the real model, 
which applies a precoding vector $\mathbf c_{\mathrm t}$ at the transmitter and 
a beamforming vector $\mathbf c_{\mathrm r}$ at the receiver. This abstraction, shown in Fig. \ref{fig:beam_pattern}, is adopted 
since direct optimization of $\mathbf c_{\mathrm t}$ and $\mathbf c_{\mathrm r}$
 is not analytically tractable, due to the high dimensionality of the problem. In Sec. \ref{numres} we show via Monte-Carlo simulation that,  by appropriate design of $\mathbf c_{\mathrm t}$ and $\mathbf c_{\mathrm r}$ to approximate the sectored model, 
our scheme attains near-optimal performance, and outperforms a state-of-the-art  bisection search scheme \cite{new_benchmark}; thus, the sectored antenna model provides  a valuable abstraction for practical design.
 Following the sectored antenna model, we obtain the received signal by replacing $G_{\mathrm x}(\mathbf c_{\mathrm x},\theta_{\mathrm x}) $ with $G_{\mathrm x}(\mathcal B_{\mathrm x},\theta_{\mathrm x})$ in \eqref{eq:receivedsignal1}, yielding 
\begin{align}
\label{eq:receivedsignal2}
 y =  h \sqrt{ P G_{\mathrm t}(\mathcal B_{\mathrm t},\theta_{\mathrm t})\cdot G_{\mathrm r}(\mathcal B_{\mathrm r},\theta_{\mathrm r})}
e^{j\Psi(\bm\theta)}   s + \hat{ w}.
\end{align}
Although the analysis in this paper is presented for ULAs (2D beamforming), the proposed scheme can be extended to the case of uniform planar arrays with 3D beamforming, by interpreting $\theta_{\mathrm{x}}, \mathrm{x} \in\{\mathrm{t},\mathrm{r}\}$ as a vector denoting the azimuth and elevation pair in $(-\pi,\pi]^2$ and the beam $\mathcal B_{\mathrm{x}} \subseteq (\pi,\pi]^2$. For notational convenience and ease of exposition, in this paper we focus on the 2D beamforming case (also adopted in, e.g., \cite{multi_res,new_benchmark,alkhateeb,bf_tradeoff}).}

\par The entire frame duration is split into two, possibly interleaved phases: a beam-alignment phase, whose goal is to detect the best beam to be used in the data communication phase.
To this end, we partition the slots $\mathcal I$ in each frame into the indices in the set $\mathcal {I}_{s}$, reserved for beam-alignment, and those  in the set $\mathcal {I}_{d}$, reserved for data communication, where $\mathcal I_s{\cap}\mathcal I_d{=}\emptyset$ and $\mathcal I_s{\cup}\mathcal I_d{=}\mathcal I$.
The optimal frame partition and duration of beam-alignment
are part of our design. 
{In the sequel, we describe the operations performed in the beam-alignment and data communication slots, and characterize their energy consumption.}

{\bf\underline{Beam-Alignment}:}
At the beginning of each slot $k{\in}\mathcal I_s$, the BS sends a beacon signal $\mathbf s$ of duration $T_B{<}T$
using the transmit beam $\mathcal B_{\mathrm{t},k}$ with power $P_k$,\footnote{{In practice, there are limits on how small the beacon duration can be made, due to peak power constraints \cite{shanon}, beacon synchronization errors \cite{rappaport_mmwave_book}, and auto-correlation properties of the beacon sequence \cite{rappaport_mmwave_book}.}}
 and the UE receives the signal using the receive beam $\mathcal B_{\mathrm r,k}$. Note that
$\mathcal B_{k}{=}\mathcal B_{\mathrm t,k}{\times}\mathcal B_{\mathrm r,k}$ and $P_k$ are design parameters.
If the UE detects the beacon (\emph{i.e.}, the
 AoD/AoA $\bm\theta$ is in $\mathcal B_{k}$, or a false-alarm occurs, see~\cite{mmnets17}),
 then, in the remaining portion of the slot  of duration $T{-}T_B$, it
 transmits an acknowledgment ($\mathrm{ACK}$) packet to the BS, denoted as $C_k{=}\mathrm{ACK}$. Otherwise (the UE does not detect the beacon due to either mis-alignment or misdetection error), it transmits $C_k{=}\mathrm{NACK}$.
We assume that the $\mathrm{ACK}/\mathrm{NACK}$ signal $C_k$ is received perfectly
and within the end of the slot
by the BS (for instance, by using a conventional microwave technology as a control channel \cite{Rangan}).

As a result of \eqref{eq:receivedsignal2},
the UE attempts to detect the beam, and generates the ACK/NACK signal based on
the following hypothesis testing problem,
\begin{align}
\label{eq:H0}
\hspace{-2.5mm}\mathcal{H}_1:\; &\mathbf y_k {=} \sqrt{N_0 W_{\mathrm{tot}}\nu_k} h  e^{j\Psi_k(\bm\theta)}\mathbf s{+} \hat{\mathbf w}_k,\hspace{-9mm}&\text{ (alignment, $\bm\theta{\in}\mathcal B_k$)}\\
\label{eq:H1}
\hspace{-2.5mm}\mathcal{H}_0:\; &\mathbf y_k {=}\hat{ \mathbf w}_k,\hspace{-10mm}&\text{ (misalignment, $\bm\theta{\notin}\mathcal B_k$)}
\end{align}
where $\mathbf y_k$ is the received signal vector,
{$\mathbf s$ is the transmitted symbol sequence,
  $\hat{\mathbf w}_k{\sim}\mathcal{CN}(\boldsymbol 0, N_0 W_{\mathrm{tot}} \mathbf I)$ is the AWGN vector, and  $\nu_k$
is related to the beam-forming gain in slot $k$,
\begin{align}
\label{nuk}
\nu_k = \frac{(2\pi)^2 P_k}{N_0 W_{\mathrm{tot}}\size{\mathcal B_{k}}}.
\end{align}
 {The optimal detector depends on the availability of prior information on $h$. We assume that 
  an estimate of the channel gain $\gamma{=}|h|^2$ is available at {the} BS and UE  at the beginning of each frame, denoted as
     $\hat\gamma{=}|\hat h|^2$, where $\hat h{=}h+e$ and $e{\sim}\mathcal{CN}(0,\sigma_{\mathrm{e}}^2)$ denotes {the} estimation noise.} A Neyman-Pearson threshold detector is optimal in this case,
\begin{align}
\frac{|\mathbf s^H \mathbf y_k|^2}{N_0 W_{\mathrm{tot}}\| \mathbf s\|_2^2} \mathop{\lessgtr}_{\mathcal H_1}^{\mathcal H_0} \tau_{\mathrm{th}}.
\end{align}
The detector's threshold  $\tau_{\mathrm{th}}$ and the transmission power $P_k$ are designed based on
the channel gain estimate $\hat \gamma$, so as to satisfy
 constraints on the false-alarm and misdetection probabilities, $p_{\mathrm{fa}},p_{\mathrm{md}}\leq p_e$. 
 We now compute these probabilities under the simplifying assumption that
 $\hat h$ and $e$ are independent, so that $h|\hat h\sim\mathcal{CN}(\hat h,\sigma_{\mathrm{e}}^2)$. Let
  $z_k \triangleq \frac{\mathbf s^H \mathbf y_k}{\sqrt{N_0 W_{\mathrm{tot}}}\| \mathbf s\|_2 }$, so that $|z_k|^2$ is the decision variable.  We observe that 
\begin{align*}
z_k = \begin{cases}
 \sqrt{\nu_k} h e^{j\Psi_k(\bm\theta)} \| \mathbf s \|_2  +\frac{\mathbf s^H\hat{\mathbf w}_k }{ \sqrt{N_0 W_{\mathrm{tot}}}\| \mathbf s\|_2}, &\text{ if $\mathcal H_1$ is true;} \\
  \frac{\mathbf s^H\hat{\mathbf w}_k }{ \sqrt{N_0 W_{\mathrm{tot}}}\| \mathbf s\|_2}, &\text{ if $\mathcal H_0$ is true.}
 \end{cases}
\end{align*}
Since
 $\hat h$ and $\hat{\mathbf w}_k$ are independent and $h=\hat h-e$, 
 we obtain
\begin{align}
\label{eq:zk_H1}
&\!f(z_k|\hat h,\mathcal H_1,\bm\theta)\!=\mathcal{CN}\!\left(\!\sqrt{\nu_k} \hat h  e^{j\Psi_k(\bm\theta)} \| \mathbf s \|_2, 1{+}{\nu_k}{\| \mathbf s \|_2^2 }\sigma_{\mathrm e}^2\right)\!,\!\!\\
\label{eq:zk_H0}
&f(z_k|\hat h,\mathcal H_0) = \mathcal{CN}\left(0,1\right),
\end{align}
so that $[|z_k|^2|\hat h,\mathcal H_0]\sim \mathrm{Exponential}(1)$, and the false-alarm probability can be expressed as
\begin{align*}
\label{eq:pfa}
p_{\mathrm{fa}}(\tau_{\mathrm{th}})\triangleq\mathbb P\left(|z_k|^2>\tau_{\mathrm{th}}\big|\hat h, \mathcal H_0\right)
= \exp\left(-\tau_{\mathrm{th}}\right). \numberthis
\end{align*}
Similarly, the misdetection probability is found to be
\begin{align*}
\label{eq:pmd}
&p_{\mathrm{md}}(\nu_k,\tau_{\mathrm{th}},\hat\gamma)\triangleq \mathbb P\left(|z_k|^2<\tau_{\mathrm{th}}\big|\hat h, \mathcal H_1\right)&
= 1-Q_1\left(\sqrt{\frac{2\hat\gamma\nu_k \| \mathbf s\|_2^2}{1+\nu_k\|\mathbf s\|_2^2\sigma_{\mathrm e}^2}},
\sqrt{\frac{2\tau_{\mathrm{th}}}{1+\nu_k\|\mathbf s\|_2^2\sigma_{\mathrm e}^2}}\right), \numberthis
\end{align*}
where  $Q_1(\cdot)$ is the first-order Marcum's Q function \cite{simon}. 
In fact, $z_k|(\hat h,\mathcal H_1)$ is complex Gaussian as in (\ref{eq:zk_H1}), so that, given $(\hat \gamma,\mathcal H_1)$,  
$\frac{2|z_k|^2}{{1+\nu_k\|\mathbf s\|_2^2\sigma_{\mathrm e}^2}}$  follows non-central chi-square distribution 
with 2 degrees of freedom and non-centrality parameter
$\frac{2\nu_k\hat\gamma\| \mathbf s \|_2^2}
{1+ {\nu_k}{\| \mathbf s \|_2^2 }\sigma_{\mathrm e}^2}$.

Herein, we design $\tau_{\mathrm{th}}$ and $P_k$ to achieve
$p_{\mathrm{fa}},p_{\mathrm{md}}
\leq p_{\mathrm e}$.
 To satisfy $p_{\mathrm{fa}}(\tau_{\mathrm{th}})\leq p_{\mathrm e}$ we need
\begin{align}
\label{eq:detect_th}
\tau_{\mathrm{th}}\geq -\ln\left(p_{\mathrm e}\right).
\end{align}
Since $Q_1(a,b)$ is an increasing function of $a{\geq}0$ and a decreasing function of $b{\geq}0$,
it follows that $p_{\mathrm{md}}(\nu_k,\tau_{\mathrm{th}})$ is a decreasing function of $\nu_k{\geq}0$ and an increasing function of $\tau_{\mathrm{th}}{\geq}0$. Then,
to guarantee $p_{\mathrm{md}}(\nu_k,\tau_{\mathrm{th}},\hat{\gamma}){\leq}p_{\mathrm e}$,
 \eqref{eq:detect_th} should be satisfied with equality to
attain the smallest $p_{\mathrm{md}}$; additionally, there exists $\nu^*{>}0$,
determined as the unique solution of $p_{\mathrm{md}}(\nu^*,\tau_{\mathrm{th}},\hat{\gamma}){=}p_{\mathrm e}$ and
 independent of the beam shape
$\mathcal B_{k}$, such that 
$p_{\mathrm{md}}(\nu_k,\tau_{\mathrm{th}},\hat{\gamma}){\leq}p_{\mathrm e}$ iff (if and only if) $\nu_k{\geq}\nu^*$.
Then, using \eqref{nuk} and letting
 $E_k{\triangleq}P_k T_{\mathrm{sy}} \|\mathbf s\|_2^2$ be the energy incurred for the transmission of the beacon $\mathbf s$ in slot $k$,
$E_k$ should satisfy
\begin{align}
\label{eq:ba_energy}
&E_{k}\geq \phi_s(p_{\mathrm e})\size{\mathcal B_{k}},
\\
&\text{where }
\label{phis}
\phi_s(p_{\mathrm{e}}){\triangleq}N_0 W_{\mathrm{tot}}\nu^*T_{\mathrm{sy}}
\|\mathbf s\|_2^2/(2\pi)^2
\end{align}
 is the $\text{energy}/\text{rad}^2$ required 
to achieve
 false-alarm and misdetection probabilities equal to $p_{\mathrm{e}}$.


\par 
{Note that false-alarm and misdetection errors are deleterious to performance, since they result in mis-alignment and outages during data transmission. Therefore, they should be minimized.
For this reason, in the first part of this paper we assume 
 that $p_{\mathrm{e}}{\ll}1$, and neglect the impact of these errors on beam-alignment.}
 Thus, we let $E_{k}{\geq}\phi_s\size{\mathcal B_{k}}$
be the energy required in each beam-alignment slot to guarantee detection with high probability,  where $\phi_s$ is computed under some small $p_{\mathrm{e}}\ll 1$.
We will consider the impact of these errors in Sec.~\ref{sec:fa_md_impact}.\footnote{The design of beam-alignment schemes robust to errors when $p_{\mathrm{e}}\not\ll 1$ has been considered in \cite{allerton2018}. Its analysis is outside the scope of this paper.}

{\bf\underline{Data Communication}:}
In the communication slots indexed by ${k\in}\mathcal I_d$, the BS uses $\mathcal B_{\mathrm t, k}$, rate $R_k$, and transmit power $P_k$, while the UE processes the received signal using the beam $\mathcal B_{\mathrm r, k}$. Therefore, 
letting $\gamma=|h|^2$ and 
$\nu_k$ as in \eqref{nuk}.
the instantaneous SNR can be expressed as
\begin{align}
 \nonumber
{\mathsf{SNR}}_k&=\frac{ \gamma P_k G_{\mathrm t}(\mathcal B_{\mathrm t,k},\theta_{\mathrm t}) G_{\mathrm r}(\mathcal B_{\mathrm r,k},\theta_{\mathrm r}) }{ N_0 W_{\mathrm{tot}}}\\
\label{eq:snr2}
&=  \nu_k\gamma \chi_{\mathcal B_{\mathrm t,k}}(\theta_{\mathrm t})\chi_{\mathcal B_{\mathrm r,k}}(\theta_{\mathrm r}).
\end{align}
\par Outage occurs if $W_{\mathrm{tot}}\log_2(1{+}\mathsf{SNR}_k){<}R_k$ due to either mis-alignment between transmitter and receiver, or low channel gain $\gamma$.
The probability of this event, $p_{out}$, can be inferred from the posterior probability distribution of the
AoD/AoA pair $\bm \theta $ and the channel gain $\gamma$, given its estimate $\hat\gamma$, and
the history of BS-UE beams and feedback until slot $k$, denoted as $\mathcal H^k\triangleq\{(\mathcal B_{0},C_0),\ldots,(\mathcal B_{k-1},C_{k-1})\}$.
Thus, $p_{out}\triangleq \mathbb{P}(W_{\mathrm{tot}}\log_2(1+{\mathsf{SNR}}_k) < R_k | \hat\gamma, \mathcal H^k)$, yielding
\begin{align*}
p_{out}
&\stackrel{(a)}{=} \mathbb{P}\left({\mathsf{SNR}}_k < 2^{\frac{R_k}{W_{\mathrm{tot}}}}-1 | \hat\gamma, \bm \theta \in \mathcal B_k\right) \mathbb{P}(\bm \theta \in \mathcal B_k | \mathcal H^k) \\&\qquad + \mathbb{P}(\bm \theta \not \in \mathcal B_k | \mathcal H^k) \\
&\stackrel{(b)}{=} 1 -\bar{F}_{\gamma}\left(\frac{2^{\frac{R_k}{W_{\mathrm{tot}}}}-1}{\nu_k}\bigg|\hat\gamma\right)\mathbb{P}(\bm\theta \in \mathcal B_{k}|  \mathcal H^k), \numberthis
\end{align*}
where  (a) follows from {the law of} total probability and $\mathbb{P}(\bm \theta{\in}\mathcal B_k | \mathcal H^k) $ denotes the probability of correct beam-alignment; (b) follows  by substituting $\bar{F}_{\gamma}(x|\hat\gamma){\triangleq}\mathbb{P}(\gamma{\geq}x | \hat\gamma)$ into (a),
given as
\begin{align}
\bar{F}_{\gamma}(x|\hat\gamma) = Q_1\left(\sqrt{{2\hat\gamma}/{\sigma_e^2}},\sqrt{{2x}/{\sigma_e^2}} \right) .
\end{align}
Herein, we use the notion of $\epsilon$-outage capacity to design 
 $R_k$, defined as the largest transmission rate such that $p_{out}{\leq}\epsilon$, 
 for a target outage probability $\epsilon{<}1$. This can  be expressed as
\begin{align*} 
\label{eq:outage_capacity}
&{\mathcal C_{\epsilon}(P_k,\mathcal B_k|\mathcal H^k,\hat\gamma)}
\triangleq  W_{\mathrm{tot}}\log_2 \left( 1+\nu_k \bar{F}_{\gamma}^{-1}\left(\left.\frac{1-\epsilon}{\mathbb{P}(\bm\theta{\in}\mathcal B_k |  \mathcal H^k)} \right|\hat\gamma\right)\right), \numberthis
 \end{align*}
where $\bar{F}_{\gamma}^{-1}(\cdot|\hat\gamma)$ denotes the inverse posterior CCDF of $\gamma$, conditional on $\hat\gamma$. In other words, if 
 $R_k{\leq}{\mathcal C_{\epsilon}(P_k,\mathcal B_k|\mathcal H^k,\hat\gamma)} $, then
{the transmission is successful with probability at least $1{-}\epsilon$, and}
the average rate is at least $(1{-}\epsilon)R_k$. 
{Note that, in order to achieve the target $p_{out}{\leq}\epsilon$, the probability of correct beam-alignment must satisfy 
$\mathbb{P}(\bm \theta{\in}\mathcal B_k | \mathcal H^k){\geq}1{-}\epsilon$.} This
 can be achieved with a proper choice of $\mathcal B_k$, as discussed next.

Since the ACK/NACK feedback after data communication is generated by higher layers (\emph{e.g.,} network or transport layer), we do not use it to improve beam-alignment. We define $C_k{=}\mathrm{NULL}$, $\forall k{\in}\mathcal I_d$, to distinguish it from the ACK/NACK feedback signal in the beam-alignment slots.


\section{Problem Formulation}
\label{sec:probform}
In this section, we formulate the {optimization} problem, {and characterize it as a} Markov decision process (MDP). 
\label{page1xx} {The goal is to minimize the power consumption at the BS over a frame duration, while achieving the quality of service (QoS) requirements of the UE (rate and delay).  Therefore, the objective function of the following optimization problem captures the beam-alignment and data communication energy costs; the QoS requirements are specified in the constraints through a rate requirement $R_{\min}$ of the UE along with an outage probability of $\epsilon$; additionally, the frame duration $T_{\mathrm{fr}}$ represents a delay guarantee on data transmission.}
The design variables in slot $k$ are denoted by the  4-tuple
${\mathbf a}_k= (\xi_k, P_k, \mathcal B_{k}, R_k)$, where $\xi_k$ corresponds to the decision  of
whether to perform beam-alignment ($\xi_k{=}1$) or data communication ($\xi_k{=}0$);
 we let $R_k{=}0$ for beam-alignment slots ($\xi_k=1$). 
  With this choice of $\mathbf a_k$, we aim to optimally select the  beam-alignment slots $\mathcal I_s$ and data communication slots $\mathcal I_d$. If  a slots is selected for beam-alignment ($\xi_k=1$), we aim to optimize the associated power $P_k$ and 2D beam $\mathcal B_k$. Likewise, if a slot is selected for data communication ($\xi_k=0$), we aim to optimize the associated power $P_k$, data rate $R_k$, and 2D beam $\mathcal B_k$.
 Mathematically, the optimization problem is stated as
\begin{align}
\label{eq:op_P0} 
\mathrm{P}_1:&\quad
\bar P\triangleq
\min_{\mathbf a_0,\ldots,\mathbf a_{N-1}}\quad \frac{1}{T_{\mathrm{fr}}}\mathbb{E}\left[\sum_{k=0}^{N-1} E_{k}\bigg|f_0 \right]\\ \nonumber
\textnormal{s.t.}\quad &{\mathbf a}_k {=} (\xi_k, P_k, \mathcal B_{k}, R_k), \forall k,
\\
& 
\label{eq:opt_P0_c4}
\mathcal B_k{=} \mathcal B_{\mathrm t,k} {\times} \mathcal B_{\mathrm r,k}\subseteq[-\pi,\pi]^2, \forall k, \\
\label{eq:opt_P0_c1}
& E_{k} \geq \phi_s|\mathcal B_k |,\ \forall k\in\mathcal I_s,
\\
\label{eq:opt_P0_c2}
&\frac{1}{N}\sum_{k} R_k{\geq}R_{\min},\ R_k{\leq}\mathcal C_\epsilon(P_k,\mathcal B_k|\mathcal H^k,\hat\gamma),\forall k{\in}\mathcal I_d,
\\
\label{eq:opt_P0_c5}
&P_k=E_k/[\xi_kT_B+(1-\xi_k)T],\ \forall k,
\end{align}
where $f_0$ in (\ref{eq:op_P0}) denotes the prior belief over $\bm\theta$; 
 \eqref{eq:opt_P0_c4} defines the 2D beam $\mathcal B_k$;
(\ref{eq:opt_P0_c1}) gives the energy consumption in the beam-alignment slots;
 (\ref{eq:opt_P0_c2}) ensures the rate requirement $R_{\min}$ over the frame, and that $ R_k$ is within the $\epsilon$-outage capacity, see (\ref{eq:outage_capacity}); 
 (\ref{eq:opt_P0_c5}) gives the relation between energy and power.\footnote{Data communication takes the entire slot, whereas beam-alignment occurs over a portion $T_B<T$ of the slot to allow for the time to receive the ACK/NACK feedback from the receiver.}
Since the cost is the average BS power consumption, the inequality constraints (\ref{eq:opt_P0_c1})-(\ref{eq:opt_P0_c2})
 must be tight, i.e.,
we replace them with
\begin{align}
\label{eq:opt_P1_c1}
&E_{k} = \xi_k\phi_s \size{\mathcal B_k}
+(1-\xi_k)
\frac{{\psi}_d(R_k)\size{\mathcal B_k}}{\bar{F}_{\gamma}^{-1}\left(\frac{1-\epsilon}{\mathbb{P}(\bm{\theta} \in \mathcal B_k | \mathcal H^k)} \big|\hat\gamma \right)},\\
\label{eq:opt_P1_c2}
&\frac{1}{N}\sum_{k}R_k = R_{\min},
\end{align}
where \eqref{eq:opt_P1_c1} when $\xi_k{=}0$ is obtained by inverting \eqref{eq:opt_P0_c2} via \eqref{eq:outage_capacity} and \eqref{nuk} (with equality)
to find $P_k$ and $E_k=P_kT$,
and we have defined {the $\text{energy}/\text{rad}^2$ required to achieve the rate $R$}
$${\psi}_d(R) \triangleq (2\pi)^{-2}N_0 W_{\mathrm{tot}} T  (2^{\frac{R}{W_{\mathrm{tot}}}}-1).$$
Hereafter, we exclude $P_k$ from the design space, since it is uniquely defined by the set of equality constraints (\ref{eq:opt_P0_c5})-(\ref{eq:opt_P1_c1}). Thus, we simplify the design variable to ${\mathbf a}_k=  (\xi_k,\mathcal B_{k}, R_k)$.

We pose $\mathrm P_1$ as an MDP \cite{bertsekas} over the time horizon $\mathcal I$. 
The state at the start of slot $k$ is $(f_k,D_k)$, where $f_k$ is the probability distribution over the AoD/AoA pair $\bm{\theta}$, given the history $\mathcal H^k$ up to slot $k$,  denoted as \emph{belief}; $D_k$ is
 the backlog (untransmitted data bits).
 Initially, $f_0$ is the prior belief and 
$D_0{\triangleq}R_{\min}T_{\mathrm{fr}}$.
Given $(f_{k},D_{k})$, the BS and UE select $\mathbf a_{k}{=}\left(\xi_{k},\mathcal B_{k},R_{k}\right)$.\footnote{{Since feedback is error-free, both BS and UE have the same information to generate the action $\mathbf a_{k}$ and their beams.}}
Then, the UE generates the feedback signal: if $\xi_{k}{=}0$ (data communication), then $C_{k}{=}\mathrm{NULL}$;
if $\xi_{k}{=}1$ (beam-alignment), then 
$C_{k}{=}\mathrm{ACK}$ if $\bm{\theta}{\in}\mathcal B_k$, with probability
\begin{align}
\label{pack}
\mathbb P(C_{k}=\mathrm{ACK}|f_{k},\mathbf a_{k})=
\int_{\mathcal{B}_{k}} f_{k}(\bm \theta )d \bm \theta,
\end{align}
and $C_{k}{=}\mathrm{NACK}$ otherwise.
Upon receiving $C_{k}$, the new
backlog in slot $k+1$ becomes\footnote{If $D_{k+1}\leq 0$, all bits have been transmitted.}
\begin{align}
D_{k+1}=\max\bigr\{D_{k}-R_{k}T,0\bigr\},
\end{align}
and the new belief $f_{k+1}$ is computed via Bayes' rule, as given in the following lemma.
\begin{lemma}
\label{lemma:sufficient_statistcs}
Let $f_0$ be the prior belief on $\bm \theta$ with support $\supp(f_0) = {\mathcal U}_0$.
Then, 
\begin{align}
\label{eq:belief_construct1}
f_{k}(\bm \theta) = \frac{f_0(\bm \theta)}{\int_{{\mathcal U}_{k} } f_0(\tilde{\bm \theta}) \mathrm  d \tilde{\bm \theta}}  \chi_{{\mathcal U}_{k}}(\bm \theta),
\end{align}
where ${\mathcal U}_{k}\triangleq\supp(f_{k})$ is updated recursively as
\begin{align}
\label{eq:supp_update}
{\mathcal U}_{k+1} =\left\{\begin{array}{ll}
{\mathcal U}_k \cap \mathcal B_k, &k \in \mathcal I_{\mathrm s}, C_k=\mathrm{ACK}\\
{\mathcal U}_k \setminus{\mathcal B_k }, &k \in \mathcal I_{\mathrm s}, C_k=\mathrm{NACK}\\
{\mathcal U}_k, &k \in \mathcal I_{\mathrm d}.
\end{array}\right.
\end{align}
\end{lemma}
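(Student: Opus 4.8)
The plan is to prove the claim by induction on the slot index $k$, using Bayes' rule for the belief update and exploiting that, under the standing assumption $p_{\mathrm e}\ll 1$, the beam-alignment feedback is a \emph{deterministic} function of $\bm\theta$ and the beam $\mathcal B_k$: by \eqref{pack}, $C_k=\mathrm{ACK}$ iff $\bm\theta\in\mathcal B_k$ and $C_k=\mathrm{NACK}$ iff $\bm\theta\notin\mathcal B_k$, while in a data slot $C_k=\mathrm{NULL}$ regardless of $\bm\theta$. Hence the likelihood of the observed pair $(\mathbf a_k,C_k)$, viewed as a function of $\bm\theta$, equals $\chi_{\mathcal B_k}(\bm\theta)$ when $C_k=\mathrm{ACK}$, $1-\chi_{\mathcal B_k}(\bm\theta)$ when $C_k=\mathrm{NACK}$, and $1$ when $k\in\mathcal I_{\mathrm d}$. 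I would also note at the outset that the backlog $D_k$ is a deterministic function of the past actions and feedback, hence carries no extra information on $\bm\theta$, so it is enough to track the posterior of $\bm\theta$ given the history $\mathcal H^k$.

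For the base case $k=0$, the belief is the prior $f_0$ with support $\mathcal U_0$, and \eqref{eq:belief_construct1} holds trivially since $\chi_{\mathcal U_0}\equiv 1$ on $\mathcal U_0$ and $\int_{\mathcal U_0}f_0=1$. For the inductive step, I assume \eqref{eq:belief_construct1} at slot $k$, so $f_k(\bm\theta)=f_0(\bm\theta)\chi_{\mathcal U_k}(\bm\theta)/\int_{\mathcal U_k}f_0(\tilde{\bm\theta})\,\mathrm d\tilde{\bm\theta}$. Denoting by $L_k(\bm\theta)$ the likelihood identified above, Bayes' rule gives $f_{k+1}(\bm\theta)=L_k(\bm\theta)f_k(\bm\theta)/\int L_k(\tilde{\bm\theta})f_k(\tilde{\bm\theta})\,\mathrm d\tilde{\bm\theta}$. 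Substituting the inductive hypothesis and using the elementary identities $\chi_{\mathcal B_k}\chi_{\mathcal U_k}=\chi_{\mathcal U_k\cap\mathcal B_k}$, $(1-\chi_{\mathcal B_k})\chi_{\mathcal U_k}=\chi_{\mathcal U_k\setminus\mathcal B_k}$, and $1\cdot\chi_{\mathcal U_k}=\chi_{\mathcal U_k}$ in the three cases, the numerator $L_k f_k$ becomes a constant multiple of $f_0$ restricted to exactly the set $\mathcal U_{k+1}$ prescribed by \eqref{eq:supp_update}; after normalization the denominator equals $\int_{\mathcal U_{k+1}}f_0$, which yields \eqref{eq:belief_construct1} at $k+1$.

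The step I expect to require the most care is the measure-theoretic bookkeeping rather than any nontrivial computation: verifying that conditioning on the full history $\mathcal H^k$ (the whole sequence of beams and feedbacks) is equivalent to conditioning on the single event $\{\bm\theta\in\mathcal U_k\}$ — i.e.\ that $\mathcal U_k$ is a genuine sufficient statistic — and that every feedback value on which we condition has strictly positive probability, so the Bayes update is well defined (this is automatic, since a feedback value of zero probability never occurs). Both facts follow from the deterministic nature of the feedback together with the same induction, so there is no real analytical obstacle; the content of the lemma is essentially the observation that each beam-alignment slot either intersects $\mathcal U_k$ with $\mathcal B_k$ (on ACK) or removes $\mathcal B_k$ from it (on NACK), while data slots leave the uncertainty region unchanged.
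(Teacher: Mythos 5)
Your proposal is correct and follows essentially the same route as the paper: the paper first establishes the one-step Bayes update (with likelihood $\chi_{\mathcal B_k}$, $1-\chi_{\mathcal B_k}$, or $1$, exploiting that $C_k$ is a deterministic function of $(\mathbf a_k,\bm\theta)$) and then runs the same induction on $k$ using the indicator identities to obtain \eqref{eq:belief_construct1} with support updated as in \eqref{eq:supp_update}. Merging the update lemma into the induction, as you do, changes nothing of substance.
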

\begin{proof}
The proof follows by induction using Bayes' rule.
In fact, if $C_{k}{=}\mathrm{ACK}$ in a beam-alignment slot, then it can be inferred that
$\bm \theta{\in}\mathcal U_k{\cap}\mathcal B_k$; otherwise  ($C_{k}{=}\mathrm{NACK}$) the UE lies outside 
$\mathcal B_k$, but within the support of $f_k$, i.e., $\bm\theta{\in}\mathcal U_k{\setminus}\mathcal B_k$.
In the data communication slots, no feedback is generated, hence
$f_{k+1}{=}f_k$ and ${\mathcal U}_{k+1}{=}{\mathcal U}_{k}$.
 A detailed proof is given in Appendix \ref{proofoflemlemma:sufficient_statistcs}.
\end{proof}
\vspace{-2mm}
Lemma~\ref{lemma:sufficient_statistcs} implies that
$\mathcal U_k$ is a sufficient statistic for decision making in slot $k$, and is
 updated recursively via (\ref{eq:supp_update}).
 Accordingly, the state space is defined as
\begin{align}
\mathcal S\equiv\{(\mathcal U,D):\mathcal U\subseteq\mathcal U_0,0\leq D\leq D_0\}.
\end{align}

\par Given the data backlog $D_k{=}D$, the action space is expressed as\footnote{Note that, for a data communication action
$(0,\mathcal B, R)$, we assume that $R>0$; in fact, data communication with zero rate is equivalent to a
beam-alignment action $(1,\emptyset,0)$ with empty beam.}
\begin{align}\nonumber
\label{eq:action_space}
\mathcal A(D) \equiv &\left\{(0,\mathcal B, R): \mathcal B\equiv\mathcal B_{\mathrm t}\times\mathcal B_{\mathrm r}\subseteq [-\pi,\pi]^2, 0{<}R{\leq}D/T \right\} 
\\&\cup \left\{(1,\mathcal B, 0): \mathcal B\equiv\mathcal B_{\mathrm t}\times\mathcal B_{\mathrm r} \subseteq [-\pi,\pi]^2 \right\}.
\end{align}
Given $({\mathcal U}_{k},D_k)\in\mathcal S$,
the action $\mathbf a_k{\in}\mathcal A(D_k)$ is chosen based on policy $\mu_k$,
which determines the BS-UE beam  $\mathcal B_k$ and
whether to perform beam-alignment ($\xi_k{=}1$, $R_k{=}0$) or  data communication ($\xi_k{=}0$, $R_k{>}0$),
with energy cost $E_k$ given by \eqref{eq:opt_P1_c1}. With this notation,
we can express the problem $\mathrm P_1$ as that of finding the policy $\mu^*$ which minimizes the power consumption under rate requirement and outage probability constraints,
 \begin{align*}
 \label{P2}
& \mathrm{P}_2: \quad
\bar P\triangleq
\min_{\mu} \; \frac{1}{T_{\mathrm{fr}}}\mathbb{E}_{\mu} \Biggl[ \sum_{k=0}^{N-1}
c(\mathbf a_k;{\mathcal U}_{k},D_k)
\bigg|{\mathcal U}_{0},D_0,f_0\Biggr],\\
 &\mathrm{s.t.}\qquad
 D_{k+1} = D_k - T R_k, \forall k \in \mathcal I, \; \;  D_{N} = 0, \numberthis
 \end{align*}
 where
 we have defined the cost per stage in state 
$({\mathcal U}_{k},D_k)$ under action $\mathbf a_k$ as
 \begin{align*}
\label{eq:cost1}
 c(\mathbf a_k;{\mathcal U}_{k},D_k){=}
\left[\xi_k\phi_s
{+}\frac{(1{-}\xi_k)\psi_d(R_k)}{\bar{F}_{\gamma}^{-1}\left(\frac{1-\epsilon}{\mathbb{P}(\bm \theta \in \mathcal B_{k} |\mathcal U_{k})}\big|\hat\gamma \right)}\right]
\size{\mathcal B_{k}}
, \numberthis
\end{align*}
and we used the sufficient statistic (Lemma \ref{lemma:sufficient_statistcs})
to express
 $
 \mathbb{P}(\bm \theta{\in}\mathcal B_k |\mathcal H^k)
{=}
\mathbb{P}(\bm \theta{\in}\mathcal B_k |\mathcal U_k)
$ in \eqref{eq:opt_P1_c1}.  
 $\mathrm{P}_2$ can be solved via dynamic programming (DP):
the value function
in state $({\mathcal U}_k , D_k)$ under action $\mathbf a_k{\in}\mathcal A(D_k) $,
$V_k( \mathbf a_k; {\mathcal U}_k,D_k)$,
 and the optimal value function, $V_k^*({\mathcal U}_k,D_k) $, are 
 expressed as
\begin{align}\nonumber
V_k( \mathbf a_k; {\mathcal U}_k,D_k) &=  c( \mathbf a_k; {\mathcal U}_k,D_k)\\ \nonumber
&+\mathbb{E}\left[ V_{k+1}^{*}( {\mathcal U}_{k+1},D_{k+1})\bigg|{\mathcal U}_k,D_k; {\mathbf a}_k\right],\\
\label{eq:cost_to_go_iter}
V_k^*({\mathcal U}_k,D_k) 
&= \min_{\mathbf a_k\in\mathcal A(D_k) } 
V_k( \mathbf a_k; {\mathcal U}_k,D_k),
\end{align}
where 
the minimum is attained by the optimal policy.
To enforce $D_N{=}0$, we initialize it as
\begin{align}
\label{ctgo_opt_N}
V_N^{*}(  {\mathcal U}_N,D_N)=
\begin{cases}
0, &D_N = 0\\
\infty, & D_N>0.
\end{cases}
\end{align}

Further analysis  is not doable for a generic prior $f_0$. 
To unveil structural properties, we proceed as follows:
\begin{enumerate}[leftmargin=.60cm]
\item We optimize over the extended action space
\begin{align}
\label{eq:extended_action_space}
 {\mathcal A}_{\mathrm{ext}}(D) \equiv &\left\{(0,\mathcal B, R): \mathcal B \subseteq [-\pi,\pi]^2, 0 < R \leq D/T \right\} \cup \left\{(1,\mathcal B, 0): \mathcal B \subseteq[-\pi,\pi]^2\right\},
\end{align}
obtained by removing the "rectangular beam" constraint
$\mathcal B{\equiv}\mathcal B_{\mathrm t}{\times}\mathcal B_{\mathrm r}$ 
in \eqref{eq:action_space}.
Thus, $\mathcal B\in{\mathcal A}_{\mathrm{ext}}(D)$ can be any subset of $[-\pi,\pi]^2$, not restricted to a ``rectangular'' shape $\mathcal B\equiv\mathcal B_{\mathrm t,k}\times\mathcal B_{\mathrm r,k}$.
By optimizing over an extended action space, a lower bound to the value function is obtained, denoted as $\hat V_k^*({\mathcal U}_k,D_k)\leq V_k^*({\mathcal U}_k,D_k)$,
possibly not achievable by a "rectangular" beam.
\item In Sec. \ref{sec:unifprior}, we find structural properties under such extended action space, for the case of a uniform belief $f_0$. In this setting, we prove the optimality of a \emph{fractional search} method, which selects
$\mathcal B_k $ as $\mathcal B_k \subseteq\mathcal U$ with $|\mathcal B_k |=\rho_k|\mathcal U_k|$ (beam-alignment)
or $|\mathcal B_k |=\vartheta|\mathcal U_k|$ (data communication), for appropriate \emph{fractional parameters} $\rho_k$ and $\vartheta$; additionally, we prove the optimality of a \emph{deterministic} duration of the beam-alignment phase (Theorems   \ref{th:opt_comm_beams} and \ref{optimalpoliuniform}).
\item 
 In Sec. \ref{sec:rec_beams_optimalality}, we prove that
 such lower bound is indeed achievable by
 a \emph{decoupled fractional search} method, which decouples
 the BS and UE beam-alignment over time using rectangular beams, hence it is optimal.
\item In Sec. \ref{sec:nonunifprior}, we use these results to design a heuristic policy with performance guarantees for the case of non-uniform prior $f_0$, and show that the uniform prior is the worst case.
\end{enumerate}

\vspace{-5mm}
\section{Uniform Prior}
\label{sec:unifprior}
We denote the beam $\mathcal B$ taking value from the extended action space
${\mathcal A}_{\mathrm{ext}}(D)$ as "2D beam", to distinguish it from 
$\mathcal B{\in}{\mathcal A}(D)$, that obeys a "rectangular" constraint. 
Additionally, since the goal is to minimize the energy consumption, we restrict $\mathcal B{\subseteq}\mathcal U$ during data communication and
$\mathcal B{\subset}\mathcal U$ during beam-alignment,
yielding the following extended action space in state $(\mathcal U,D)$:\footnote{In fact, the AoD/AoA lie within the belief support
$\mathcal U_k$; projecting a "2D beam" outside of $\mathcal U_k$ is suboptimal, since it yields an unnecessary energy cost.
Additionally, choosing $\mathcal B_k{=}\mathcal U_k$ during beam-alignment is suboptimal, since it triggers an ACK with probability one,
which is uninformative; we thus restrict $\mathcal B_k\subset\mathcal U_k$.
A formal proof is provided in Appendix  \ref{proofofoptbeam}.}
\begin{align}
\nonumber
{\mathcal A}_{\mathrm{ext}}(\mathcal U,D) \equiv &\left\{(0,\mathcal B, R): \mathcal B \subseteq\mathcal U, 0 < R \leq D/T \right\}\\
\label{eq:extended_action_space2}
&\cup \left\{(1,\mathcal B, 0): \mathcal B \subset\mathcal U\right\}.
\end{align}

In this section, we consider the independent uniform prior on $\bm \theta = (\theta_{\mathrm t},\theta_{\mathrm r}) $, i.e.,
\begin{align}
f_0(\bm \theta) = 
f_{\mathrm r,0}(\theta_{\mathrm r}) \cdot f_{\mathrm t,0}(\theta_{\mathrm t}), \ 
f_{\mathrm x,0}(\theta_{\mathrm x}) =
\frac{\chi_{{\mathcal U}_{\mathrm x,0}}(\theta_{\mathrm x})}{\size{{\mathcal U}_{\mathrm x,0}}}.
\end{align}
From Lemma \ref{lemma:sufficient_statistcs}, it directly follows that $f_k$ is \emph{uniform} in its support $\mathcal U_k$,
and the state transition probabilities from state $({\mathcal U}_k,D_k)$ under the beam-alignment action $(1,\mathcal B_k,0)\in{\mathcal A}_{\mathrm{ext}}(\mathcal U,D)$, given in \eqref{pack} for the general case,
 can be specialized as $D_{k+1}=D_k$ and
\begin{align*}
\label{eq:BS_trans_prob}
{\mathcal U}_{k+1}=
\begin{cases}
\mathcal B_{k},
&\text{w.p. }
\frac{|\mathcal B_{k}|}{|{\mathcal U}_{k}|},\\
{\mathcal U}_{k}\setminus \mathcal B_{k},
&\text{w.p. }
1-\frac{|\mathcal B_{k}|}{|{\mathcal U}_{k}|},
\end{cases}\numberthis
\end{align*}
 where ``w.p.'' abbreviates ``with probability''. 
On the other hand, under the data communication action 
$(0,\mathcal B_k,R_k)$, the new state  becomes
$\mathcal U_{k+1} =\mathcal U_{k} $, and $D_{k+1} = D_k - R_k T$.  

In order to determine the optimal policy with extended action set, we proceed as follows:
\begin{enumerate}[leftmargin=.60cm]
\item In Sec. \ref{s1}, we find the structure of the optimal data communication beam, as a function of the transmit rate 
$R_k$ and support $\mathcal U_k$, and investigate its energy cost;
\item Next, in Sec. \ref{s2}, we prove that it is suboptimal to perform beam-alignment \emph{after} data communication within the frame. Instead, it is convenient to narrow down the beam as much as possible via beam-alignment, to achieve the most energy-efficient data communication;
\item Finally, in Sec. \ref{s3}, we investigate the structure of the value function, to prove
the optimality of a \emph{fixed-length} beam-alignment and of a \emph{fractional-search method}.
\end{enumerate}
\vspace{-2mm}
\subsection{Optimal data communication beam}
\label{s1}
In the following theorem, we find the optimal 2D beam for data communication.
\begin{theorem}
\label{th:opt_comm_beams}
 In any communication slot $k\in \mathcal I_d$, the 2D beam $\mathcal B_k$ is optimal iff
\begin{align}
\label{eq:optimal_comm_beams}
\mathcal B_k \subseteq \mathcal U_k \; \; 
&\size{\mathcal B_k} = \vartheta \size{\mathcal U_k},
\end{align}
where $\vartheta = (1-\epsilon)/q^*$, with $q ^*= \arg\max_{q\in [1-\epsilon,1]} q \bar F_{\gamma}^{-1}(q|\hat\gamma)$.
\end{theorem}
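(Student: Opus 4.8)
The plan is to reduce the statement to a one‑dimensional optimization by first decoupling the choice of the data‑communication beam $\mathcal B_k$ from the rest of the dynamic program. Recall that under a data‑communication action $(0,\mathcal B_k,R_k)$ the state updates as $\mathcal U_{k+1}=\mathcal U_k$ and $D_{k+1}=D_k-R_kT$, and neither update involves $\mathcal B_k$. Hence, in the Bellman recursion \eqref{eq:cost_to_go_iter}, the continuation term $\mathbb E[V_{k+1}^*(\mathcal U_{k+1},D_{k+1})\mid\mathcal U_k,D_k;\mathbf a_k]$ does not depend on $\mathcal B_k$, so for a fixed rate $R_k$ minimizing $V_k$ over $\mathcal B_k$ coincides with minimizing the per‑stage cost $c(\mathbf a_k;\mathcal U_k,D_k)$ of \eqref{eq:cost1} over $\mathcal B_k$; it therefore suffices to solve this pointwise problem.

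Next I would argue that it is without loss of optimality to take $\mathcal B_k\subseteq\mathcal U_k$: since $\bm\theta\in\mathcal U_k$ with probability one under $f_k$, any portion of $\mathcal B_k$ outside $\mathcal U_k$ enlarges $\size{\mathcal B_k}$ — hence increases $c$, which is monotone in $\size{\mathcal B_k}$ — without increasing the alignment probability $\mathbb P(\bm\theta\in\mathcal B_k\mid\mathcal U_k)$; this is precisely the content of Appendix~\ref{proofofoptbeam}. With $\mathcal B_k\subseteq\mathcal U_k$ and $f_k$ uniform on $\mathcal U_k$ (Lemma~\ref{lemma:sufficient_statistcs}), one has $\mathbb P(\bm\theta\in\mathcal B_k\mid\mathcal U_k)=\size{\mathcal B_k}/\size{\mathcal U_k}$, so $c$ depends on $\mathcal B_k$ only through the fraction $f\triangleq\size{\mathcal B_k}/\size{\mathcal U_k}$, namely $c=\psi_d(R_k)\,\size{\mathcal U_k}\,f\big/\bar F_\gamma^{-1}\!\left(\tfrac{1-\epsilon}{f}\big|\hat\gamma\right)$. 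Moreover, feasibility of a transmission with outage at most $\epsilon$ requires $\mathbb P(\bm\theta\in\mathcal B_k\mid\mathcal U_k)\ge 1-\epsilon$ — equivalently, the argument $\tfrac{1-\epsilon}{f}$ of $\bar F_\gamma^{-1}$ must lie in $(0,1]$ — while $\mathcal B_k\subseteq\mathcal U_k$ forces $f\le 1$, so $f\in[1-\epsilon,1]$.

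The remaining step is the scalar minimization over $f\in[1-\epsilon,1]$. Substituting $q=(1-\epsilon)/f$, a decreasing bijection of $[1-\epsilon,1]$ onto itself, the cost becomes $\psi_d(R_k)\,\size{\mathcal U_k}\,(1-\epsilon)\big/\bigl(q\,\bar F_\gamma^{-1}(q\mid\hat\gamma)\bigr)$; the prefactor being constant in $q$, this is minimized exactly when $q\,\bar F_\gamma^{-1}(q\mid\hat\gamma)$ is maximized over $[1-\epsilon,1]$, i.e.\ at $q=q^*$, whence the optimal fraction is $f^*=(1-\epsilon)/q^*=\vartheta$. Combined with $\mathcal B_k\subseteq\mathcal U_k$ this yields \eqref{eq:optimal_comm_beams}; conversely, any $\mathcal B_k$ violating either condition incurs strictly larger cost (when $q^*$ is the unique maximizer), which gives the ``iff''. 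I expect the only genuinely non‑routine point to be the decoupling in the first paragraph — verifying that $\mathcal B_k$ in a data slot influences neither the belief‑support transition nor the backlog transition, so that the $\mathcal B_k$‑optimization can be peeled off the DP; the remainder is the elementary one‑variable calculus above, up to the minor uniqueness remark on $q^*$, which can be dispatched using the strict monotonicity of $\bar F_\gamma^{-1}(\cdot\mid\hat\gamma)$.
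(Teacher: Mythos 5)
Your proposal is correct and follows essentially the same route as the paper's proof: the state transition in a data slot is independent of $\mathcal B_k$, so the beam optimization reduces to minimizing the per-stage cost \eqref{eq:cost1}, and after restricting to $\mathcal B_k\subseteq\mathcal U_k$ (the paper's Appendix~\ref{proofofoptbeam}) the substitution $q=(1-\epsilon)\size{\mathcal U_k}/\size{\mathcal B_k}$ turns this into maximizing $q\,\bar F_{\gamma}^{-1}(q|\hat\gamma)$ over $q\in[1-\epsilon,1]$, exactly as in Appendix~\ref{app:proof_of_opt_comm}. Your added remarks on the feasibility range of the fraction and on uniqueness of $q^*$ for the ``iff'' direction are consistent with, and slightly more explicit than, the paper's argument.
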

\begin{proof}
The proof is provided in Appendix \ref{app:proof_of_opt_comm}.
\end{proof}
{The significance of this result is that the optimal beam in the data communication phase is a \emph{fraction} $\vartheta$ of the region of uncertainty $\mathcal U_k$, with $\vartheta$ reflecting the desired outage constraint.} By substituting (\ref{eq:optimal_comm_beams})  into \eqref{eq:cost1},
and  letting
\begin{align}
\label{phid}
\phi_d(R,\epsilon)\triangleq\frac{{\psi_d(R)}{(1-\epsilon)}}{q^*{\bar F}_{\gamma}^{-1}(q^*|\hat\gamma)}
\end{align}
 be the energy/$\mathrm{rad}^2$ to achieve transmission rate $R$ with outage probability $\epsilon$,
the cost per stage of a data communication action with beam given by Theorem \ref{th:opt_comm_beams} can be expressed as
 \begin{align*}
\label{eq:comm_val_ftn_final}
 c(\mathbf a_k;{\mathcal U}_k , D_k)&=
\phi_d(R_k,\epsilon) \size{{\mathcal U}_k}. \numberthis
\end{align*}
\vspace{-10mm}
\subsection{Beam-alignment before data communication is optimal}
\label{s2}
{
In Theorem \ref{th:sens_comm},
 we prove that it is suboptimal to precede data communication to beam-alignment. Instead, it is more energy efficient to narrow down the beam as much as possible via beam-alignment, before switching to data communication.}
\vspace{-2mm}
\begin{theorem}
\label{th:sens_comm}
Let $\mu$ be a policy and $\{({\mathcal U}_{k},D_k),k\in \mathcal I \}$ be a realization of the state process under $\mu$ such that $\exists j:\xi_j({\mathcal U}_j,D_j) = 0$ and $\xi_{j+1}({\mathcal U}_{j+1},D_{j+1})= 1$ (beam-alignment is followed by data communication, for some slot $j$). Then, $\mu$ is suboptimal.
\end{theorem}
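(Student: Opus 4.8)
The plan is an \emph{interchange argument} acting on a single adjacent pair of slots. Suppose that along the realization under $\mu$ one has $\xi_j=0$, i.e.\ slot $j$ is a data-communication slot with action $(0,\mathcal B_j,R_j)$ and $R_j\in(0,D_j/T]$, and $\xi_{j+1}=1$, i.e.\ slot $j+1$ is a beam-alignment slot with action $(1,\mathcal B_{j+1},0)$. Since a data-communication slot does not change the support, the beam-alignment of slot $j+1$ is performed in state $(\mathcal U_{j+1},D_{j+1})=(\mathcal U_j,D_j-R_jT)$ with a beam $\mathcal B_{j+1}\subset\mathcal U_j$; set $p\triangleq|\mathcal B_{j+1}|/|\mathcal U_j|$ and note we may assume $p\in(0,1)$ (an empty beam makes slot $j+1$ a costless no-op that can be deleted, while a beam of full measure triggers an $\mathrm{ACK}$ with probability one and is uninformative — both are already excluded from the extended action space). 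By Theorem~\ref{th:opt_comm_beams} and \eqref{eq:comm_val_ftn_final}, the combined cost of slots $j$ and $j+1$ under $\mu$ is at least $\phi_d(R_j,\epsilon)|\mathcal U_j|+\phi_s|\mathcal B_{j+1}|$.

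Now define $\tilde\mu$ to agree with $\mu$ on slots $0,\dots,j-1$ — so the state entering slot $j$ is still $(\mathcal U_j,D_j)$ — to perform the beam-alignment $(1,\mathcal B_{j+1},0)$ in slot $j$, and then to perform data communication in slot $j+1$ with rate $R_j$ and the \emph{optimal} data beam of Theorem~\ref{th:opt_comm_beams} for the resulting support (this is feasible, since the backlog entering slot $j+1$ under $\tilde\mu$ is still $D_j$ and $R_j\le D_j/T$). After slot $j$ under $\tilde\mu$ the support is $\mathcal B_{j+1}$ with probability $p$ and $\mathcal U_j\setminus\mathcal B_{j+1}$ with probability $1-p$, and after slot $j+1$ the backlog is $D_j-R_jT$; hence the state entering slot $j+2$ under $\tilde\mu$ has exactly the same law as $(\mathcal U_{j+2},D_{j+2})$ under $\mu$. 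Letting $\tilde\mu$ reproduce $\mu$ from slot $j+2$ onward under a coupling that identifies the two state processes at slot $j+2$ (and couples the $\mathrm{ACK}/\mathrm{NACK}$ of the interchanged beam-alignment slot across the two runs), the costs of all slots other than $j,j+1$ agree in distribution. The combined cost of slots $j,j+1$ under $\tilde\mu$ equals $\phi_s|\mathcal B_{j+1}|+\phi_d(R_j,\epsilon)\,\mathbb E[|\mathcal U'_{j+1}|]$, where $\mathbb E[|\mathcal U'_{j+1}|]=p|\mathcal B_{j+1}|+(1-p)(|\mathcal U_j|-|\mathcal B_{j+1}|)=|\mathcal U_j|(p^2+(1-p)^2)<|\mathcal U_j|$ since $p\in(0,1)$. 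Because $R_j>0$ forces $\phi_d(R_j,\epsilon)>0$, $\tilde\mu$ attains strictly smaller expected total cost than $\mu$, and therefore $\mu$ is suboptimal.

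The intuition is that beam-alignment shrinks the uncertainty region $\mathcal U$ strictly in expectation before the data-communication energy — which by \eqref{eq:comm_val_ftn_final} scales with $|\mathcal U|$ — is paid, so deferring it is wasteful. The step I expect to be the main obstacle is the claim that ``the rest of the horizon is unchanged'': one must couple the post-interchange state at slot $j+2$ with the original one and argue the MDP evolves identically from there, which is cleanest once one restricts, without loss, to Markov policies; one must also carefully dispatch the degenerate choices $|\mathcal B_{j+1}|\in\{0,|\mathcal U_j|\}$ and the boundary case where $j+1$ is the terminal slot (there the swap still respects $D_N=0$, since beam-alignment leaves the backlog untouched). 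Everything else reduces to the one-line expected-measure computation above.
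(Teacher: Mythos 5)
Your main case ($0<|\mathcal B_{j+1}|<|\mathcal U_j|$) is essentially the paper's own argument: swap the two adjacent actions, note the continuation from slot $j+2$ has the same law, and observe that the expected data-communication cost drops from $\phi_d(R_j,\epsilon)\size{\mathcal U_j}$ to $\phi_d(R_j,\epsilon)\size{\mathcal U_j}\left[p^2+(1-p)^2\right]$, which is exactly the strict improvement in \eqref{eq:comp1}. Up to your coupling phrasing versus the paper's value-function bookkeeping, that part is fine.

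The gap is your dismissal of the case $\size{\mathcal B_{j+1}}=0$. You claim an empty beam-alignment beam ``is already excluded from the extended action space,'' but it is not: \eqref{eq:extended_action_space2} only requires $\mathcal B\subset\mathcal U$ for a beam-alignment action, so $\mathcal B=\emptyset$ (or any zero-measure beam) is admissible --- the paper even says explicitly, in the footnote to \eqref{eq:action_space}, that the action $(1,\emptyset,0)$ is the stand-in for zero-rate communication, and the optimal fractional parameter $\rho_k=\frac{1}{2}\bigl(1-\frac{\phi_s}{2v_{k+1}^{(L)}}\bigr)^+$ in \eqref{vkupdate} can equal zero, so such actions genuinely occur in the policy class. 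For this case your interchange yields $p=0$, hence $p^2+(1-p)^2=1$ and \emph{zero} strict improvement, so the argument does not establish suboptimality; and ``deleting'' the slot is not an available operation in a fixed-horizon MDP, nor would a costless no-op by itself make $\mu$ strictly worse than some competitor. The paper closes this case by a different construction: replace the pair (data communication at rate $R_j$, vacuous beam-alignment) with data communication at rate $R_j/2$ in both slots; strict convexity of $\phi_d(R,\epsilon)$ in $R$ gives $2\phi_d\left(\tfrac{R_j}{2},\epsilon\right)<\phi_d(R_j,\epsilon)$, i.e.\ the strict inequality \eqref{eq:comp2}. Your proposal needs this (or an equivalent) second argument; without it the theorem is proved only for policies whose offending beam-alignment slot uses a beam of positive measure.
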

\begin{proof}
{The theorem is proved in two parts using contradiction. The first part deals with the case when a data communication slot is followed by a beam-alignment slot having non-zero beam-width. The second part deals with the case when a data communication slot is followed by a beam-alignment slot having zero beam-width.}
Let $\mu$ be a policy such that, for some state
 $({\mathcal U}_j, D_j)$ and slot index $j$, $\mu_j
({\mathcal U}_j,D_j) =(0,{\mathcal B}_j,R_j)$,
satisfying the conditions of Theorem \ref{th:opt_comm_beams} (data communication action);
thus, the state at $j+1$ is $({\mathcal U}_{j+1}, D_{j+1})=
({\mathcal U}_j,D_j-TR_j)
$.
Further, assume that, in this state, $\mu_{j+1}({\mathcal U}_j, D_j-TR_j)
=(1,{\mathcal B}_{j+1}, 0)$
(beam-alignment), 
with ${\mathcal B}_{j+1}\subset\mathcal U_j$ (strict subset, see \eqref{eq:extended_action_space2}),
so that the state in slot $j+2$ is either
$({\mathcal B}_{j+1}, D_j-TR_j)$ with probability 
$\size{{\mathcal B}_{j+1}}/\size{{\mathcal U}_j}$ (ACK), or
$({\mathcal U}_j\setminus{\mathcal B}_{j+1}, D_j-TR_j)$ otherwise (NACK).
This policy follows beam-alignment to data communication, and we want to prove that it is suboptimal.
We use
 (\ref{eq:cost_to_go_iter}) to get the cost-to-go function in slot $j$ under policy $\mu$ as
\begin{align*}
\label{eq:ctg_j_mu}
V_j^{\mu}({\mathcal U}_j ,D_j)
&=
 \phi_d(R_j,\epsilon)\size{{\mathcal U}_j }+  V_{j+1}^{\mu}({\mathcal U}_j,D_j-T R_j)
\\&
=\phi_d(R_j,\epsilon)\size{{\mathcal U}_j } + \phi_s \size{\mathcal B_{j+1}} 
+ \frac{\size{\mathcal B_{ j+1}}}{\size{{\mathcal U}_j}}V_{j+2}^{\mu}(\mathcal B_{j+1},D_j-T R_j)
\\&\;\;\;+ 
\frac{\size{
{\mathcal U}_j\setminus
{\mathcal B_{j+1}}}}{\size{{\mathcal U}_j}}V_{j+2}^{\mu}({\mathcal U}_j\setminus{\mathcal B_{ j+1}},D_j-TR_j).
\numberthis
\end{align*}
We consider the two cases $|\mathcal B_{j+1}|{>}0$ and $|\mathcal B_{j+1}|{=}0$ separately.
In both cases, we will construct a new policy $\tilde \mu$ and compare the cost-to-go function at $j$ under the two policies $\mu$ and $\tilde \mu$.

$\underline{|\mathcal B_{j+1}| > 0}$:
We define $\tilde \mu$ 
as being equal to $\mu$ except for the following:
$\tilde\mu_j(\mathcal U_j,D_j)
{=}(1,\mathcal B_{j+1},0)
$, so that 
$\tilde \mu$ executes the beam-alignment action in slot $j$, instead of $j{+}1$.
It follows that 
\begin{align*}
\label{eq:ctg_j_tildemu}
V_j^{\tilde\mu}(\mathcal U_j,D_j)
=&\phi_s \size{\mathcal B_{j+1}} + \frac{\size{\mathcal B_{j+1}}}{\size{{\mathcal U}_j}}V_{j+1}^{\tilde\mu}(\mathcal B_{j+1},D_j)+ 
\frac{\size{{\mathcal U}_j\setminus{\mathcal B_{j+1}}}}{\size{{\mathcal U}_j}}V_{j+1}^{\tilde\mu}({\mathcal U}_j\setminus{\mathcal B_{j+1}},D_j).
\numberthis
\end{align*}
Furthermore, we design $\tilde\mu$ such that
$\tilde\mu_{j+1}({\mathcal B}_{j+1},D_j)
=(0,\tilde{\mathcal B}_{j+1}^\prime,\,R_j)$
and 
$\tilde\mu_{j+1}({\mathcal U}_j\setminus
{\mathcal B}_{j+1},D_j)
{=}(0,\tilde{\mathcal B}_{j+1}^{\prime\prime},R_j)$, so that 
$\tilde \mu$ executes the data communication action in slot $j+1$, instead of $j$,
with beams $\tilde{\mathcal B}_{j+1}^\prime$ and $\tilde{\mathcal B}_{j+1}^{\prime \prime}$ satisfying the conditions of Theorem \ref{th:opt_comm_beams}.
It follows that the system moves from state
$({\mathcal B}_{j+1},D_j)$
to $({\mathcal B}_{j+1},D_j-TR_j)$,
and from $({\mathcal U}_{j}\setminus
{\mathcal B}_{j+1},D_j)$
to $({\mathcal U}_{j}\setminus
{\mathcal B}_{j+1},D_j-TR_j)$
under policy $\tilde\mu$, yielding
\begin{align*}
\label{eq:ctg_j_tildemu2} 
&V_{j+1}^{\tilde\mu}({\mathcal B}_{j+1},D_j)
\stackrel{(a)}{=}
\phi_d(R_j,\epsilon)\size{{\mathcal B}_{j+1}}
+V_{j+2}^{\tilde\mu}({\mathcal B}_{j+1},D_j-T R_j),
\\
&V_{j+1}^{\tilde\mu}(
{\mathcal U}_j\setminus\mathcal B_{j+1}, D_j)
\stackrel{(b)}{=}
\phi_d(R_j,\epsilon)\size{{\mathcal U}_{j}\setminus\mathcal B_{j+1}}
 +V_{j+2}^{\tilde\mu}({\mathcal U}_j\setminus\mathcal B_{j+1},D_j-T R_j). \numberthis
\end{align*}
By substituting (\ref{eq:ctg_j_tildemu2})-(a),(b) into \eqref{eq:ctg_j_tildemu},
and using the fact that $\tilde\mu_k$ and $\mu_k$ are identical for $k\geq j+2$
(hence $V_{j+2}^{\tilde\mu}=V_{j+2}^{\mu}$), it follows that
\begin{align*}
\label{eq:comp1}
V_j^{\tilde\mu}({\mathcal U}_j,D_j)
-V_j^{\mu}({\mathcal U}_j,D_j)
&\stackrel{(a)}{=}
-\phi_d(R_j,\epsilon)2\frac{\size{\mathcal B_{j+1}}\size{\mathcal U_j\setminus\mathcal B_{j+1}}}{\size{\mathcal U_{j}}}
\stackrel{(b)}{<} 0 , \numberthis
\end{align*}
where (a) follows from
$\size{\mathcal U_j\setminus\mathcal B_{j+1}}=
\size{\mathcal U_j}-\size{\mathcal B_{j+1}}$;
 (b) follows 
from $\size{\mathcal B_{j+1}}>0$ and $\mathcal B_{j+1}\subset\mathcal U_j$.  

$\underline{\size{\mathcal B_{j+1}} = 0}:$
In this case, we design $\tilde\mu$ equal to $\mu$ except for the following:
$\tilde\mu_j(\mathcal U_j, D_j) = (0,\tilde{\mathcal B}_j^\prime,R_j/2)$, with $\tilde{\mathcal B}_j^{\prime}$ satisfying the conditions of Theorem \ref{th:opt_comm_beams}, so that state $(\mathcal U_j, D_j)$
transitions 
 to state $(\mathcal U_j, D_j-T R_j/2)$. Moreover $\tilde\mu_{j+1}(\mathcal U_j, D_j-T R_j/2) = (0,\tilde{\mathcal B}_j^{\prime\prime},R_j/2)$, with $\tilde{\mathcal B}_j^{\prime\prime}$ satisfying the conditions of Theorem \ref{th:opt_comm_beams}, so that the system moves to state
 $(\mathcal U_j, D_j-T R_j)$ in slot $j+2$.
Under this new policy, the BS performs data communication in both slots, with rate $R_j/2$.
  Thus, the cost-to-go function under $\tilde \mu$ in slot $j$ is given as
\begin{align*}
\label{eq:ctg_hat_mu}
&V_j^{\tilde\mu}({\mathcal U}_j ,D_j)
=
 \phi_d\left(\frac{R_j}{2},\epsilon\right )\size{{\mathcal U}_j }+  V_{j+1}^{\hat \mu}\left({\mathcal U}_j,D_j-T \frac{R_j}{2}\right)
 \\&
 \qquad\quad= 2 \phi_d\left(\frac{R_j}{2},\epsilon\right )\size{{\mathcal U}_j }+  V_{j+2}^{\hat \mu}\left({\mathcal U}_j,D_j-T R_j\right). \numberthis
\end{align*}
By comparing \eqref{eq:ctg_hat_mu} and \eqref{eq:ctg_j_mu} and using the fact that $\mu$ and $\tilde \mu$ are identical for $k \geq j+2$,  we get 
\begin{align*}
\label{eq:comp2}
V_j^{\tilde\mu}({\mathcal U}_j ,D_j) &- V_j^{\mu}({\mathcal U}_j ,D_j)
\stackrel{(a)}{=} \left[ 
2 \phi_d\left(\frac{R_j}{2},\epsilon\right ) - \phi_d\left({R_j},\epsilon\right )
\right] \size{{\mathcal U}_j }
\stackrel{(b)}{<} 0,\numberthis
\end{align*}
where (a) follows from $\size{\mathcal B_{j+1}}{=}0$; (b) follows 
from the strict convexity of 
 $\phi_d\left({R},\epsilon\right )$ over $R{>}0$, implying that  $2 \phi_d\left(\frac{R_j}{2},\epsilon\right ){<}\phi_d\left(R_j,\epsilon\right )$. 
\eqref{eq:comp1} and \eqref{eq:comp2} imply that $\mu$ does not satisfy Bellman's optimality equation, hence it is suboptimal,
yielding a contradiction.
 The theorem is proved.
\end{proof}

From Theorem \ref{th:sens_comm}, we infer that: 
\begin{corollary}
\label{cl1}
Under an optimal policy $\mu^*$, the frame can be split into 
a beam-alignment phase, followed by a data communication phase until the end of the frame. The duration 
$L^*{\in}\mathcal I$ of beam-alignment is, possibly, a random variable, function of the realization of the beam-alignment process.
\end{corollary}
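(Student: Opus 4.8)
The plan is to obtain Corollary~\ref{cl1} as an immediate pathwise consequence of Theorem~\ref{th:sens_comm}. I would fix an optimal policy $\mu^*$ and an arbitrary realization $\{(\mathcal U_k,D_k)\}_{k\in\mathcal I}$ of the state process it induces, together with the associated ACK/NACK feedback sequence; write $\xi_k\equiv\xi_k(\mathcal U_k,D_k)$ for the phase selected in slot $k$ on this path. Define the first data-communication index
\[
L^* \triangleq \min\{k\in\mathcal I : \xi_k = 0\}.
\]
Since $D_0 = R_{\min}T_{\mathrm{fr}}>0$ must be driven down to $D_N=0$ and only data slots carry $R_k>0$, at least one data slot occurs on every realization, so $L^*$ is well defined and $L^*\le N-1$, i.e.\ $L^*\in\mathcal I$.

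Next I would show that $\xi_k=0$ for all $k\ge L^*$, which gives the two-phase structure: slots $0,\dots,L^*-1$ are beam-alignment and slots $L^*,\dots,N-1$ are data communication. Suppose not, and let $k>L^*$ be the smallest index with $\xi_k=1$. By the definition of $L^*$ and the minimality of $k$, every slot $j$ with $L^*\le j<k$ has $\xi_j=0$; in particular $\xi_{k-1}=0$. Hence on this realization a data-communication slot ($j=k-1$) is immediately followed by a beam-alignment slot ($j+1=k$) --- exactly the hypothesis of Theorem~\ref{th:sens_comm} with $j=k-1$ --- so $\mu^*$ is suboptimal, a contradiction. Therefore no beam-alignment slot can follow $L^*$, proving the claimed structure.

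For the ``possibly random'' part, I would note that the belief support $\mathcal U_k$, and therefore the decision $\xi_k(\mathcal U_k,D_k)$, is driven by the ACK/NACK outcomes through the stochastic recursion \eqref{eq:BS_trans_prob}: distinct feedback realizations lead to distinct supports and hence, in general, to a distinct first data slot. Thus $L^*$ is a stopping time measurable with respect to the feedback history and need not be deterministic --- which is precisely the statement of the corollary. (The stronger fact that, under a uniform prior, $L^*$ is actually deterministic is established later in Sec.~\ref{s3}.)

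I do not expect a genuine obstacle here: all the substance lives in Theorem~\ref{th:sens_comm}, and the corollary is a short bookkeeping deduction along each sample path. The only points needing a little care are that Theorem~\ref{th:sens_comm} is stated per realization and must be invoked on the specific path induced by $\mu^*$, and that the degenerate zero-beamwidth beam-alignment action ($|\mathcal B_k|=0$, effectively an idle slot) is already covered by that theorem, so no separate case is required.
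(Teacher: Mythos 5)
Your proposal is correct and follows essentially the same route as the paper, which presents the corollary as a direct inference from Theorem~\ref{th:sens_comm}; you simply make the bookkeeping explicit (defining $L^*$ as the first data slot, using $D_N=0$ to guarantee its existence, and invoking the theorem pathwise, including the $|\mathcal B_{j+1}|=0$ idle-slot case). No gap.
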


To capture this phase transition, we introduce the state variable 
$\nabla{\in}\{\mathrm{BA},\mathrm{DC}\}$, denoting that the system is operating in the beam-alignment phase
($\nabla{=}\mathrm{BA}$) or switched to data communication ($\nabla{=}\mathrm{DC}$).
The extended state is denoted as
$\left(\mathcal U_k, D_k,\nabla_k\right)$, with the following DP updates.
If $\nabla_k{=}\mathrm{DC}$, then the system remains in the data communication phase until the end of the frame,
and $\nabla_j{=}\mathrm{DC},\forall j\geq k$, yielding
\begin{align*}
\hat V_k^*({\mathcal U}_k,D_k,\mathrm{DC}) =& 
\min_{0<R\leq D_k/T} \Bigl\{
\phi_d(R,\epsilon)\size{{\mathcal U}_k}
+
\hat V_{k+1}^{*}( {\mathcal U}_{k+1},D_{k}-TR,\mathrm{DC})\Bigr\}. \numberthis
\end{align*}
Using the convexity of $\phi_d(R,\epsilon)$ with respect to $R$, 
it is straightforward to prove the following.
\begin{lemma}
$\hat V_k^*({\mathcal U}_k,D_k,\mathrm{DC})= 
(N-k)\phi_d\left(\frac{D_k}{T(N-k)},\epsilon\right)\size{{\mathcal U}_k}.
$
\end{lemma}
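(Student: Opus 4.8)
The plan is to prove this identity by backward induction on $k$, exploiting two facts: in a data-communication slot the belief support is frozen, $\mathcal U_{k+1}=\mathcal U_k$, so the only state component that evolves is the backlog, $D_{k+1}=D_k-TR_k$; and $R\mapsto\phi_d(R,\epsilon)$ is strictly convex, being a positive affine function of $2^{R/W_{\mathrm{tot}}}$ (see \eqref{phid} and the definition of $\psi_d$).

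For the base case $k=N-1$ I would note that the terminal constraint $D_N=0$ enforced by \eqref{ctgo_opt_N} leaves only the feasible rate $R_{N-1}=D_{N-1}/T$ in the last slot, so $\hat V_{N-1}^*(\mathcal U_{N-1},D_{N-1},\mathrm{DC})=\phi_d(D_{N-1}/T,\epsilon)\size{\mathcal U_{N-1}}$, which is the claimed formula with $N-k=1$. For the inductive step, assuming the identity at $k+1$ and using $\mathcal U_{k+1}=\mathcal U_k$, I would factor $\size{\mathcal U_k}$ out of the DC-phase recursion to obtain
\begin{align*}
\hat V_k^*(\mathcal U_k,D_k,\mathrm{DC})
=\size{\mathcal U_k}\min_{0<R\le D_k/T}
\Bigl\{\phi_d(R,\epsilon)+(N-k-1)\,\phi_d\!\Bigl(\tfrac{D_k-TR}{T(N-k-1)},\epsilon\Bigr)\Bigr\}.
\end{align*}
The bracketed function is strictly convex in $R$; its stationarity condition reads $\phi_d'(R,\epsilon)=\phi_d'\!\bigl(\tfrac{D_k-TR}{T(N-k-1)},\epsilon\bigr)$, which by strict monotonicity of $\phi_d'$ forces $R=\tfrac{D_k-TR}{T(N-k-1)}$, i.e.\ $R^*=\tfrac{D_k}{T(N-k)}$. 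I would then check that $R^*\in(0,D_k/T]$ (true since $N-k\ge1$) and that at $R^*$ both arguments of $\phi_d$ collapse to $R^*$, so the minimum equals $(N-k)\,\phi_d\!\bigl(\tfrac{D_k}{T(N-k)},\epsilon\bigr)$; multiplying by $\size{\mathcal U_k}$ closes the induction. Equivalently, one can bypass the induction and apply Jensen's inequality directly to $\sum_{j=k}^{N-1}\phi_d(R_j,\epsilon)$ subject to $\sum_{j=k}^{N-1}R_j=D_k/T$.

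I do not expect a genuine obstacle here: the result is essentially ``split the remaining backlog equally across the remaining slots, since the per-slot cost is convex in the rate.'' The only points needing a little care are verifying feasibility of the interior minimizer (so that the induction hypothesis indeed applies at $D_{k+1}=D_k-TR^*>0$), and the degenerate case $D_k=0$, where the strict constraint $R>0$ in the action space is vacuous but the stated expression still returns $0$ because $\phi_d(0,\epsilon)=0$, so the identity extends by the natural convention.
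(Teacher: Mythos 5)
Your proof is correct and follows exactly the route the paper intends: the paper merely asserts the result as ``straightforward from the convexity of $\phi_d(R,\epsilon)$,'' and your backward induction (or the equivalent Jensen argument) with the equal-split stationary point $R^*=\frac{D_k}{T(N-k)}$ is precisely that argument spelled out. The feasibility check and the $D_k=0$ convention you mention are fine and introduce no gap.
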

That is, it is optimal to transmit with constant rate $\frac{D_k}{T(N-k)}$
in the remaining $(N-k)$ slots until the end of the frame. On the other hand, if $\nabla_k=\mathrm{BA}$,
then $\nabla_j=\mathrm{BA},\forall j\leq k$ and $D_k=D_0$, since no data has been transmitted yet.
 Then,
\begin{align}
\nonumber
\label{bgsdfg}
&\hat V_k^*({\mathcal U}_k, D_0,\mathrm{BA}) = 
\min\Bigr\{
(N-k)\phi_d\left(\frac{NR_{\min}}{N-k},\epsilon\right)\size{{\mathcal U}_k }
,
\\&
\min_{\mathcal B_k\subset\mathcal U_k }
\phi_s \size{\mathcal B_k } 
+\frac{\size{\mathcal B_k }}{\size{\mathcal U_k }}
\hat V_{k+1}^*({\mathcal B}_k,D_0,\mathrm{BA})
\\&
+\left(1-\frac{\size{\mathcal B_k }}{\size{\mathcal U_k }}\right)
\hat V_{k+1}^*({\mathcal U}_k \setminus{\mathcal B}_k, D_0,\mathrm{BA})
\Bigr\},
\end{align}
where 
the outer minimization reflects an optimization over the actions "switch to data communication in slot $k$ with rate $R_k=\frac{NR_{\min}}{N-k}$," or "perform beam-alignment." The inner minimization represents an optimization over the 2D beam 
 $\mathcal B_k $ used for beam-alignment.
\vspace{-2mm}
 \subsection{Optimality of deterministic beam-alignment duration with fractional-search method}
 \label{s3}
It is important to observe that the proposed protocol is \emph{interactive}, so that the duration of the beam-alignment phase, 
$L^*\in\mathcal I$, is possibly a random variable, function of the realization of the beam-alignment process.
For example, if it occurs that the AoD/AoA is identified with high accuracy, the BS may decide to switch to data communication to achieve energy-efficient transmissions until the end of the frame.
Although it may seem intuitive that $L^*$ should indeed be random, in this section
we will show that, instead, $L^*$ is \emph{deterministic}. 
Additionally, we prove the optimality of a \emph{fractional search method}, 
which dictates the optimal beam design.

To unveil these structural properties, we define $v_k^*({\mathcal U}_k) \triangleq \frac{\hat V_k^*({\mathcal U}_k,D_0,\mathrm{BA})}{\size{\mathcal U_k}}$. Then, 
\eqref{bgsdfg} yields
\begin{align*}
\numberthis
&v_k^*({\mathcal U}_k)
= 
\min\Bigr\{
(N-k)\phi_d\left(\frac{NR_{\min}}{N-k},\epsilon\right)
,
\min_{\rho\in[0,1)}
\phi_s\rho
+\rho^2
v_{k+1}^*({\mathcal B}_{\mathrm t,k})
+(1-\rho)^2
v_{k+1}^*({\mathcal U}_k\setminus{\mathcal B}_k)
\Bigr\},
\end{align*}
where $v_N^{*}(  {\mathcal U}_N){=}\infty$
and we used $\rho$ in place of $\frac{\size{\mathcal B_k}}{\size{\mathcal U_k}}$, with $\rho{<}1$ since $\mathcal B_k{\subset}\mathcal U_k$. Using this fact, we find that
$v_{N-1}^*({\mathcal U}_{N-1})
{=}\phi_d\left(NR_{\min},\epsilon\right)$ is \emph{independent} of 
$\mathcal U_{N-1}$.
By induction on $k$, it is then straightforward to see that 
$v_k^*({\mathcal U}_k)$
is \emph{independent} of $\mathcal U_k,\forall k$.
We thus let $v_k^*{\triangleq}v_k^*({\mathcal U}_k),\forall \mathcal U_k $ to capture this independence,
which is then defined recursively as
\begin{align*}
\label{eq:vk_rec}
v_k^*
= 
\min\Bigl\{&
(N-k)\phi_d\left(\frac{NR_{\min}}{N-k},\epsilon\right)
,
\min_{\rho\in[0,1)}
\phi_s\rho
+\left[\rho^2+\left(1-\rho\right)^2\right]
v_{k+1}^*
\Bigr\}.\numberthis
\end{align*}
The value of $\rho$ achieving the minimum in \eqref{eq:vk_rec} is
$\rho_k=\frac{\size{\mathcal B_k}}{\size{\mathcal U_k}}=\frac{1}{2}\left(1-\frac{\phi_s}{2v_{k+1}^*}\right)^+$,
 yielding
\begin{align*}
&v_k^*
{=} 
\min\Biggr\{\!
\underbrace{(N{-}k)\phi_d\left(\frac{NR_{\min}}{N{-}k},\epsilon\right)}_{\Gamma_k\text{ (data communication)}},
\underbrace{v_{k+1}^*{-}\frac{[(2v_{k+1}^*{-}\phi_s)^+]^2}{8v_{k+1}^*}}_{\Lambda_k\text{ (beam-alignment)}}
\Biggr\}.
\end{align*}

From this decomposition, we infer important properties:
\begin{enumerate}[leftmargin=.60cm]
\item Given $v_k^*$, 
the original value function is obtained as $\hat V_k^*({\mathcal U}_k, D_0,\mathrm{BA})
=v_k^* \size{{\mathcal U}_k}$.
If, at time $k$, $\Gamma_k<\Lambda_k$, then it is optimal to switch to data communication in the remaining $N-k$ slots, with  constant
rate $\frac{NR_{\min}}{N-k}$. 
\item 
Otherwise, it is optimal to perform beam-alignment,
 with beam
$\mathcal B_k \subset \mathcal U_k, \; \; \size{ \mathcal B_k} = \rho_k \size{\mathcal U_k}$.
\item Finally, since the time to switch to data communication is solely based on $\{v_k^*\}$, but not on $\mathcal U_k $, it follows that
\emph{fixed-length} beam-alignment is optimal, with duration
\begin{align}
\label{eq:opt_len}
L^*=\min\left\{k:\Gamma_k<\Lambda_k\right\}.
\end{align}
\end{enumerate}
These structural results are detailed in the following theorem.
\begin{theorem}
\label{optimalpoliuniform}
Let
\begin{align}
\label{Lmin}
\!L_{\min}{=}\!\!\!\!\!\argmin_{L\in\{0,\dots,N{-}1\}}\!\left\{L:(N{-}L)\phi_d\left(\frac{NR_{\min}}{N-L},\epsilon\right){>}\frac{\phi_s}{2}\right\}
\end{align}
and, for $L_{\min}\leq L<N$,
\begin{align}
\label{dfh}
\left\{\begin{array}{l}
v_L^{(L)}
= 
(N-L)\phi_d\left(\frac{NR_{\min}}{N-L},\epsilon\right),
\\
v_k^{(L)}
= 
v_{k+1}^{(L)}-\frac{(2v_{k+1}^{(L)}-\phi_s)^2}{8v_{k+1}^{(L)}},\ k<L.
\end{array}\right.
\end{align}
Then, the beam-alignment phase has \emph{deterministic} duration
\begin{align}
\label{eq:opt_ba_len}
L^*= 
\arg\!\!\!\!\!\!\!\!\min_{\!\!\!\!\!\!L\in \{0\}\cup\{L_{\min},\ldots,N-1\}}
v_0^{(L)}.
 \end{align}
For $0{\leq}k{<}L^*$ (beam-alignment phase),
$\mathcal B_k$ is optimal iff
\begin{align}
\label{hgdfgh2}
\mathcal B_k \subset \mathcal U_k, \; \; \size{\mathcal B_k} = \rho_k\size{\mathcal U_k},
\end{align}
where $\rho_k$ is the \emph{fractional search} parameter, defined as
\begin{align}
\label{fk}
\begin{cases}
&\rho_{L^*-1}=\frac{1}{2}-\frac{\phi_s}{4(N-L^*)\phi_d\left(\frac{NR_{\min}}{N-L^*},\epsilon\right)},\\
&\rho_{k}=\frac{1-\rho_{k+1}}{1-2\rho_{k+1}^2}\rho_{k+1},\ k<L^*-1.
\end{cases}
\end{align}
Moreover, $\rho_k\in(0,1/2)$, strictly increasing in $k$.
For $k\geq L^*$, the data communication phase occurs with rate $\frac{NR_{\min}}{N-L^*}$, and 2D beam given by Theorem \ref{th:opt_comm_beams}.
\end{theorem}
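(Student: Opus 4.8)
The plan is to work entirely with the scalar recursion \eqref{eq:vk_rec} for $v_k^*$ derived in the excerpt — legitimate because $v_k^*({\mathcal U}_k)$ is $\mathcal{U}_k$-independent and, by Corollary~\ref{cl1}, each optimal run consists of a beam-alignment phase followed by a data-communication phase — and to read off from it the duration $L^*$, the optimal beam, and the monotonicity of $\rho_k$. Two preliminary facts carry most of the weight. First, the per-slot beam-alignment map $\Lambda(x)=x-\frac{[(2x-\phi_s)^+]^2}{8x}$ equals $\frac{x}{2}+\frac{\phi_s}{2}-\frac{\phi_s^2}{8x}$ on $\{2x>\phi_s\}$; there it is strictly increasing, satisfies $\Lambda(x)<x$, and preserves the region (both $2x>\phi_s$ and $2\Lambda(x)>\phi_s$ reduce to $4x^2>\phi_s^2$), while on $\{2x\le\phi_s\}$ it is the identity, so $\Lambda$ is increasing on all of $(0,\infty)$ with $\Lambda(\infty)=\infty$. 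Second, $\Gamma_L=(N-L)\phi_d\!\left(\frac{NR_{\min}}{N-L},\epsilon\right)$ is strictly increasing in $L$: writing $\Gamma_L=g(N-L)$ with $g(m)=m\,\phi_d(NR_{\min}/m,\epsilon)$, strict convexity of $\phi_d(\cdot,\epsilon)$ together with $\phi_d(0,\epsilon)=0$ gives the tangent-line bound $\phi_d(y,\epsilon)<y\,\phi_d'(y,\epsilon)$, hence $g'(m)<0$.

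Using these, I would establish \eqref{eq:opt_ba_len} by unrolling \eqref{eq:vk_rec}. Since $\Lambda$ is monotone increasing, $\min$ commutes with the backward composition of the maps $x\mapsto\min\{\Gamma_k,\Lambda(x)\}$; starting from $v_N^*=\infty$ (so $v_{N-1}^*=\Gamma_{N-1}$), a backward induction gives $v_0^*=\min_{0\le L\le N-1} v_0^{(L)}$, where $v_0^{(L)}$ is obtained by initializing at slot $L$ with $\Gamma_L$ and applying $\Lambda$ backward $L$ times. For $L\ge L_{\min}$, $2v_L^{(L)}=2\Gamma_L>\phi_s$ by the defining inequality of $L_{\min}$ in \eqref{Lmin} (using monotonicity of $\Gamma_L$), so region-preservation keeps the $(\cdot)^+$ inactive throughout and $v_0^{(L)}$ is exactly \eqref{dfh}; for $1\le L<L_{\min}$, $\Gamma_L\le\phi_s/2$ forces $\Lambda^{(j)}(\Gamma_L)=\Gamma_L$, so $v_0^{(L)}=\Gamma_L\ge\Gamma_0=v_0^{(0)}$, and such $L$ are dominated and removable from the $\argmin$. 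This is \eqref{eq:opt_ba_len}, and the duration is deterministic since the $v_0^{(L)}$ are constants (equivalently, the comparison $\Gamma_k$ versus $\Lambda(v_{k+1}^*)$ does not involve $\mathcal{U}_k$).

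For the beam, fix $k<L^*$, so $v_k^*=\Lambda(v_{k+1}^*)$ with $2v_{k+1}^*>\phi_s$ (by region-preservation from $2v_{L^*}^*>\phi_s$). The inner map $\rho\mapsto\phi_s\rho+[\rho^2+(1-\rho)^2]v_{k+1}^*$ in \eqref{eq:vk_rec} is strictly convex with unique interior minimizer $\rho_k=\frac12\bigl(1-\frac{\phi_s}{2v_{k+1}^*}\bigr)\in(0,1/2)$; since the value depends on $\mathcal{B}_k$ only through $|\mathcal{B}_k|$, and $\mathcal{B}_k\subseteq\mathcal{U}_k$ is forced by \eqref{eq:extended_action_space2}, this gives the equivalence in \eqref{hgdfgh2}. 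Plugging $v_{L^*}^*=\Gamma_{L^*}$ yields the closed form for $\rho_{L^*-1}$; for $k<L^*-1$, the substitution $\frac{\phi_s}{2v_{k+2}^*}=1-2\rho_{k+1}$ in $v_{k+1}^*=\Lambda(v_{k+2}^*)$ simplifies to $\rho_k=\frac{1-\rho_{k+1}}{1-2\rho_{k+1}^2}\rho_{k+1}$, i.e.\ \eqref{fk}. Monotonicity of $\rho_k$ in $k$ follows from $\rho_k$ being increasing in $v_{k+1}^*$ and $v_1^*<v_2^*<\dots<v_{L^*}^*$ (each $\Lambda$ strictly lowers the value). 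For $k\ge L^*$, the data-communication lemma gives constant rate $\frac{D_0}{T(N-L^*)}=\frac{NR_{\min}}{N-L^*}$ (since $D_0=R_{\min}T_{\mathrm{fr}}=NR_{\min}T$) and the 2D beam of Theorem~\ref{th:opt_comm_beams}.

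The main obstacle is the second step: pinning down that $v_0^*=\min_L v_0^{(L)}$ over exactly $\{0\}\cup\{L_{\min},\dots,N-1\}$. This needs careful bookkeeping of when the $(\cdot)^+$ is active along the backward pass, the monotonicity of $\Gamma_L$ (which bottoms out at the convexity of $2^{R/W_{\mathrm{tot}}}$ inside $\phi_d$), and a genuine domination argument (not mere non-minimality) for the truncated branches $1\le L<L_{\min}$. By contrast, the beam characterization is a one-line convex optimization, and \eqref{fk} is routine algebra once $\phi_s=2v_{k+2}^*(1-2\rho_{k+1})$ is substituted.
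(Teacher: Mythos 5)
Your proposal is correct and follows essentially the same route as the paper's proof: reduce to the scalar recursion, compare the fixed-$L$ branch values $v_0^{(L)}$, eliminate $0<L<L_{\min}$ using that $\Gamma_L$ is increasing and that the beam-alignment update is inert when the value is below $\phi_s/2$, obtain $\rho_k=\frac12\bigl(1-\frac{\phi_s}{2v_{k+1}}\bigr)$ from the strictly convex per-slot minimization, and recover \eqref{fk} via the substitution $v_{k+1}=\frac{\phi_s}{2(1-2\rho_k)}$. The differences are bookkeeping rather than substance — you justify the fixed-$L$ reduction by commuting the min with the increasing map $\Lambda$, prove monotonicity of $\Gamma_L$ via the tangent-line bound where the paper only asserts it, and deduce monotonicity of $\rho_k$ from $v_{k+1}^*<v_{k+2}^*$ instead of inspecting \eqref{fk} — and all of these steps are valid.
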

\begin{proof}
Since the optimal duration of the beam-alignment phase is deterministic, as previously discussed, 
 we consider a fixed beam-alignment duration $L$, and then optimize over $L$ to achieve minimum energy consumption. Let $L\in\mathcal I$. Then, the DP updates are obtained by adapting \eqref{eq:vk_rec} to this case (so that the outer minimization disappears for $k<L$), yielding
\begin{align}
\label{vkupdate}
\begin{cases}
&v_L^{(L)}= (N-L)\phi_d\left(\frac{NR_{\min}}{N-L},\epsilon\right),\\
&v_k^{(L)}=g_k(\rho_k),\ k<L,\text{ where}\\
&g_k(\rho)\triangleq\phi_s\rho+\left[\rho^2+\left(1-\rho\right)^2\right]v_{k+1}^{(L)},\\
&\rho_k=\arg\min_{\rho\in[0,1]}g_k(\rho)=\frac{1}{2}\left(1-\frac{\phi_s}{2v_{k+1}^{(L)}}\right)^+.
\end{cases}
\end{align}
Since the goal is to minimize the energy consumption, the optimal $L$ is obtained as $L^*{=}\arg\min_Lv_0^{(L)}$.
We now prove that $0<L<L_{\min}$ is suboptimal, so that this optimization can be restricted to 
$L\in\{0\}\cup\{L_{\min},\cdots,N-1\}$, as in \eqref{eq:opt_ba_len}. Let $0<L<L_{\min}$, so that $v_L^{(L)}\leq\phi_s/2$,
as can be seen from the definition of $L_{\min}$ in \eqref{Lmin}.
 Note that $v_k^{(L)}$ is a non-decreasing function of $k$. In fact, $v_k^{(L)}\leq g_k(0)=v_{k+1}^{(L)}$.
 Then, it follows that $v_k^{(L)}\leq \phi_s/2,\forall k$,
 hence $\rho_k=0,\forall k$, yielding
 $v_0^{(L)}=v_L^{(L)}$ by induction. However, $v_L^{(L)}$ is an increasing function of $L$ (it is more energy efficient to spread transmissions over a longer interval), hence
 $v_0^{(L)}>v_0^{(0)}$ and such $L$ is suboptimal. This proves that any $0<L<L_{\min}$ is suboptimal.

We now prove the updates for $L\geq L_{\min}$, i.e., $v_L^{(L)}>\phi_s/2$. 
 By induction, we have that $v_k^{(L)}>\phi_s/2,\forall k$. 
 In fact, this condition trivially holds for $k=L$, by hypothesis.
Now, assume $v_{k+1}^{(L)}>\phi_s/2$ for some $k<L$. Then,
 $v_k^{(L)}=\min_{\rho\in[0,1]}g_k(\rho)$, minimized at 
 $\rho_k=\frac{1}{2}\left(1-\frac{\phi_s}{2v_{k+1}^{(L)}}\right)$, so that 
 $v_k^{(L)}=g_k(\rho_k)$, yielding \eqref{dfh}.
 This recursion is an increasing function of $v_{k+1}^{(L)}$, yielding $v_k^{(L)}>\phi_s/2$,
 thus proving the induction. It follows that 
 $\rho_k=\frac{1}{2}\left(1-\frac{\phi_s}{2v_{k+1}^{(L)}}\right),\forall k$, yielding the 
 recursion given by \eqref{dfh}.
 The fractional search parameter $\rho_k$ is finally obtained by substituting $v_{k+1}^{(L)}=\frac{\phi_s}{2(1-2\rho_k)}$ into the recursion \eqref{dfh} to find a recursive expression of
 $\rho_{k}$ from $\rho_{k+1}$, yielding \eqref{fk}.
 These fractional values are used to obtain $\mathcal B_k$ in \eqref{hgdfgh2}.
 
 To conclude, we show by induction that $\rho_k{\in}(0,1/2)$, strictly increasing in $k$. 
 This is true for $k{=}L{-}1$ since
 $\rho_{L-1}{\in}(0,1/2)$. Assume that $\rho_{k+1}{\in}(0,1/2)$, for some $k{\leq}L{-}2$. Then,
by inspection of \eqref{fk}, it follows that $0{<}\rho_k{<}\rho_{k+1}<1/2$.
The theorem is thus proved.
\end{proof}
\vspace{-5mm}
\section{Decoupled BS and UE Beam-Alignment}
\label{sec:rec_beams_optimalality}
In the previous section, we proved the optimality of a fractional search method, based on an extended action space that uses the 2D beam $\mathcal B_k{\in}[-\pi,\pi]^2$, which may take any shape.
However, actual beams should satisfy the rectangular constraint $\mathcal B_k{=}\mathcal B_{\mathrm t, k}{\times}\mathcal B_{\mathrm r, k}$, and therefore, it is not immediate to see that the optimal scheme 
outlined in Theorem \ref{optimalpoliuniform} is attainable in practice.
Indeed, in this section we prove that there exists a feasible beam design attaining optimality.
The proposed beam design decouples over time the beam-alignment of the AoD at the BS (\emph{BS beam-alignment}) and of the AoA at the UE (\emph{UE beam-alignment}). To explain this approach, we define the support of the marginal belief with respect to $\theta_{\mathrm x},\mathrm x{\in}\{ \mathrm t, \mathrm r\}$ as $\mathcal U_{\mathrm x, k} \equiv \supp( f_{\mathrm x, k})$. 
In \emph{BS beam-alignment}, indicated with $\beta_k{=}1$, the 2D beam is chosen as
 $\mathcal B_k{=}\mathcal B_{\mathrm t, k}{\times}\mathcal U_{\mathrm r, k}$, where  $\mathcal B_{\mathrm t, k}{\subset}\mathcal U_{\mathrm t, k} $, so that the BS can better estimate the support of the AoD, whereas the UE receives over the entire support of the AoA.
On the other hand, in \emph{UE beam-alignment}, indicated with $\beta_k{=}2$, the 2D beam is chosen as $\mathcal B_k{=}\mathcal U_{\mathrm t, k}{\times}\mathcal B_{\mathrm r, k} $, where  $\mathcal B_{\mathrm r, k}{\subset}\mathcal U_{\mathrm r, k} $, so that the UE can better estimate the support of the AoA, whereas the BS transmits over the entire support of the AoD.
 We now define a policy $\mu$ that uses this principle, and then prove its optimality.
\begin{defi}[Decoupled fractional search policy]
\label{fracpol}
Let $L^*$, $\vartheta$, $\{\rho_k{:}k{=}0,\dots,L^*{-}1\}$ as in Theorems~\ref{th:opt_comm_beams},~\ref{optimalpoliuniform}.
In slots $k{=}L^*,\dots,N$, data communication occurs with rate 
$R_k{=}\frac{NR_{\min}}{N-L^*}$ and beams
\begin{align}
\label{eq:optimal_comm_beams1}
\mathcal B_{\mathrm t, k} \subseteq \mathcal U_{\mathrm t, k},\  
\mathcal B_{\mathrm r, k} \subseteq \mathcal U_{\mathrm r, k},
\; \; 
&\size{\mathcal B_{\mathrm t, k}}\size{\mathcal B_{\mathrm r, k}} = \vartheta\size{\mathcal U_{\mathrm t, k}}
\size{\mathcal U_{\mathrm r, k}}.
\end{align}
In slots $k{=}0,1,\dots,L^*$,  $\beta_k{\in}\{1,2\}$ is chosen arbitrarily and beam-alignment occurs with beams
\begin{align}
\!\begin{cases}
\mathcal B_{\mathrm t, k} \subset \mathcal U_{\mathrm t, k},
\ \mathcal B_{\mathrm r, k}=\mathcal U_{\mathrm r, k},
 \; \; \size{\mathcal B_{\mathrm t, k}} = \rho_k\size{\mathcal U_{\mathrm t, k}}, \text{if }\beta_k{=}1
\\
\mathcal B_{\mathrm t, k} = \mathcal U_{\mathrm t, k},
\ \mathcal B_{\mathrm r, k}\subset\mathcal U_{\mathrm r, k},
 \; \; \size{\mathcal B_{\mathrm r, k}} = \rho_k\size{\mathcal U_{\mathrm r, k}},
\text{if }\beta_k{=}2.
\end{cases}\!\!\!\!\!\!
\end{align}
\end{defi}
\begin{theorem}
\label{thm:decoupled}
The decoupled fractional search policy is optimal, with minimum power consumption
\begin{align}
\label{Powunif}
\bar P_{\mathrm{u}}=\frac{v_0^{(L^*)}}{T_{\mathrm{fr}}}\size{\mathcal U_0}.
\end{align}
\end{theorem}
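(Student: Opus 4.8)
The plan is to exhibit the decoupled fractional search policy as a \emph{feasible} policy for $\mathrm P_2$ (it uses only rectangular beams $\mathcal B_k{=}\mathcal B_{\mathrm t,k}{\times}\mathcal B_{\mathrm r,k}$) whose expected cost equals the lower bound $\hat V_0^*(\mathcal U_0,D_0,\mathrm{BA}){=}v_0^{(L^*)}\size{\mathcal U_0}$ established in Theorem~\ref{optimalpoliuniform} by optimizing over the extended (non-rectangular) action space. Since the extended action space contains the rectangular one, $\hat V_0^*(\mathcal U_0,D_0,\mathrm{BA}){\le}V_0^*(\mathcal U_0,D_0)$, where $V_0^*$ is the true optimal value of $\mathrm P_2$; and the cost of \emph{any} feasible policy is $\ge V_0^*(\mathcal U_0,D_0)$. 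Sandwiching these inequalities gives $V_0^*(\mathcal U_0,D_0){=}v_0^{(L^*)}\size{\mathcal U_0}$ and, since the decoupled policy attains it, proves its optimality; because $\bar P$ in $\mathrm P_2$ equals $V_0^*(\mathcal U_0,D_0)/T_{\mathrm{fr}}$ with $D_0{=}R_{\min}T_{\mathrm{fr}}$, this also yields \eqref{Powunif}.

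First I would establish, by induction on $k$, the key structural invariant that under the decoupled policy the belief support $\mathcal U_k$ is always a product set $\mathcal U_{\mathrm t,k}\times\mathcal U_{\mathrm r,k}$. This follows from the support update \eqref{eq:supp_update}: in a $\beta_k{=}1$ beam-alignment slot the beam is $\mathcal B_k{=}\mathcal B_{\mathrm t,k}\times\mathcal U_{\mathrm r,k}$ with $\mathcal B_{\mathrm t,k}\subset\mathcal U_{\mathrm t,k}$, so $\mathcal U_{k+1}{=}\mathcal B_{\mathrm t,k}\times\mathcal U_{\mathrm r,k}$ on ACK and $\mathcal U_{k+1}{=}(\mathcal U_{\mathrm t,k}\setminus\mathcal B_{\mathrm t,k})\times\mathcal U_{\mathrm r,k}$ on NACK --- both rectangles --- and symmetrically for $\beta_k{=}2$, while in data-communication slots $\mathcal U_{k+1}{=}\mathcal U_k$. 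Two measure identities then tie the decoupled beams to the extended-space quantities: $\size{\mathcal B_{\mathrm t,k}}{=}\rho_k\size{\mathcal U_{\mathrm t,k}}$ (or its $\mathrm r$-analogue) gives $\size{\mathcal B_k}{=}\rho_k\size{\mathcal U_k}$, so the ACK/NACK probabilities are $\rho_k$ and $1{-}\rho_k$, exactly as in \eqref{eq:BS_trans_prob}; and $\size{\mathcal B_{\mathrm t,k}}\size{\mathcal B_{\mathrm r,k}}{=}\vartheta\size{\mathcal U_{\mathrm t,k}}\size{\mathcal U_{\mathrm r,k}}$ gives $\size{\mathcal B_k}{=}\vartheta\size{\mathcal U_k}$, i.e.\ the optimal data-communication beam of Theorem~\ref{th:opt_comm_beams}. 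Feasibility of the latter split requires only $\vartheta{\le}1$, which holds since $\vartheta{=}(1{-}\epsilon)/q^*$ with $q^*{\in}[1{-}\epsilon,1]$.

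With the invariant in hand I would run a backward induction matching the recursions \eqref{vkupdate}--\eqref{dfh}. For the base case, in slots $k{\ge}L^*$ the policy transmits at constant rate $\frac{NR_{\min}}{N-L^*}$; since $D_{L^*}{=}D_0{=}R_{\min}NT$ this equals $\frac{D_{L^*}}{T(N-L^*)}$, so by \eqref{eq:comm_val_ftn_final} and the lemma for $\hat V^*(\cdot,\mathrm{DC})$ the cost-to-go from a rectangular state $\mathcal U_{L^*}$ is $(N-L^*)\phi_d(\frac{NR_{\min}}{N-L^*},\epsilon)\size{\mathcal U_{L^*}}{=}v_{L^*}^{(L^*)}\size{\mathcal U_{L^*}}$. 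For the inductive step, assuming the cost-to-go from every rectangular state $\mathcal U_{k+1}$ equals $v_{k+1}^{(L^*)}\size{\mathcal U_{k+1}}$, the cost-to-go of a beam-alignment slot $k{<}L^*$ is $\phi_s\rho_k\size{\mathcal U_k}+\rho_k v_{k+1}^{(L^*)}\rho_k\size{\mathcal U_k}+(1-\rho_k)v_{k+1}^{(L^*)}(1-\rho_k)\size{\mathcal U_k}=g_k(\rho_k)\size{\mathcal U_k}=v_k^{(L^*)}\size{\mathcal U_k}$, independent of the arbitrary choice $\beta_k{\in}\{1,2\}$. Evaluating at $k{=}0$ yields cost $v_0^{(L^*)}\size{\mathcal U_0}{=}\hat V_0^*(\mathcal U_0,D_0,\mathrm{BA})$, closing the sandwich. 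I expect the only nonroutine point to be maintaining the rectangle-preservation invariant jointly with the feasibility of the beam splittings; once that bookkeeping is set, the remainder is direct substitution into the recursions already proven in Theorem~\ref{optimalpoliuniform}.
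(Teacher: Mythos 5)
Your proposal is correct and follows essentially the same route as the paper's proof: establish by induction via \eqref{eq:supp_update} that the decoupled policy keeps the support rectangular, $\mathcal U_k=\mathcal U_{\mathrm t,k}\times\mathcal U_{\mathrm r,k}$, so its rectangular beams realize exactly the fractional beams of Theorems~\ref{th:opt_comm_beams} and~\ref{optimalpoliuniform} and hence attain the extended-action-space lower bound $v_0^{(L^*)}\size{\mathcal U_0}$. Your explicit sandwich argument and backward-induction cost computation merely spell out what the paper leaves implicit in citing those theorems and the relation between power and value function.
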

\begin{proof}
The proof is provided in Appendix \ref{app:proof_of_th_perf_with_errors}.
\end{proof}
{The intuition behind this result is that, by decoupling the beam-alignment of the AoD and AoA over time,
the proposed method maintains a rectangular support
$\mathcal U_k = \mathcal U_{\mathrm t,k} \times \mathcal U_{\mathrm r,k}$, so that 
no loss of optimality is incurred by using
a rectangular beam 
$\mathcal B_k = \mathcal B_{\mathrm t,k} \times \mathcal B_{\mathrm r,k}$.} Additionally, we can infer that the \emph{exhaustive search} method is suboptimal, since
it searches over the AoD/AoA space in an exhaustive manner, rather than by decoupling this search over time.
\section{Non-Uniform prior}
\label{sec:nonunifprior}
In this section,
we investigate the case of non-uniform prior $f_0$.
We use the previous analysis to design a heuristic scheme with performance guarantees.
We consider the decoupled fractional search policy (Definition \ref{fracpol}), with the following additional constraints:
in the beam-alignment phase $k<L^*$,
if $\beta_k^* = 1$ (BS beam-alignment), then
\begin{align}
\label{v1}
\!\!\mathcal B_{\mathrm t,k}^*{=}\arg\max_{\mathcal B_{\mathrm t,k}\subset\mathcal U_{\mathrm t,k}} \int_{\mathcal B_{\mathrm t, k}} \!\!\!\!\!\!\! f_{\mathrm t, k}(\theta_{\mathrm t}) \mathrm d \theta_{\mathrm t},\ \mathrm{s.t.}\ \size{\mathcal B_{\mathrm t,k}} = \rho_k \size{\mathcal U_{\mathrm t,k}};
\end{align}
 if $\beta_k^* = 2$ (UE beam-alignment), then 
 \begin{align}
 \label{v2}
\!\!\mathcal B_{\mathrm r,k}^*{=}\arg\max_{\mathcal B_{\mathrm r, k}\subset\mathcal U_{\mathrm r,k}} \int_{\mathcal B_{\mathrm r, k}}\!\!\!\!\!\!\! f_{\mathrm r, k} (\theta_{\mathrm r})\mathrm d \theta_{\mathrm r},\ \mathrm{s.t.}\ \size{\mathcal B_{\mathrm r,k}} = \rho_k \size{\mathcal U_{\mathrm r,k}}.
\end{align}
Hence, the probability of ACK  can be bounded as
\begin{align}
\label{eq:ack_nonunif}
\left.
\begin{array}{lr}
\text{Case }\beta_k^* = 1\text{:}&
\int_{\mathcal B_{\mathrm t, k}^*}  f_{\mathrm t, k}(\theta_{\mathrm t}) \mathrm d \theta_{\mathrm t} \geq \frac{\size{\mathcal B_{\mathrm t,k}^*}}{\size{\mathcal U_{\mathrm t,k}}}\\
\text{Case }\beta_k^* = 2\text{:}&
\int_{\mathcal B_{\mathrm r, k}^*}  f_{\mathrm r, k}(\theta_{\mathrm r}) \mathrm d \theta_{\mathrm r} \geq \frac{\size{\mathcal B_{\mathrm r,k}^*}}{\size{\mathcal U_{\mathrm r,k}}}
\end{array}
\right\}=\rho_k.
\end{align}
In other words, such choice of the BS-UE beam 
maximizes the probability of successful beam-detection, so that the resulting probability of ACK is at least as good as in the uniform  case.

Similarly, in the data communication phase $k\geq L^*$, the
BS transmits with rate $R_k=\frac{NR_{\min}}{N-L^*}$, and the
 beams are chosen as in Definition \ref{fracpol},
with the additional constraint
\begin{align*}
(\mathcal B_{\mathrm t,k}^*,\mathcal B_{\mathrm r,k}^*) =& \arg\max_{\mathcal B_{\mathrm t,k}\times\mathcal B_{\mathrm r,k}\subseteq\mathcal U_k} \int_{\mathcal B_{\mathrm t, k}\times\mathcal B_{\mathrm r, k}}  f_{k}(\bm\theta) \mathrm d \bm\theta,\;\; \mathrm{s.t.}\; \; 
{\size{\mathcal B_{\mathrm t,k}}\size{\mathcal B_{\mathrm t,k}}} ={\vartheta} {\size{{\mathcal U}_{\mathrm t,k}}\size{{\mathcal U}_{\mathrm r,k}}}. \numberthis
\end{align*}
Under this choice, the energy consumption per data communication slot is obtained from \eqref{eq:cost1},
\begin{align}
\label{Ekineq}
& 
E_k  = 
{\psi}_d(R_k)
\frac{\size{\mathcal B_k}}{\bar{F}_{\gamma}^{-1}\left(\frac{1-\epsilon}{\mathbb{P}(\bm{\theta} \in \mathcal B_k | \mathcal U_k)} \right)}
\\&
\overset{(a)}{\leq}
{\psi}_d(R_k)
\frac{\size{\mathcal B_k}}{\bar{F}_{\gamma}^{-1}\left(\frac{(1-\epsilon)\size{\mathcal U_k}}{\size{\mathcal B_k}} \right)}
\overset{(b)}{=}
\phi_d(R_k,\epsilon)\size{\mathcal U_k},
\end{align}
where (a) follows from 
$\mathbb{P}(\bm{\theta}{\in}\mathcal B_k | \mathcal U_k){\geq}|\mathcal B_k|/|\mathcal U_k|$,
and (b) from
$\size{\mathcal B_{\mathrm t, k}}\size{\mathcal B_{\mathrm r, k}} = \vartheta \size{\mathcal U_{\mathrm t, k}}
\size{\mathcal U_{\mathrm r, k}}$, and from \eqref{phid} with $\vartheta = (1-\epsilon)/q^*$
(Theorem \ref{th:opt_comm_beams}). This result implies that
data communication is more energy efficient than in the uniform case, see \eqref{eq:comm_val_ftn_final}. {These observations suggest that the uniform prior yields the worst performance, 
as confirmed by the following theorem.}
\begin{theorem}
The minimum power consumption for the non-uniform prior is upper bounded by
$\bar P_{\mathrm{nu}}\leq \bar P_{\mathrm{u}}$,
with equality when $f_0$ is uniform. 
\end{theorem}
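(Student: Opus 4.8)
The plan is to prove the inequality by \emph{exhibiting} the heuristic policy of Definition~\ref{fracpol} — equipped with the ACK-maximizing refinements \eqref{v1}--\eqref{v2} in the beam-alignment phase and the analogous $2$D beam choice in the data phase — as a feasible policy for the non-uniform instance whose expected cost is at most $\bar P_{\mathrm u}$; since $\bar P_{\mathrm{nu}}$ is by definition the optimum over all feasible policies for that instance, it can only be smaller. Feasibility is immediate: transmitting at rate $\frac{NR_{\min}}{N-L^*}$ in each of the last $N-L^*$ slots clears exactly $D_0=NR_{\min}T$ bits, so $D_N=0$; and since $\mathbb P(\bm\theta\in\mathcal B_k|\mathcal U_k)\ge\size{\mathcal B_k}/\size{\mathcal U_k}=\vartheta=(1-\epsilon)/q^*\ge 1-\epsilon$, the energy prescribed by \eqref{Ekineq} makes $R_k$ attain the $\epsilon$-outage capacity, so the rate and outage constraints hold. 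Crucially, the uniform-optimal length $L^*$ of \eqref{eq:opt_ba_len} is reused unchanged: it need not be optimal for the non-uniform problem, only admissible.

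The core is a backward induction on $k$ over the cost-to-go $W_k(\cdot)$ of this policy, with the claim that $W_k(f)\le v_k^{(L^*)}\,\size{\supp(f)}$ for every belief $f$, where $v_k^{(L)}$ and $\rho_k\in(0,1/2)$ are the uniform-case quantities of \eqref{vkupdate} and Theorem~\ref{optimalpoliuniform} taken at $L=L^*$. If $L^*=0$ the policy is pure data communication and the claim follows directly from \eqref{Ekineq}. Otherwise $L^*\ge L_{\min}$ by \eqref{eq:opt_ba_len}, hence $v_k^{(L^*)}>\phi_s/2>0$ for all $k$. For the base case $k=L^*$ the support is frozen over the data phase, so \eqref{Ekineq} gives $E_j\le\phi_d(\frac{NR_{\min}}{N-L^*},\epsilon)\size{\mathcal U_{L^*}}$ for each of the $N-L^*$ data slots, and summing yields $W_{L^*}(f_{L^*})\le(N-L^*)\phi_d(\frac{NR_{\min}}{N-L^*},\epsilon)\size{\mathcal U_{L^*}}=v_{L^*}^{(L^*)}\size{\mathcal U_{L^*}}$.

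For the inductive step, fix $k<L^*$ and assume the claim at $k+1$. The policy does beam-alignment with $\size{\mathcal B_k}=\rho_k\size{\mathcal U_k}$, and by \eqref{eq:ack_nonunif} the ACK probability $p\triangleq\int_{\mathcal B_k}f_k$ satisfies $p\ge\rho_k$; on ACK the support has measure $\rho_k\size{\mathcal U_k}$ and on NACK measure $(1-\rho_k)\size{\mathcal U_k}$, so applying the inductive hypothesis to both branches,
\begin{align*}
W_k(f_k)\le\Bigl[\phi_s\rho_k+v_{k+1}^{(L^*)}\bigl(p\rho_k+(1-p)(1-\rho_k)\bigr)\Bigr]\size{\mathcal U_k}.
\end{align*}
Comparing with $v_k^{(L^*)}\size{\mathcal U_k}=\bigl[\phi_s\rho_k+v_{k+1}^{(L^*)}(\rho_k^2+(1-\rho_k)^2)\bigr]\size{\mathcal U_k}$ from \eqref{vkupdate} (for the same $\rho_k$), and using $v_{k+1}^{(L^*)}>0$, it remains to check $p\rho_k+(1-p)(1-\rho_k)\le\rho_k^2+(1-\rho_k)^2$; this rearranges to $\rho_k(1-2\rho_k)\le p(1-2\rho_k)$, and since $\rho_k\in(0,1/2)$ gives $1-2\rho_k>0$, it reduces to $\rho_k\le p$, which holds. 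Hence $W_0(f_0)\le v_0^{(L^*)}\size{\mathcal U_0}$, and therefore $\bar P_{\mathrm{nu}}\le W_0(f_0)/T_{\mathrm{fr}}\le v_0^{(L^*)}\size{\mathcal U_0}/T_{\mathrm{fr}}=\bar P_{\mathrm u}$ by \eqref{Powunif}. Equality for uniform $f_0$ is automatic: the non-uniform instance is then the uniform one, $p=\rho_k$ and \eqref{Ekineq} holds with equality, so the heuristic coincides with the optimal policy of Theorem~\ref{thm:decoupled}.

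The step I expect to be the main obstacle is identifying the right inductive invariant — recognizing that for the bound only the \emph{measure} $\size{\supp(f)}$ of the belief support, not its shape, needs to be propagated, so that the non-uniform cost-to-go is dominated term by term by the uniform coefficients $v_k^{(L^*)}$ — and then isolating the single monotone inequality $\rho_k\le p$ that makes the expected support-shrinkage under the ACK-maximizing beam no worse than in the uniform case. The accompanying bookkeeping — disposing of $L^*=0$ separately so that $v_{k+1}^{(L^*)}>0$ can be invoked, and noting that $L^*$ remains admissible (though possibly not optimal) for the non-uniform instance — is routine but must be stated explicitly.
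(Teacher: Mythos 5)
Your proposal is correct and follows essentially the same route as the paper's proof: evaluate the heuristic policy's cost-to-go by backward induction, show it is dominated by $v_k^{(L^*)}\size{\mathcal U_k}$ using the ACK-probability bound \eqref{eq:ack_nonunif} together with $\rho_k\in(0,1/2)$ (your reduction to $\rho_k\le p$ is exactly the paper's step of discarding the nonnegative term $(\int_{\mathcal B_k^*}f_k-\rho_k)v_{k+1}^{(L^*)}(1-2\rho_k)\size{\mathcal U_k}$), and conclude via \eqref{Powunif}. Your extra bookkeeping (feasibility, admissibility of $L^*$, the $L^*=0$ case) only makes explicit what the paper leaves implicit.
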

\begin{proof}
We denote the value function of the non-uniform case under such policy as $V_{\mathrm{nu}, k}(\mathcal U_{k},D_k)$.
Additionally, we let $\bar P_{\mathrm{nu}}$
be the corresponding minimum power consumption, solution of problem $ \mathrm{P}_2$ in \eqref{P2}, to distinguish it from the minimum power consumption in the uniform case,
given by \eqref{Powunif}.
 For $k=L^*$ (data communication begins),
\eqref{Ekineq} implies that
\begin{align}
\label{eq:ctgo_nonuniform1}
&V_{\mathrm{nu}, k}(\mathcal U_{k},D_0)\leq
(N-L^*)\phi_d\left(\frac{NR_{\min}}{N-L^*},\epsilon\right)
\size{\mathcal U_{k}}.
\end{align}
For $k <L^*$ (beam-alignment phase), 
it can be expressed as
\begin{align*}
\label{eq:ctgo_nonuniform2}
&\!V_{\mathrm{nu}, k}(\mathcal U_{k},D_0)
{=}\phi_s\size{\mathcal B_{k}^*}{+}\!
\int_{\mathcal B_{k}^*}\!\!\!\!\! f_{k}(\bm\theta) \mathrm d \bm\theta
V_{\mathrm{nu}, k+1}(\mathcal B_{k}^*,D_0) +
\left(1{-}\int_{\mathcal B_{k}^*} \! \!\!\! f_{k}(\bm\theta) \mathrm d \bm\theta\!\!\right)
\!V_{\mathrm{nu}, k+1}(\mathcal U_k\setminus\mathcal B_{k}^*,D_0),\numberthis
\end{align*} 
where $\mathcal B_{k}^*$ is given by \eqref{v1} or \eqref{v2}.
The minimum power consumption is given by
$\bar P_{\mathrm{nu}}
{=}V_{\mathrm{nu}, 0}(\mathcal U_{0},D_{0}){/}T_{\mathrm{fr}}$, 
so that $\bar P_{\mathrm{nu}}\leq \bar P_{\mathrm{u}}$ is equivalent to $V_{\mathrm{nu},k}(\mathcal U_{k},D_{k}){\leq}
v_k^{(L^*)}\size{\mathcal U_{k}}$ when $k{=}0$.
We prove this inequality for general $k$ by induction.
The induction hypothesis holds for $k{=}L^*$, see
\eqref{eq:ctgo_nonuniform1} with
$v_{L^*}^{(L^*)}$ given in \eqref{dfh}.
Assume it holds for $k+1$, where $k\leq L^*-1$.
Then,
  (\ref{eq:ctgo_nonuniform2}) can be expressed as
\begin{align*}
& V_{\mathrm{nu}, k}(\mathcal U_{k},D_{0})
\leq
\phi_s\size{\mathcal B_{k}^*}
{+}\int_{\mathcal B_{k}^*} \!\!\!f_{k}(\bm\theta) \mathrm d \bm\theta 
v_{k+1}^{(L^*)}\size{\mathcal B_{k}^*}
{+}\left(1{-}\int_{\mathcal B_{k}^*}  f_{k}(\bm\theta) \mathrm d \bm\theta \right) v_{k+1}^{(L^*)}\size{\mathcal U_{k} \setminus \mathcal B_{k}^*}
\\&
\stackrel{(a)}{=}
\left[\phi_s\rho_k+v_{k+1}^{(L^*)}\left(1-2\rho_k+2\rho_k^2\right)\right]\size{\mathcal U_{k}}
 -\left(\int_{\mathcal B_{k}^*}  f_{k}(\bm\theta) \mathrm d \bm\theta-\rho_k\right)v_{k+1}^{(L^*)}\size{\mathcal U_{k}}\left(1-2\rho_k\right),
\end{align*}
where (a) follows from \eqref{v1}-\eqref{v2} and $\size{\mathcal U_{k} \setminus \mathcal B_{k}^*}
=\size{\mathcal U_{k}}-\size{\mathcal B_{k}^*}$.
Finally, the bound \eqref{eq:ack_nonunif} yields
\begin{align*}
& V_{\mathrm{nu}, k}(\mathcal U_{k},D_{0})
\leq
\left[\phi_s\rho_k+v_{k+1}^{(L^*)}\left(1-2\rho_k+2\rho_k^2\right)\right]\size{\mathcal U_{k}}
=
v_k^{(L^*)}\size{\mathcal U_{k}},
\end{align*}
where the last equality is obtained by using the recursion \eqref{dfh} and the fact that 
$\rho_k=\frac{1}{2}-\frac{\phi_s}{4v_{k+1}^{(L^*)}}$ (see proof of Theorem \ref{optimalpoliuniform}). This proves the induction step.
 Clearly, equality is attained in the uniform case.
The theorem is thus proved.
\end{proof}

This result is in line with the fact that one can leverage the structure of the joint distribution over $\bm\theta$ to improve the beam-alignment algorithm. However, for the first time to the best of our knowledge, this result provides a heuristic scheme with provable performance guarantees.
\section{Impact of False-alarm and Misdetection}
\label{sec:fa_md_impact}
In this section, we analyze the impact of false-alarm and misdetection on the performance of the decoupled fractional search policy (Definition \ref{fracpol}). For simplicity, we focus only on the uniform prior case.
Under false-alarm and misdetection, the MDP introduced in  Sec. \ref{sec:probform} does not follow the Markov property. To overcome this problem, we augment it with the state variable $e_k\in\{0,1\}$,
with $e_k=0$ iff no errors have been introduced up to slot $k$.
 Note that, if errors have been introduced ($e_k=1$), then necessarily
$\bm\theta\notin\mathcal U_k$, so that we can write $e_k=1-\chi(\bm\theta\in\mathcal U_k)$.
  It should be noted that $e_{k}$ is not observable in reality and is considered for the purpose of analysis only (indeed, the policy under analysis does not use such information). 
We thus define the state as $(\mathcal U_k,e_k)$,\footnote{The backlog $D_k$ is removed from the state space, since no data is transmitted during the beam-alignment phase.} and study the transition probabilities during the beam-alignment phase $k<L^*$. From state $(\mathcal U_k,0)$ (no errors have been introduced), the transitions are
\begin{align*}
\label{eq:trs_1}
\!\!\!(\mathcal U_{k+1},e_{k+1})
\!{=}\!\left\{\begin{array}{ll}
\!\!\!(\mathcal B_{k},0), &\!\!\text{w.p. }\rho_k(1-p_{\mathrm{md}})\\
\!\!\!(\mathcal B_{k},1), &\!\!\text{w.p. }\left(1-\rho_k\right)p_{\mathrm{fa}}\\
\!\!\!(\mathcal U_k\setminus\mathcal B_{k},0), &\!\!\text{w.p. }\left(1-\rho_k\right)(1-p_{\mathrm{fa}})
\\
\!\!\!(\mathcal U_k\setminus\mathcal B_{k},1), &\!\!\text{w.p. }\rho_k\ p_{\mathrm{md}},
\end{array}
\right.\!
\numberthis
\end{align*}
where $p_{\mathrm{fa}}$ and $p_{\mathrm{md}}$ denote the false-alarm and misdetection probabilities, respectively.
In fact, if no errors occur, then 
$\bm{\theta}{\in}\mathcal B_k$ with probability
$\frac{\size{\mathcal B_{k}}}{\size{\mathcal U_{k}}}{=}\rho_k$
and $\bm\theta{\notin}\mathcal B_k$ otherwise, yielding the first and third cases;
if a false-alarm or misdetection error is introduced, then 
the BS infers incorrectly that $\bm\theta{\in}\mathcal B_k$ (second case)
or $\bm\theta{\notin}\mathcal B_k$ (fourth case), respectively,
and the new state becomes $e_{k+1}{=}1$.
Once errors have been introduced (state $(\mathcal U_k,1)$), 
it follows that $\bm\theta{\notin}\mathcal B_k$, so that
$\mathcal U_{k+1}{=}\mathcal B_k$ iff a false-alarm error occurs, and the transitions are
\begin{align*}
\label{eq:trs_2}
(\mathcal U_{k+1},e_{k+1})
=\begin{cases}
(\mathcal B_{k},1), & \text{w.p. }p_{\mathrm{fa}}\\
(\mathcal U_k\setminus\mathcal B_{k},1), & \text{w.p. }1-p_{\mathrm{fa}}.
\end{cases}
\numberthis
\end{align*}

The average throughput and power are given by
\begin{align*}
\label{eq:avg_th_1}
&\bar T_{\mathrm{err}} = \mathbb{E}\left[ (1-e_{L^*})(1-\epsilon)R_{\min}|\mathcal U_0,e_0=0\right],\\
&\bar P_{\mathrm{err}} =
\frac{1}{T_{\mathrm{fr}}}\mathbb E\Biggr[
\phi_s\sum_{k=0}^{L^*-1}\rho_k\size{\mathcal U_k}\\&+
\left.(N-L^*)\phi_d\left(\frac{NR_{\min}}{N-L^*},\epsilon\right)\size{\mathcal U_{L^*}}
\right|\mathcal U_0,e_0=0\Biggr]. \numberthis
\end{align*}
In fact, a rate equal to $R_{\min}$ is sustained if: (1) no outage occurs in the data communication phase, with probability $1-\epsilon$; (2) no errors occur during the beam-alignment phase, $e_{L^*}=0$.

The analysis of the underlying Markov chain $\{(\mathcal U_k,e_k)$, $k{\geq}0\}$
yields the following theorem.
\begin{theorem}
\label{th:perf_with_errors}
Under the decoupled fractional search policy,
\begin{align}
\label{eq:thr_frac}
&\bar T_{\mathrm{err}}
{=}(1{-}\epsilon) R_{\min} \prod_{k=0}^{L^*{-}1}\Bigr[\left(1{-}\rho_k\right)(1{-}p_{\mathrm{fa}}){+}\rho_k(1{-}p_{\mathrm{md}}) \Bigr],
\\
&
\label{eq:pow_frac}
\bar P_{\mathrm{err}}=
\bar P_{\mathrm{u}}+\frac{h_0 + u_0}{T_{\mathrm{fr}}}\size{\mathcal U_0},
\end{align}
where  $\bar P_{\mathrm{u}}$ in
\eqref{Powunif} is the error-free case, and we have defined $h_{L^*}{=}u_{L^*}{=}0$ and, for $k{<}L^*$,
\begin{align}
\label{eq:h_k}
&h_k = 
\phi_s\frac{\rho_k-p_{\mathrm{fa}}}{2}+\left[\rho_kp_{\mathrm{fa}}+\left(1-\rho_k\right)(1-p_{\mathrm{fa}})\right]h_{k+1},\\
\label{eq:u_k}
&u_k = 
\left[\rho_k^2(1{-}p_{\mathrm{md}}){+}\left(1{-}\rho_k\right)^2(1{-}p_{\mathrm{fa}})\right]u_{k+1}
-(1{-}p_{\mathrm{fa}}{-}p_{\mathrm{md}}){\rho_k}
\left[\frac{\phi_s}{2}{+}h_{k+1}\left(1{-}2\rho_k\right)\right].
\end{align}
\end{theorem}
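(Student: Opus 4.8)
The plan is to handle the average throughput and the average power separately, in both cases exploiting the Markov structure of the augmented chain $\{(\mathcal U_k,e_k)\}$ whose transition law is given by \eqref{eq:trs_1}--\eqref{eq:trs_2}.

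\emph{Throughput.} Starting from a state with $e_k=0$, the only transitions of \eqref{eq:trs_1} that preserve $e_{k+1}=0$ are the first and third, so $\mathbb P(e_{k+1}=0\mid e_k=0)=\rho_k(1-p_{\mathrm{md}})+(1-\rho_k)(1-p_{\mathrm{fa}})$, a quantity that depends only on $\rho_k$ and the error probabilities, not on $\mathcal U_k$. Since once $e=1$ the chain stays in $e=1$ (cf.\ \eqref{eq:trs_2}), the event $\{e_{L^*}=0\}$ equals $\bigcap_{k=0}^{L^*-1}\{e_{k+1}=0\}$, and the Markov property gives $\mathbb P(e_{L^*}=0\mid\mathcal U_0,e_0=0)=\prod_{k=0}^{L^*-1}[(1-\rho_k)(1-p_{\mathrm{fa}})+\rho_k(1-p_{\mathrm{md}})]$. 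Plugging this into the definition $\bar T_{\mathrm{err}}=(1-\epsilon)R_{\min}\,\mathbb P(e_{L^*}=0\mid\mathcal U_0,e_0=0)$ from \eqref{eq:avg_th_1} yields \eqref{eq:thr_frac}.

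\emph{Power.} I would introduce the cost-to-go $W_k(\mathcal U,e)$, the conditional expectation of $\phi_s\sum_{j=k}^{L^*-1}\rho_j|\mathcal U_j|+(N-L^*)\phi_d(\tfrac{NR_{\min}}{N-L^*},\epsilon)|\mathcal U_{L^*}|$ given $(\mathcal U_k,e_k)=(\mathcal U,e)$, so that $\bar P_{\mathrm{err}}=W_0(\mathcal U_0,0)/T_{\mathrm{fr}}$. Because each transition in \eqref{eq:trs_1}--\eqref{eq:trs_2} rescales the support measure by the factor $\rho_k$ or $1-\rho_k$ while the transition probabilities are independent of $\mathcal U_k$, $W_k$ is linear in the measure: $W_k(\mathcal U,e)=w_k^{(e)}|\mathcal U|$ for scalars $w_k^{(0)},w_k^{(1)}$. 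Reading off the one-step recursions for $w_k^{(0)},w_k^{(1)}$ from \eqref{eq:trs_1}--\eqref{eq:trs_2} with $|\mathcal B_k|=\rho_k|\mathcal U_k|$, and recalling from \eqref{vkupdate} both the error-free recursion $v_k^{(L^*)}=\phi_s\rho_k+[\rho_k^2+(1-\rho_k)^2]v_{k+1}^{(L^*)}$ and (since $L^*\ge L_{\min}$, with the positive part inactive, as shown in the proof of Theorem~\ref{optimalpoliuniform}) the identity $v_{k+1}^{(L^*)}(1-2\rho_k)=\phi_s/2$, I would prove by backward induction on $k$ the decomposition $w_k^{(1)}=v_k^{(L^*)}+h_k$ and $w_k^{(0)}=v_k^{(L^*)}+h_k+u_k$, with $h_k,u_k$ given by \eqref{eq:h_k}--\eqref{eq:u_k}; the base case $k=L^*$ is immediate since $w_{L^*}^{(0)}=w_{L^*}^{(1)}=(N-L^*)\phi_d(\cdot)=v_{L^*}^{(L^*)}$ and $h_{L^*}=u_{L^*}=0$. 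Evaluating at $k=0$ and using $\bar P_{\mathrm{u}}=v_0^{(L^*)}|\mathcal U_0|/T_{\mathrm{fr}}$ from \eqref{Powunif} gives $\bar P_{\mathrm{err}}=\bar P_{\mathrm{u}}+(h_0+u_0)|\mathcal U_0|/T_{\mathrm{fr}}$, i.e.\ \eqref{eq:pow_frac}. (The case $L^*=0$ is trivial: empty products, $h_0=u_0=0$.)

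\emph{Where the work is.} The only non-routine step is guessing the additive ansatz $w^{(1)}=v+h$, $w^{(0)}=v+h+u$; once it is in hand the induction reduces to two elementary but error-prone algebraic identities. Subtracting the error-free recursion from the $w^{(1)}$-recursion one checks $[\rho_k p_{\mathrm{fa}}+(1-\rho_k)(1-p_{\mathrm{fa}})]-[\rho_k^2+(1-\rho_k)^2]=(1-2\rho_k)(\rho_k-p_{\mathrm{fa}})$, and then clears the leftover $v_{k+1}^{(L^*)}$ via $v_{k+1}^{(L^*)}(1-2\rho_k)=\phi_s/2$ to land exactly on \eqref{eq:h_k}. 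For $u_k$ one subtracts both the error-free recursion and the $h_k$-recursion from the $w^{(0)}$-recursion and checks $[\rho_k^2(1-p_{\mathrm{md}})+(1-\rho_k)^2(1-p_{\mathrm{fa}})+\rho_k(1-\rho_k)(p_{\mathrm{fa}}+p_{\mathrm{md}})]-[\rho_k p_{\mathrm{fa}}+(1-\rho_k)(1-p_{\mathrm{fa}})]=-(1-2\rho_k)\rho_k(1-p_{\mathrm{fa}}-p_{\mathrm{md}})$; together with $v_{k+1}^{(L^*)}(1-2\rho_k)=\phi_s/2$ this reproduces the $h_{k+1}$-coefficient and the constant $-\tfrac{\phi_s}{2}\rho_k(1-p_{\mathrm{fa}}-p_{\mathrm{md}})$ in \eqref{eq:u_k}, while the $u_{k+1}$-coefficient $\rho_k^2(1-p_{\mathrm{md}})+(1-\rho_k)^2(1-p_{\mathrm{fa}})$ appears directly. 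Organizing these expansions cleanly is the main obstacle.
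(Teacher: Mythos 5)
Your proposal is correct and follows essentially the same route as the paper: the paper likewise works with throughput-to-go and energy-to-go functions on the augmented chain $(\mathcal U_k,e_k)$, proves by backward induction the linear-in-$\size{\mathcal U_k}$ decomposition $\bar E_k(\mathcal U_k,e_k)=[v_k^{(L^*)}+h_k+u_k(1-e_k)]\size{\mathcal U_k}$ (your $w_k^{(1)}=v_k+h_k$, $w_k^{(0)}=v_k+h_k+u_k$), and closes the algebra with the same identity $v_{k+1}^{(L^*)}(1-2\rho_k)=\phi_s/2$ from \eqref{vkupdate}. Your forward product argument for $\mathbb P(e_{L^*}=0)$ is just a trivially equivalent rewriting of the paper's induction for the throughput part, and your two algebraic identities check out.
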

\begin{proof}
The proof is provided in Appendix C.
\end{proof}
\section{Numerical Results}
\label{numres}
In this section, we demonstrate the performance of the proposed
\emph{decoupled fractional search} (DFS) scheme and compare it with the \emph{bisection search} algorithm developed in \cite{new_benchmark} and two variants of \emph{exhaustive search}. 
{In the bisection algorithm \cite{new_benchmark} (BiS), in each beam-alignment slot the uncertainty region is divided into two regions of equal width, scanned in sequence by the BS by transmitting beacons corresponding to each region. 
Then, the UE compares the signal power (the strongest indicating alignment) and transmits the feedback to the BS. 
Since in each beam-alignment slot two sectors are scanned (each of duration $T_B$), the
total duration of the beam-alignment phase is $(2 T_B + T_F)L\;[\mathrm s]$, where $T_F$ is the feedback time.}
In \emph{conventional exhaustive search} (CES), the BS-UE scan exhaustively the entire beam space.
In the BS beam-alignment sub-phase, the
 BS searches over $N_{B}^{(BS)}$ beams covering the AoD space, while the UE receives isotropically;
 in the second UE beam-alignment sub-phase, the 
 BS transmits using the best beam found in the first sub-phase, whereas the UE searches exhaustively over 
 $N_{B}^{(UE)}$ beams covering the AoA space.
{Since the UE reports the best beam at the end of each sub-phase,
 the total duration of the beam-alignment phase is $[N_{B}^{(BS)}+N_{B}^{(UE)}]T_B+2T_F$.}
On the other hand, in the \emph{interactive exhaustive search} (IES) method, 
the UE reports the feedback at the end of each beam-alignment slot,
and each beam-alignment sub-phase terminates upon receiving an $\mathrm{ACK}$ from the UE.
Since the BS awaits for feedback at the end of each beam, the duration of the beam-alignment phase is $(T_B+T_F)
[\hat N_{B}^{(BS)}+\hat N_{B}^{(UE)}]
$, where $\hat N_B\leq N_B$ is the number of beams scanned until receiving an ACK;
{assuming the AoD/AoA is uniformly distributed over the beam space, the expected duration of 
the beam-alignment phase is then 
$\frac{1}{2}(T_B+T_F)\left[N_{B}^{(BS)}+N_{B}^{(UE)}+2\right]$.
}

\par We use the following parameters: [carrier frequency]${=}30\mathrm{GHz}$, $d{=}10 \mathrm{m}$, [path loss exponent]${=}2$, $T_{\mathrm{fr}}{=}20\mathrm{ms}$, $T_B{=}50\mu\mathrm{s}$, $T_F{=}50\mu\mathrm{s}$, $\size{\mathcal U_0}{=}[\pi]^2$, $N_0{=}-173\mathrm{dBm}$, $W_{\mathrm{tot}}{=}500 \mathrm{MHz}$, $M_{\mathrm t}{=}M_{\mathrm r}{=}128$. 

\begin{figure}
\centering
\includegraphics[width=.8\linewidth,trim={15 0 30 10},clip=true]{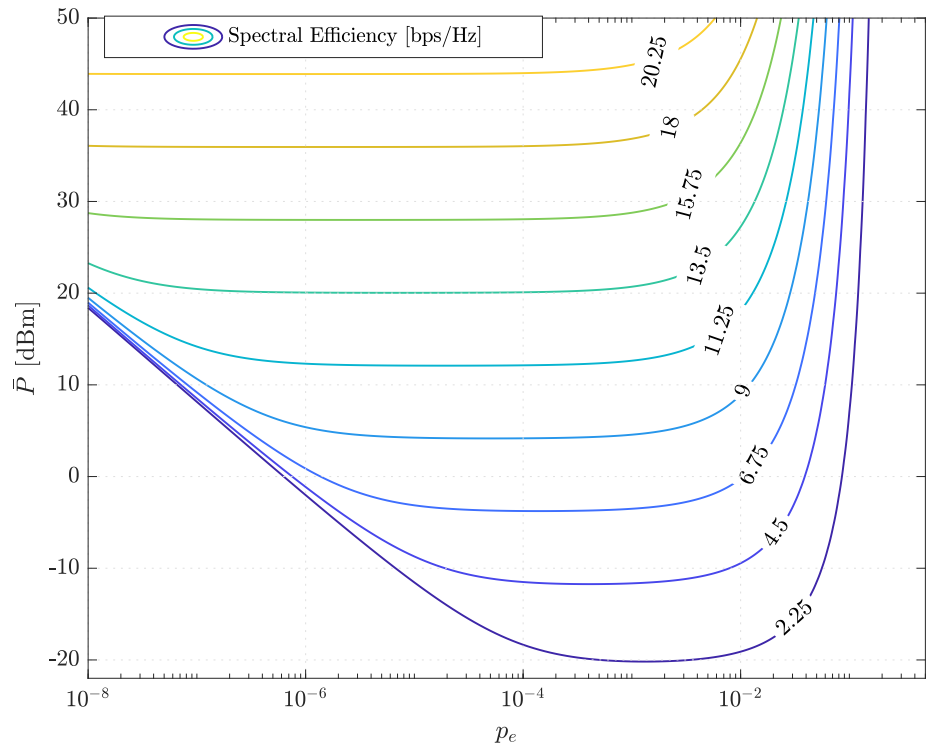}
\caption{Spectral efficiency versus beam-alignment error probability $p_e$ for DFS.}
  \label{fig:thputvspe}
\vspace{-6mm}
\end{figure}

\begin{figure}
\centering
\includegraphics[width=.8\linewidth,trim={15 0 30 10},clip=true]{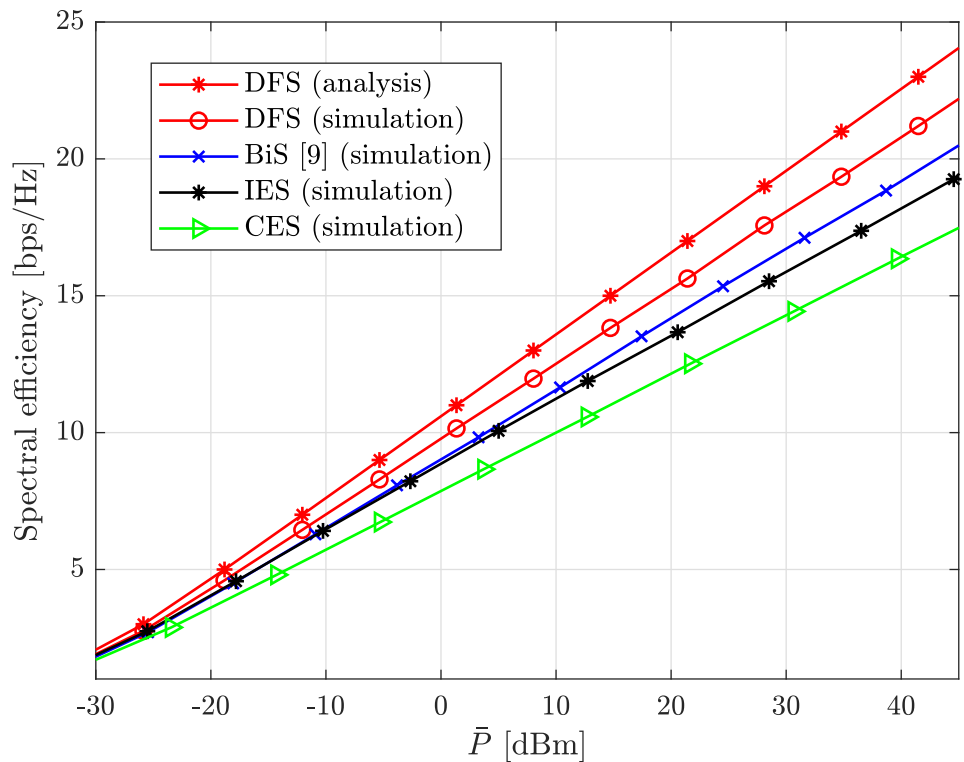}
\caption{Spectral efficiency versus average power consumption.}
  \label{fig:energy_consumption}
\vspace{-6mm}
\end{figure}

\par In Fig.~\ref{fig:thputvspe}, we depict the average power vs the probability of false-alarm and misdetection $p_e$ for different values of the spectral efficiency using expressions \eqref{eq:thr_frac} and \eqref{eq:pow_frac}. We use $\epsilon= 0.01$, and consider Rayleigh fading with no CSI at BS, corresponding to $h{\sim}\mathcal{CN}(0,1/\ell(d))$ with $\hat h{=}0$ and $\sigma_e^2{=}1/\ell(d)$. We restrict the optimization of $L$ over $L{\in}\{0,\ldots,L_{\max}\}$, to capture a maximum resolution constraint for the antenna array, where we chose $L_{\max}=14$.
  From the figure, we observe that, for a given $p_e$, as the spectral efficiency increases so does the average power consumption due to increase in the energy cost of data communication. Moreover, the figure reveals that, for a given value of spectral efficiency, there exists an optimal range of $p_e$, where power consumption is minimized. The performance degrades for $p_e$ above the optimal range due to false-alarm and misdetection errors during beam-alignment,
  causing outage in data communication;
   similarly, it degrades for $p_e$ below the optimal range due to an increased power consumption of beam-alignment.

\par\label{p10}
In Fig.~ \ref{fig:energy_consumption}, we plot the results of a Monte-Carlo simulation with analog beams
 generated using the algorithm in \cite{rate_maximization2}. 
In this case, we obtain $\phi_s{=}-94\mathrm{dBm}$ with $p_{\mathrm{fa}}{=}p_{\mathrm{md}}{=}10^{-5}$. For BiS and DFS we set $L_{\max}=10$ to capture a maximum resolution constraint for the antenna array; for the exhaustive search methods, we choose $N_B^{(BS)}=N_B^{(UE)}{=}32$. The performance gap between the analytical and the simulation-based curves for DFS is attributed to the fact that the beams used in the simulation have non-zero side-lobe gain and non-uniform main-lobe gain, as opposed to the "sectored" beams used in the analytical model. This results in false-alarm, misdetection errors, and leakage, which lead to some performance degradation. 
\emph{However, the simulation is in line with the analytical curve, and exhibits superior performance compared to the other schemes, thus demonstrating that the analysis using the sectored gain model provides useful insights for practical design.}
\begin{figure}
\centering
\includegraphics[width=.8\linewidth,trim={15 0 30 10},clip=true]{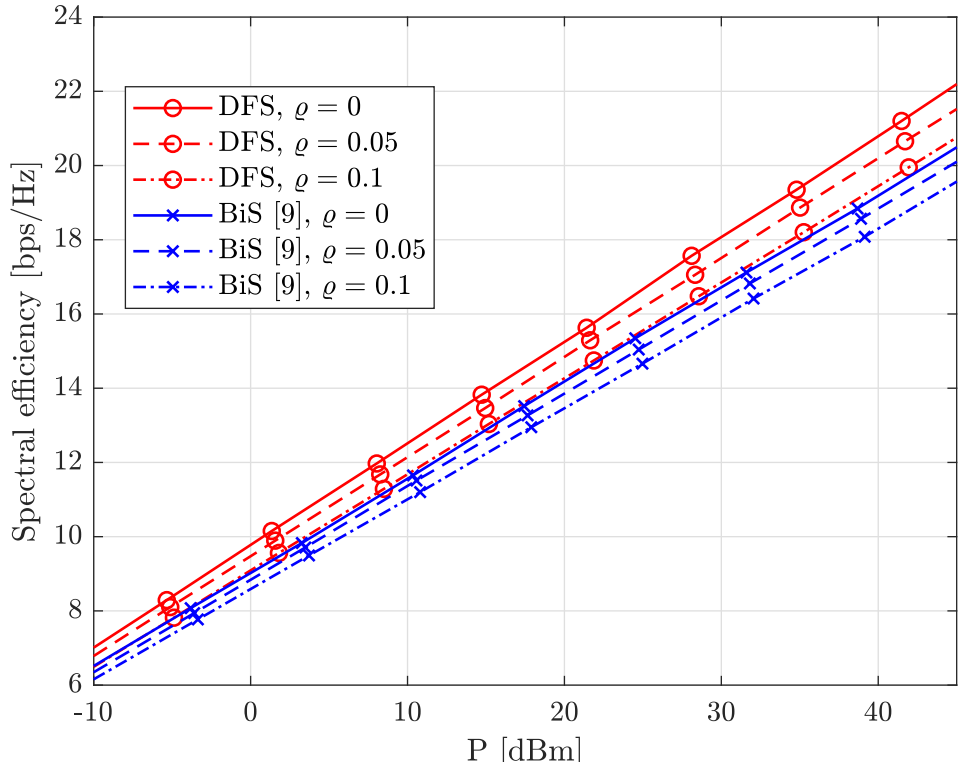}
\caption{Performance degradation with multi-cluster channel ($K=2$).}
\label{fig:mpath}
\vspace{-6mm}
\end{figure}
For instance, to achieve a spectral efficiency of $15 \mathrm{bps}/\mathrm{Hz}$, BiS \cite{new_benchmark} requires $4\mathrm{dB}$ more average power than DFS,
mainly due to the time and energy overhead of scanning two sectors in each beam-alignment slot,
whereas IES and CES require $7.5\mathrm{dB}$ and $14\mathrm{dB}$ more power, respectively.
The performance degradation of IES and CES is due to
the exhaustive search of the best sector, which demands a huge time overhead.
Indeed, IES outperforms CES since it stops beam-alignment once a strong beam is detected.
 \par\label{page11}{
 So far in our analysis, we assumed a channel with a single cluster of rays, see \eqref{eq:channel_matrix}.
 In Fig.~\ref{fig:mpath}, we depict the performance of DFS and BiS \cite{new_benchmark} in a multi-cluster channel ($K=2$ in \eqref{eq:channel_matrixx}), with the weakest cluster having a fraction $\varrho$ of the total energy, $0\leq\varrho\leq 0.1$. It can be seen that the performance of both DFS and BiS degrade as $\varrho$ increases, since a portion of the energy is lost in the weaker cluster, and
 the algorithms may misdetect the weaker cluster instead of the strongest one.
For example, for spectral efficiency of $15\mathrm{bps}/\mathrm{Hz}$, both schemes exhibit $\sim\!2 \mathrm{dB}$ and $\sim\!5 \mathrm{dB}$ performance loss at $\varrho = 5\%$ and $\varrho = 10\%$, respectively, compared to $\varrho=0$ (single cluster). 
 However, DFS consistently outperforms BiS, with a gain of $\sim\!3.5 \mathrm{dB}$. This evaluation demonstrates the robustness of the proposed algorithm in multi-cluster scenarios.}
\section{Conclusions}
\label{sec:concl}
In this paper, we designed an optimal interactive beam-alignment scheme,  with the goal of
minimizing power consumption under a rate constraint.
For the case of perfect detection and uniform prior on AoD/AoA,
 we proved that the optimal beam-alignment protocol has fixed beam-alignment duration,
and that a \emph{decoupled fractional search} method is optimal. Inspired by this scheme, we proposed a heuristic policy for the case of a non-uniform prior, 
and showed that the uniform prior is the worst-case scenario. Furthermore, we
investigated the impact of beam-alignment errors on the average throughput and power consumption. The numerical results depicted the superior performance of our proposed scheme, with up to $4 \mathrm{dB}$, $7.5 \mathrm{dB}$, and $14 \mathrm{dB}$  gain compared to a state-of-the-art bisection search, conventional exhaustive search and interactive exhaustive search policies, respectively, and robustness against multi-cluster channels.
\appendices

\newpage

\section{Proof of Lemma \ref{lemma:sufficient_statistcs}}
\label{proofoflemlemma:sufficient_statistcs}
\begin{proof}
We need the following lemma.
\begin{lemma}
\label{lem1}
Given $f_k,\mathbf a_k,C_k$, the belief $f_{k+1}$ is computed as
\begin{align}
\label{eq:belief_update}
f_{k+1}(\bm \theta)
=
\begin{cases}
\frac{\chi_{\mathcal{B}_k}(\bm \theta )}{\int_{\mathcal{B}_k}f_k(\tilde {\bm \theta})\mathrm d\tilde{\bm\theta} }f_k(\bm \theta), &k\in \mathcal I_s,C_k=\mathrm{ACK},\\
\frac{1-\chi_{\mathcal{B}_k}(\bm \theta )}{1-\int_{\mathcal{B}_k}f_k(\tilde {\bm \theta})\mathrm d\tilde{\bm\theta} }f_k(\bm \theta), &k\in \mathcal I_s,C_k=\mathrm{NACK},\\
f_{k}(\bm \theta), &k\in \mathcal I_d,C_k=\mathrm{NULL}.
\end{cases}
\end{align}
\end{lemma}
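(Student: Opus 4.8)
The statement to prove is Lemma~\ref{lem1}, the one-step Bayesian belief update. The plan is to treat the three cases in \eqref{eq:belief_update} separately and in each case compute the posterior density of $\bm\theta$ given the new feedback $C_k$, using that the pair $(f_k,\mathbf a_k)$ already summarizes the history $\mathcal H^k$ (so $f_k$ plays the role of the prior at stage $k$). First I would fix a beam-alignment slot $k\in\mathcal I_s$ with action $\mathbf a_k=(1,\mathcal B_k,0)$, and recall from \eqref{pack} that, conditionally on the current belief, $\mathbb P(C_k=\mathrm{ACK}\mid f_k,\mathbf a_k)=\int_{\mathcal B_k}f_k(\tilde{\bm\theta})\,\mathrm d\tilde{\bm\theta}$ and that $C_k=\mathrm{ACK}$ is (in the error-free regime) equivalent to the event $\{\bm\theta\in\mathcal B_k\}$.

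For the $\mathrm{ACK}$ case I would apply Bayes' rule in density form:
\begin{align}
f_{k+1}(\bm\theta)
= \frac{\mathbb P(C_k=\mathrm{ACK}\mid \bm\theta,\mathbf a_k)\,f_k(\bm\theta)}
{\int \mathbb P(C_k=\mathrm{ACK}\mid \tilde{\bm\theta},\mathbf a_k)\,f_k(\tilde{\bm\theta})\,\mathrm d\tilde{\bm\theta}}
= \frac{\chi_{\mathcal B_k}(\bm\theta)\,f_k(\bm\theta)}{\int_{\mathcal B_k} f_k(\tilde{\bm\theta})\,\mathrm d\tilde{\bm\theta}},
\end{align}
since $\mathbb P(C_k=\mathrm{ACK}\mid\bm\theta,\mathbf a_k)=\chi_{\mathcal B_k}(\bm\theta)$ under perfect detection. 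The $\mathrm{NACK}$ case is symmetric: $\mathbb P(C_k=\mathrm{NACK}\mid\bm\theta,\mathbf a_k)=1-\chi_{\mathcal B_k}(\bm\theta)$, and the normalizing constant is $1-\int_{\mathcal B_k}f_k(\tilde{\bm\theta})\,\mathrm d\tilde{\bm\theta}$, giving the second line of \eqref{eq:belief_update}. For a data-communication slot $k\in\mathcal I_d$, the feedback is $C_k=\mathrm{NULL}$ deterministically (it carries no information about $\bm\theta$, as the ACK/NACK of the data layer is discarded by design), so the likelihood factor is constant in $\bm\theta$ and Bayes' rule leaves the belief unchanged: $f_{k+1}=f_k$.

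I do not anticipate a genuine obstacle here — the content is a direct application of Bayes' rule — but the one point that requires care, and which I would make explicit, is the claim that $(f_k,D_k)$ (equivalently $f_k$ alone for the belief part) is a sufficient statistic, i.e.\ that conditioning on the full history $\mathcal H^k$ is the same as conditioning on $f_k$. This is what lets me invoke \eqref{pack} with $f_k$ as the effective prior; it holds because, by the inductive construction, $f_k$ is exactly the posterior of $\bm\theta$ given $\mathcal H^k$, and the feedback mechanism \eqref{pack} depends on the past only through the current belief and the current action. Once Lemma~\ref{lem1} is established, Lemma~\ref{lemma:sufficient_statistcs} follows by induction on $k$: the base case $f_0=f_0(\bm\theta)/\!\int_{\mathcal U_0}f_0$ on $\mathcal U_0$ is immediate, and if $f_k$ has the form \eqref{eq:belief_construct1} with support $\mathcal U_k$, then plugging this into \eqref{eq:belief_update} and simplifying the nested indicators and normalizers — $\chi_{\mathcal B_k}\chi_{\mathcal U_k}=\chi_{\mathcal U_k\cap\mathcal B_k}$ in the ACK case, $(1-\chi_{\mathcal B_k})\chi_{\mathcal U_k}=\chi_{\mathcal U_k\setminus\mathcal B_k}$ in the NACK case — yields \eqref{eq:belief_construct1} at $k+1$ with $\mathcal U_{k+1}$ updated as in \eqref{eq:supp_update}, while telescoping the normalizing integrals against $f_0$ collapses them to $\int_{\mathcal U_{k+1}}f_0$.
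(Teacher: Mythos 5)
Your proposal is correct and follows essentially the same route as the paper's proof: Bayes' rule with the deterministic likelihood $\chi_{\mathcal B_k}(\bm\theta)$ for $\mathrm{ACK}$, $1-\chi_{\mathcal B_k}(\bm\theta)$ for $\mathrm{NACK}$, and a constant likelihood for $\mathrm{NULL}$, together with the observation that $f_k$ serves as the prior at stage $k$ because it is the posterior given the history (the paper states this via the conditional-independence steps (a)--(b) in its derivation). No gaps; the additional induction sketch for Lemma~\ref{lemma:sufficient_statistcs} matches the paper's argument as well.
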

\begin{proof}
We denote AoD/AoA random variables pair by $\bm\Theta \triangleq (\Theta_{\mathrm t},\Theta_{\mathrm r} )$ and its realization by $\bm \theta \triangleq (\theta_{\mathrm t},\theta_{\mathrm r} )$.
First note that for $0\leq k\leq N-1$, we have
\begin{align*}
\label{eq:markov_property}
f_{k+1}(\bm \theta) &= f(\bm \Theta  = \bm \theta |\mathbf a^{k},C^{k-1},C_k=c_k)\\
&\stackrel{(a)}{=} \frac{\mathbb P (C_k=c_k|A^{k},C^{k-1}, \bm \Theta =  \bm \theta) f(\bm \Theta = \bm\theta|\mathbf a^{k},C^{k-1})}{\int_{-\pi}^{\pi} \mathbb P (C_k=c_k|A^{k},C^{k-1}, \bm\Theta = \tilde{\bm\theta}) f(\bm\Theta = \tilde{\bm\theta} |\mathbf a^{k},C^{k-1})d \tilde {\bm \theta}}\\
&\stackrel{(b)}{=}\frac{\mathbb P (C_k=c_k|\mathbf a_{k},\bm \Theta =  \bm \theta ) f_k(\bm\theta)}{\int_{-\pi}^{\pi} \mathbb P (C_k=c_k|\mathbf a_{k},\bm \Theta = \tilde {\bm \theta} ) f_k(\tilde{\bm\theta} ) d\tilde{\bm\theta}} \numberthis
\end{align*}
where we have used Bayes' rule in step (a); (b) is obtained by using the fact that, given $\bm \Theta = \bm \theta$, $C_k$ is a deterministic function of $(\mathbf a_k,\bm\theta)$ , independent of $\mathbf a^{k-1},C^{k-1}$; additionally, we used the fact that $f_k(\bm \theta ) =  f(\bm \Theta = \bm \theta |\mathbf a^{k},C^{k-1})$ since $\bm \Theta $ is independent of $\mathbf a_{k}$ given $(\mathbf a_{k-1},C^{k-1})$ .
Now consider the case $k \in \mathcal I_s$, i.e., $\xi_{k}= 1$ and $C_k=\mathrm{ACK}$. Then, we can use (\ref{eq:markov_property}) to get
\begin{align*}
f_{k+1}(\bm \theta ) &= \frac{\mathbb P(C_k=\mathrm{ACK}|\mathcal B_{\mathrm t,k},\mathcal B_{\mathrm r,k},\xi_k=1,\bm  \Theta = \bm \theta )f_k(\bm \theta  )}{\int_{-\pi}^{\pi}\mathbb P(C_k=\mathrm{ACK}|\mathcal B_{\mathrm t,k},\mathcal B_{\mathrm r,k},\xi_k=1,\bm  \Theta = \tilde{\bm \theta}  ) f_k( \tilde{\bm \theta}  )  d\tilde{\bm \theta} }\\
&= \frac{\chi_{\mathcal{B}_k }(\bm \theta)}{\int_{\mathcal{B}_k }f_k(\tilde{\bm \theta})\mathrm d\tilde{\bm \theta} }f_k(\bm \theta),
\numberthis
\end{align*}
where $\mathcal B_k \triangleq \mathcal B_{\mathrm t,k} \times \mathcal B_{\mathrm r,k}$.
Similarly, for $k \in \mathcal I_s$ and $C_k=\mathrm{NACK}$, (\ref{eq:markov_property}) can be used to get
\begin{align*}
f_{k+1}(\bm \theta) = \frac{1-\chi_{\mathcal{B}_{k}}(\bm \theta )}{1-\int_{\mathcal{B}_k} f_k(\tilde{\bm \theta })\mathrm d\tilde{\bm \theta }}f_k(\bm \theta).\numberthis
 \end{align*}
For $k\in \mathcal I_d$, $\mathbb P(C_k=\mathrm{NULL}|\mathcal B_{\mathrm t,k},\mathcal B_{\mathrm r,k},\xi_k=0, \bm \Theta = \bm \theta )=1$. Therefore, we use (\ref{eq:markov_property}) to get
\begin{align}
f_{k+1}(\bm \theta ) =f_k(\bm \theta).
\end{align}
Thus we have proved the Lemma.
\end{proof}

We prove the lemma by induction. The hypothesis holds trivially for $k=0$. Let us assume that it holds in slot $k\geq 0$, we show that it holds in slot $k+1$ as well. First, let us consider the case when $k \in \mathcal I_{\mathrm s}$ and $C_k=\mathrm{ACK}$. By using (\ref{eq:belief_update})
along with the induction hypothesis,
we get
\begin{align*}
\label{eq:bs_ba_belief}
f_{k+1}(\bm \theta) 
&= \frac{f_0(\bm \theta ) }{\int_{-\pi}^{\pi}\chi_{{\mathcal U}_k \cap \mathcal{B}_k } (\tilde{\bm \theta} ) f_0(\tilde{\bm \theta} )d\tilde{\bm \theta} }\chi_{{\mathcal U}_k \cap \mathcal{B}_k } (\bm \theta).
\numberthis
\end{align*}
By substituting  ${\mathcal U}_{k+1}\equiv {\mathcal U}_{k} \cap \mathcal B_{k}$, we get (\ref{eq:belief_construct1}).
\par Next, we focus on the case when $k\in \mathcal I_{ \mathrm s}$ and $C_k=\mathrm{NACK}$. In this case,  (\ref{eq:belief_update}) yields
\begin{align*}
f_{k+1}(\theta_{\mathrm t},\theta_{\mathrm r})
= \frac{f_0(\bm \theta ) }{\int_{-\pi}^{\pi}\chi_{{\mathcal U}_k \setminus \mathcal{B}_k } (\tilde{\bm \theta} ) f_0(\tilde{\bm \theta} )d\tilde{\bm \theta} }\chi_{{\mathcal U}_k \setminus \mathcal{B}_k } (\bm \theta),
\numberthis
\end{align*}
where we used the fact that $\chi_{[-\pi,\pi]^2 \setminus{ \mathcal A}}(x) \equiv 1-\chi_{\mathcal A}(x)$. By observing that ${\mathcal U}_{k+1}\equiv {\mathcal U}_{k} \setminus{\mathcal B_{k}}$, we get the expression for $f_{k+1}(\bm \theta )$, as given in (\ref{eq:belief_construct1}).

\par Finally, for $k \in \mathcal I_d$, (\ref{eq:belief_update}) yields $f_{k+1}(\bm \theta) =  f_k(\bm \theta)$. Therefore, from the induction hypothesis it follows that $f_{k+1}(\bm \theta)$ is given by  (\ref{eq:belief_construct1}) with ${\mathcal U}_{k+1}={\mathcal U}_{k}$. Hence, the lemma is proved.
\end{proof}

\section{Supplementary Lemma \ref{optbeam}}
\label{proofofoptbeam}
\begin{lemma}
\label{optbeam}
The optimal 2D beam satisfies
\begin{align}
\label{eq:BS_UE_beams_listen}
\begin{cases}
&\mathcal B_{k} \subset \mathcal U_{k}, \forall k \in \mathcal I_{\mathrm s} \\
&\mathcal B_{k} \subseteq \mathcal U_{k}, \forall k \in \mathcal I_{\mathrm d}.
\end{cases}
\end{align}
\end{lemma}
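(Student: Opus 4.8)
The plan is a one-step exchange (domination) argument anchored to Bellman's equation \eqref{eq:cost_to_go_iter}: a beam reaching outside the current support is strictly wasteful, and a beam covering the whole support is useless in a beam-alignment slot. Fix a state $(\mathcal U_k,D_k)$ with $\size{\mathcal U_k}>0$ (the case $\size{\mathcal U_k}=0$ being trivial) and an arbitrary action $\mathbf a_k=(\xi_k,\mathcal B_k,R_k)$, $\mathcal B_k\subseteq[-\pi,\pi]^2$. First I would split $\mathcal B_k$ into the disjoint union $\mathcal B_k^{\mathrm{in}}\cup\mathcal B_k^{\mathrm{out}}$, $\mathcal B_k^{\mathrm{in}}=\mathcal B_k\cap\mathcal U_k$, $\mathcal B_k^{\mathrm{out}}=\mathcal B_k\setminus\mathcal U_k$, so that $\size{\mathcal B_k}=\size{\mathcal B_k^{\mathrm{in}}}+\size{\mathcal B_k^{\mathrm{out}}}$. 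Because $\supp(f_k)=\mathcal U_k$ by Lemma \ref{lemma:sufficient_statistcs}, the feedback law \eqref{pack} gives $\mathbb{P}(C_k=\mathrm{ACK}\mid f_k,\mathbf a_k)=\int_{\mathcal B_k}f_k=\int_{\mathcal B_k^{\mathrm{in}}}f_k$ and the support update \eqref{eq:supp_update} gives $\mathcal U_k\cap\mathcal B_k=\mathcal B_k^{\mathrm{in}}$, $\mathcal U_k\setminus\mathcal B_k=\mathcal U_k\setminus\mathcal B_k^{\mathrm{in}}$. Hence the clipped action $\tilde{\mathbf a}_k=(\xi_k,\mathcal B_k^{\mathrm{in}},R_k)$ induces exactly the same joint law of feedback and next state $(\mathcal U_{k+1},D_{k+1})$ as $\mathbf a_k$; the continuation value is therefore identical and only the per-stage cost \eqref{eq:cost1} can differ.

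Next I would compare per-stage costs. For $\xi_k=1$ the cost is $\phi_s\size{\mathcal B_k}\ge\phi_s\size{\mathcal B_k^{\mathrm{in}}}$; for $\xi_k=0$ the cost in \eqref{eq:cost1} depends on the beam only through the factor $\size{\mathcal B_k}$ in the numerator and through $\mathbb{P}(\bm\theta\in\mathcal B_k\mid\mathcal U_k)=\mathbb{P}(\bm\theta\in\mathcal B_k^{\mathrm{in}}\mid\mathcal U_k)$ in the denominator $\bar{F}_\gamma^{-1}(\cdot)$, so only the numerator changes and $\size{\mathcal B_k}\ge\size{\mathcal B_k^{\mathrm{in}}}$. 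In both cases $c(\mathbf a_k;\mathcal U_k,D_k)\ge c(\tilde{\mathbf a}_k;\mathcal U_k,D_k)$, strictly whenever $\size{\mathcal B_k^{\mathrm{out}}}>0$, so by \eqref{eq:cost_to_go_iter} $\mathbf a_k$ is (strictly, when $\size{\mathcal B_k^{\mathrm{out}}}>0$) dominated by $\tilde{\mathbf a}_k$. Thus no optimal action has $\size{\mathcal B_k\setminus\mathcal U_k}>0$, i.e.\ $\mathcal B_k\subseteq\mathcal U_k$. This is the line for $k\in\mathcal I_{\mathrm d}$ in \eqref{eq:BS_UE_beams_listen}, and half of the line for $k\in\mathcal I_{\mathrm s}$.

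To get the strict containment in beam-alignment slots I would rule out $\mathcal B_k=\mathcal U_k$. If $\mathcal B_k\subseteq\mathcal U_k$ with $\size{\mathcal U_k\setminus\mathcal B_k}=0$, then by \eqref{pack} $C_k=\mathrm{ACK}$ with probability $\int_{\mathcal B_k}f_k=1$, and by \eqref{eq:supp_update} $\mathcal U_{k+1}=\mathcal U_k\cap\mathcal B_k$ equals $\mathcal U_k$ up to a null set, so $f_{k+1}=f_k$ (Lemma \ref{lemma:sufficient_statistcs}) and $D_{k+1}=D_k$: the slot does not change the state but costs $\phi_s\size{\mathcal U_k}>0$. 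Compare with the empty-beam action $(1,\emptyset,0)$, admissible in the extended action space: it gives $C_k=\mathrm{NACK}$ w.p.\ $1$, $\mathcal U_{k+1}=\mathcal U_k\setminus\emptyset=\mathcal U_k$, $D_{k+1}=D_k$, the same next state, at zero cost. By \eqref{eq:cost_to_go_iter}, $\mathcal B_k=\mathcal U_k$ is strictly suboptimal, so an optimal beam-alignment beam obeys $\mathcal B_k\subset\mathcal U_k$ (proper, i.e.\ $\size{\mathcal U_k\setminus\mathcal B_k}>0$, the degenerate $\mathcal B_k=\emptyset$ included), which is the line for $k\in\mathcal I_{\mathrm s}$.

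The obstacles are of a bookkeeping nature. Every inclusion must be read modulo Lebesgue-null sets, $\mathcal B_k\subseteq\mathcal U_k$ meaning $\size{\mathcal B_k\setminus\mathcal U_k}=0$ and $\mathcal B_k\subset\mathcal U_k$ meaning in addition $\size{\mathcal U_k\setminus\mathcal B_k}>0$; this suffices because the cost, feedback, and transition in \eqref{eq:cost1}, \eqref{pack}, \eqref{eq:supp_update} depend on a beam only through its measure and its intersection/difference with $\mathcal U_k$. One also needs $\size{\mathcal U_k}>0$ at every state reached with positive probability, which follows by induction on \eqref{eq:supp_update} from $\size{\mathcal U_0}>0$, since a transition into a measure-zero set ($\mathcal B_k$ with $\size{\mathcal B_k}=0$, or $\mathcal U_k\setminus\mathcal B_k$ with $\size{\mathcal B_k}=\size{\mathcal U_k}$) has probability zero. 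Finally, the exchange must be carried out at the level of the recursion \eqref{eq:cost_to_go_iter} rather than the one-stage cost alone — the decisive point, secured in the first paragraph, being that the clipped (resp.\ emptied) action keeps the next-state distribution unchanged, so the continuation term is identical and the comparison collapses to the immediate costs.
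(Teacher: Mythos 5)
Your proof is correct and follows essentially the same route as the paper's Appendix B: clip the beam to $\tilde{\mathcal B}_k=\mathcal B_k\cap\mathcal U_k$, observe via \eqref{pack} and \eqref{eq:supp_update} that the feedback law and next-state distribution (hence the continuation value in \eqref{eq:cost_to_go_iter}) are unchanged while the per-stage cost \eqref{eq:cost1} strictly decreases, and then exclude $\mathcal B_k=\mathcal U_k$ in beam-alignment slots by comparison with the empty beam $(1,\emptyset,0)$, which yields the same next state at zero cost. Your added care about measure-zero sets and the positivity of $\size{\mathcal U_k}$ is a harmless refinement of the same argument.
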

\begin{proof}
We prove this lemma by contradiction. First, we consider the beam-alignment action $\mathbf a_k = (1,\mathcal B_k, 0)$ such that $\mathcal B_k \setminus \mathcal U_k \neq \emptyset$, i.e.,  ${\mathcal B}_{k}$ has non-empty support outside of ${\mathcal U}_{k}$.
Let  $\tilde{\mathbf a}_k = (1, \tilde{\mathcal B_k}, 0)$ be new beam-alignment action such that
$\tilde{\mathcal B}_{k}={\mathcal U}_{k}\cap{\mathcal B}_{k}$, i.e.,
$\tilde{\mathcal B}_{k}$ is constructed by restricting ${\mathcal B}_{k}$ within the belief support
${\mathcal U}_{k}$. Using
\eqref{eq:cost_to_go_iter} , we get
\begin{align*}
\label{eq:st1}
\hat V_k(\mathbf a_k; \mathcal U_k, D_k) =& \phi_s \size{\mathcal B_k} + \mathbb P (C_k =\mathrm{ACK}|\mathcal U_k , \mathcal B_k)  \hat V_{k+1}^*(\mathcal U_k \cap \mathcal B_k, D_k) \\
&+  \mathbb P (C_k =\mathrm{NACK}|\mathcal U_k , \mathcal B_k) \hat V_{k+1}^*(\mathcal U_k \setminus \mathcal B_k, D_k). \numberthis
\end{align*}
Using the fact that $\tilde{\mathcal B}_{k}={\mathcal U}_{k}\cap{\mathcal B}_{k}$, hence ${\mathcal U}_{k}\setminus {\mathcal B}_{k} = {\mathcal U}_{k}\setminus \tilde{\mathcal B}_{k}$, it follows that  $\mathbb P (C_k =c |\mathcal U_k , \mathcal B_k) = \mathbb P (C_k =c |\mathcal U_k ,\tilde{\mathcal B}_k),\; \forall c\in \{ \mathrm{ACK}, \mathrm{NACK}\}  $. Therefore, we rewrite  \eqref{eq:st1} as
\begin{align*}
\label{eq:st1}
&\hat V_k(\mathbf a_k; \mathcal U_k, D_k)
= \phi_s \size{\tilde{\mathcal B}_k }
+\phi_s \size{\mathcal U_k\setminus{\mathcal B}_k }
\\&
  + \mathbb P (C_k =\mathrm{ACK}|\mathcal U_k , \tilde{\mathcal B}_k)  \hat V_{k+1}^*(\tilde{\mathcal B}_k , D_k) 
+  \mathbb P (C_k =\mathrm{NACK}|\mathcal U_k , \tilde{\mathcal B}_k) \hat V_{k+1}^*(\mathcal U_k \setminus \tilde{\mathcal B}_k, D_k) \\
&> 
\hat V_k(\tilde{\mathbf a}_k; \mathcal U_k, D_k)
, \numberthis
\end{align*}
where we have used $\size{\mathcal U_k\setminus{\mathcal B}_k } >0$. Thus $\mathbf a_k$ is suboptimal, implying that optimal beam-alignment beam satisfy ${\mathcal B}_k \subseteq \mathcal U_k$.
Now, let $\mathcal B_k=\mathcal U_k$, and consider a new action with beam $\tilde{\mathcal B}_k=\emptyset$.
 Using a similar approach, it can be shown that $\mathcal B_k=\mathcal U_k$
is suboptimal with respect to $\tilde{\mathcal B}_k$, hence we must have ${\mathcal B}_k \subset \mathcal U_k$. 
\par To prove the lemma for  $k \in \mathcal I_d$, consider the action $\mathbf a_k =(0,\mathcal B_k,R_k)$  such that $ \mathcal B_k \setminus \mathcal U_k \neq \emptyset$. Now consider a new action $\tilde{\mathbf a}_k =(0,\tilde{\mathcal B}_k,R_k)$  such that $ \tilde{\mathcal B}_k = {\mathcal B}_k \cap \mathcal U_k$. It can be observed that $\mathbb{P}(\bm \theta \in \mathcal B_{k} |\mathcal U_k) =  \mathbb{P}(\bm \theta \in \tilde{\mathcal B}_{k} |\mathcal U_k)$. The cost-to-function for the action $\mathbf a_k$ is given as
\begin{align*}
\hat V_k(\mathbf a_k; \mathcal U_k, D_k) &= \frac{\psi_d(R_k)}{\bar{F}_{\gamma}^{-1}\left( \frac{1-\epsilon}{\mathbb{P}(\bm \theta \in{\mathcal B}_{k} |\mathcal U_{k})}\big| \hat\gamma   \right)}  \size{\mathcal B_k} + \hat V_{k+1}^* (\mathcal U_k, D_k-TR_k)\\
&>\frac{\psi_d(R_k)}{\bar{F}_{\gamma}^{-1}\left( \frac{1-\epsilon}{\mathbb{P}(\bm \theta\in \tilde{\mathcal B}_{k} |\mathcal U_{k})}  \big| \hat\gamma \right)}  \size{\tilde{\mathcal B}_k} + \hat V_{k+1}^* (\mathcal U_k, D_k-TR_k)\\
&= \hat V_k(\tilde{\mathbf a}_k; \mathcal U_k, D_k),
\end{align*}
hence we must have $\mathcal B_k\subseteq\mathcal U_k$. The lemma is thus proved.
\end{proof}
\section{ Proof of Theorem \ref{th:opt_comm_beams}}
\label{app:proof_of_opt_comm}
\begin{proof}
For a data communication action $\mathbf a_k{\in}{\mathcal A}_{\mathrm{ext}}(\mathcal U,D)$,
the state transition is independent of $\mathcal B_{k}$ since 
 ${\mathcal U}_{k+1}{=}{\mathcal U}_{k}$ and $D_{k+1}{=}D_k{-}R_kT$. Hence, the optimal beam given $R_k$ is obtained by minimizing $c(\mathbf a_k;{\mathcal U}_k , D_k)$ in
 (\ref{eq:cost1}), yielding
\begin{align*}
 c(\mathbf a_k;{\mathcal U}_k , D_k)
&\stackrel{(a)}{=} {\psi_d(R_k)}\frac{\size{{\mathcal B}_k} }{{\bar F}_{\gamma}^{-1}\left(\frac{(1-\epsilon){\size{{\mathcal U}_{k}}}}{\size{{\mathcal B}_{k}}} \big|\hat\gamma\right) }
\\&
 \label{eq:comm_val_ftn_gamma_deter}
\stackrel{(b)}{\geq }  {\psi_d(R_k)}{(1-\epsilon)} \size{{\mathcal U}_k}\frac{1}{q^*{\bar F}_{\gamma}^{-1}(q^*|\hat\gamma)}, \numberthis
\end{align*}
where (a) follows from $\mathbb{P}(\bm \theta{\in}\mathcal B_k |{\mathcal U}_k, \mathbf a_k) {=}\frac{\size{\mathcal B_k} }{\size{{\mathcal U}_k }}$, with
 $q{\triangleq}(1{-}\epsilon) \frac{\size{{\mathcal U}_k}}{\size{\mathcal B_k}}{\leq}1$ to enforce the $\epsilon$-outage constraint;
 (b) follows by maximizing  $q {\bar F}_{\gamma}^{-1}(q|\hat\gamma)$
 over $q{\in}[1{-}\epsilon,1]$.  Equality holds in  (b)
 if 
 $\size{\mathcal B_k }{=}\vartheta\size{\mathcal U_k} $, with $\vartheta{=}(1{-}\epsilon)/q^*$ and $q^*$ as in the statement. The theorem is thus proved.
\end{proof}
\vspace{-5mm}
\section{Proof of Theorem \ref{thm:decoupled}}
\label{proof_of_thm_decoupled}
\begin{proof}
Note that, if this policy satisfies
 $\mathcal B_k{\equiv}\mathcal B_{\mathrm t, k}{\times}\mathcal B_{\mathrm r, k}{\subseteq} \mathcal U_k\equiv\supp(f_k)$,
 along with the appropriate fractional values $\size{\mathcal B_k}/\size{\mathcal U_k}$,
then it is optimal since it satisfies all the conditions of Theorems~\ref{th:opt_comm_beams}
and \ref{optimalpoliuniform}.
We  now verify these conditions.
Since  $\mathcal B_{\mathrm t, k}\subseteq\mathcal U_{\mathrm t, k}$ and
$\mathcal B_{\mathrm r, k}\subseteq\mathcal U_{\mathrm r, k}$,
 it is sufficient to prove that $\mathcal U_{\mathrm t, k}\times\mathcal U_{\mathrm r, k}\equiv\mathcal U_k,\forall k$.
Indeed, $\mathcal U_0 \equiv \mathcal U_{\mathrm t, 0} \times \mathcal U_{\mathrm r,0}$.  
By induction, assume  that $\mathcal U_{k}\equiv \mathcal U_{\mathrm t,k}\times\mathcal U_{\mathrm r, k}$. Then, for $\beta_k = 1$ (a similar result holds for $\beta_k=2$), using \eqref{eq:supp_update} we get
\begin{align}
\mathcal U_{k+1} = \begin{cases}
( \mathcal U_{\mathrm t, k} \cap  \mathcal B_{t, k} ) \times  \mathcal U_{r, k}, &\text{ if } C_{k}= \mathrm{ACK},\\
( \mathcal U_{\mathrm t, k} \setminus \mathcal B_{t, k} ) \times  \mathcal U_{r, k},  &\text{ if } C_{k}= \mathrm{NACK}.
\end{cases} 
\end{align}
By letting $\mathcal U_{\mathrm r, k}{\equiv}\mathcal U_{\mathrm r, k-1}$,  $\mathcal U_{\mathrm t, k}{\equiv}\mathcal U_{\mathrm t, k-1}{\cap}\mathcal B_{t, k-1}$ if $C_k{=} \mathrm{ACK}$ and  $\mathcal U_{\mathrm t, k}{\equiv}\mathcal U_{\mathrm t, k-1} {\setminus}\mathcal B_{t, k-1}$ if $C_k{=}\mathrm{NACK}$, we obtain $\mathcal U_k{\equiv}\mathcal U_{\mathrm t, k}{\times}\mathcal U_{\mathrm r, k}$. This policy is then optimal. 
Finally, \eqref{Powunif} is obtained by using the relation between power consumption and value function.
 Thus, we have proved the theorem.
\end{proof}
\section{Proof of Theorem \ref{th:perf_with_errors}}
\label{app:proof_of_th_perf_with_errors}
\begin{proof}
We prove it by induction using the DP updates.
Let  $\bar T_{k}(\mathcal U_{k},e_{k})$ be the \emph{throughput-to-go} function from
state $(\mathcal U_k,e_k)$ in slot $k\leq L^*$.
We prove by induction that
\begin{align} 
\nonumber
\bar T_{k}(\mathcal U_{k},e_{k})
=&
(1-e_{k})(1-\epsilon) R_{\min}
\\&\prod_{j=k}^{L^*-1}\left[\left(1-\rho_k\right)(1-p_{\mathrm{fa}})+\rho_k(1-p_{\mathrm{md}}) \right].
\end{align}
Then, \eqref{eq:thr_frac} follows from 
$\bar T_{\mathrm{err}}{=}\bar T_{0}(\mathcal U_{0},0)$. The induction hypothesis holds at $k{=}L^*$, since 
$\bar T_{L^*}(\mathcal U_{L^*},e_{L^*})=(1{-}e_{L^*})(1{-}\epsilon) R_{\min}$, see \eqref{eq:avg_th_1}. Now, assume it holds for some $k{+}1\leq L^*$. Using the transition probabilities from state 
$(\mathcal U_k,1)$ and the induction hypothesis, we obtain $\bar T_{k}(\mathcal U_{k},1)=0$.
Instead, from state
$(\mathcal U_k,0)$ we obtain
\begin{align*}
&\bar T_{k}(\mathcal U_{k},0)=
\rho_k(1-p_{\mathrm{md}})
\bar T_{k+1}(\mathcal B_{k},0)
\\&
\qquad\qquad \quad+
\left(1-\rho_k\right)(1-p_{\mathrm{fa}})
\bar T_{k+1}(\mathcal U_k\setminus\mathcal B_{k},0)
\\&
=
(1-\epsilon) R_{\min} \prod_{j=k}^{L^*-1}\Bigr[\left(1-\rho_k\right)(1-p_{\mathrm{fa}})
+\rho_k(1-p_{\mathrm{md}}) \Bigr],
\end{align*}
which readily follows by applying the induction hypothesis. The induction step is thus proved.

Let  $\bar E_{k}(\mathcal U_{k},e_{k})$ be the \emph{energy-to-go} from
state $(\mathcal U_k,e_k)$ in slot $k{\leq}L^*$.
We prove that
\begin{align}
\bar E_{k}(\mathcal U_{k},e_{k})
=\left[v_k^{(L^*)}+h_k+u_k(1-e_k)\right]\size{\mathcal U_{k}}.
\end{align}
Then, \eqref{eq:thr_frac} follows from 
$\bar P_{\mathrm{err}}=\frac{1}{T_{\mathrm{fr}}}\bar E_{0}(\mathcal U_{0},0)$,
and by noticing that 
$v_0^{(L^*)}/T_{\mathrm{fr}}$ is the power consumption in the error-free case, given in Theorem \ref{thm:decoupled}.
 The induction hypothesis holds at $k{=}L^*$, since 
$\bar E_{L^*}(\mathcal U_{L^*},e_{L^*}){=}
(N-L^*)\phi_d\left(\frac{NR_{\min}}{N-L^*},\epsilon\right)\size{\mathcal U_{L^*}}=v_{L^*}^{(L^*)}+h_{L^*}+u_{L^*}(1-e_{L^*})$,
with $v_{L^*}^{(L^*)}$ given by \eqref{dfh},  $h_{L^*}=u_{L^*}=0$,
 see \eqref{eq:avg_th_1}.
  Now, assume it holds for some $k+1\leq L^*$. Using the transition probabilities from state 
$(\mathcal U_k,e_k)$, the induction hypothesis, and
the fact that $\size{\mathcal B_{k}}=\rho_k\size{\mathcal U_{k}}$ and
$\size{\mathcal U_k\setminus\mathcal B_{k}}=(1-\rho_k)\size{\mathcal U_{k}}$,
 we obtain
\begin{align*}
\bar E_{k}(\mathcal U_{k},1)&=
\phi_s\rho_k\size{\mathcal U_{k}}+p_{\mathrm{fa}}\bar E_{k+1}(\mathcal B_{k},1)
+(1-p_{\mathrm{fa}})\bar E_{k+1}(\mathcal U_{k}\setminus\mathcal B_k,1)
\\
&{=}
\Bigr\{
\phi_s\rho_k
{+}\left(v_{k+1}^{(L^*)}{+}h_{k+1}\right)\left[p_{\mathrm{fa}}\rho_k{+}(1-p_{\mathrm{fa}})\left(1-\rho_k\right)\right]
\Bigr\}\size{\mathcal U_{k}};
\end{align*}

\begin{align*}
\bar E_{k}(\mathcal U_{k},0)= &
\phi_s\rho_k\size{\mathcal U_{k}}
+\rho_k(1-p_{\mathrm{md}})\bar E_{k+1}(\mathcal B_{k},0)
+\left(1-\rho_k\right)p_{\mathrm{fa}}\bar E_{k+1}(\mathcal B_{k},1)
\\&
+\left(1-\rho_k\right)(1-p_{\mathrm{fa}})\bar E_{k+1}(\mathcal U_k\setminus\mathcal B_{k},0)
+\rho_kp_{\mathrm{md}}\bar E_{k+1}(\mathcal U_k\setminus\mathcal B_{k},1)
\\
=&
\Biggr\{
\phi_s\rho_k
+\left(v_{k+1}^{(L^*)}+h_{k+1}+u_{k+1}\right)\Bigr[\rho_k^2(1-p_{\mathrm{md}})
+\left(1-\rho_k\right)^2(1-p_{\mathrm{fa}})\Bigr]
\\&
\;\;{+}\left(v_{k+1}^{(L^*)}{+}h_{k+1}\right)\rho_k\left(1-\rho_k\right)(p_{\mathrm{fa}}{+}p_{\mathrm{md}})
\Biggr\}
\size{\mathcal U_{k}}.
\end{align*}
The induction step $\bar E_{k}(\mathcal U_{k},e_k)=(v_k^{(L^*)}+h_k+u_k(1-e_k))\size{\mathcal U_k}$ can be finally proved by 
 expressing
$v_{k}^{(L^*)}=g_k(\rho_k)$ and $\rho_k=\frac{1}{2}-\frac{\phi_s}{4v_{k+1}^{(L^*)}}$ using \eqref{vkupdate}, and using \eqref{eq:h_k}-\eqref{eq:u_k}.
\end{proof}

\bibliographystyle{IEEEtran}
\bibliography{IEEEabrv,biblio}

\begin{thebibliography}{10}
\providecommand{\url}[1]{#1}
\csname url@samestyle\endcsname
\providecommand{\newblock}{\relax}
\providecommand{\bibinfo}[2]{#2}
\providecommand{\BIBentrySTDinterwordspacing}{\spaceskip=0pt\relax}
\providecommand{\BIBentryALTinterwordstretchfactor}{4}
\providecommand{\BIBentryALTinterwordspacing}{\spaceskip=\fontdimen2\font plus
\BIBentryALTinterwordstretchfactor\fontdimen3\font minus
  \fontdimen4\font\relax}
\providecommand{\BIBforeignlanguage}[2]{{%
\expandafter\ifx\csname l@#1\endcsname\relax
\typeout{** WARNING: IEEEtran.bst: No hyphenation pattern has been}%
\typeout{** loaded for the language `#1'. Using the pattern for}%
\typeout{** the default language instead.}%
\else
\language=\csname l@#1\endcsname
\fi
#2}}
\providecommand{\BIBdecl}{\relax}
\BIBdecl

\bibitem{asilomar2018}
M.~Hussain and N.~Michelusi, ``Optimal interactive energy efficient
  beam-alignment for millimeter-wave networks,'' in \emph{52st Asilomar
  Conference on Signals, Systems, and Computers}, 2018, to appear.

\bibitem{cisco2}
\BIBentryALTinterwordspacing
CISCO, ``Cisco visual networking index: Global mobile data traffic forecast
  update, 2016--2021 white paper.'' [Online]. Available:
  \url{https://www.cisco.com/c/en/us/solutions/collateral/service-provider/visual-networking-index-vni/mobile-white-paper-c11-520862.html}
\BIBentrySTDinterwordspacing

\bibitem{channel_model}
M.~R. Akdeniz, Y.~Liu, M.~K. Samimi, S.~Sun, S.~Rangan, T.~S. Rappaport, and
  E.~Erkip, ``{Millimeter Wave Channel Modeling and Cellular Capacity
  Evaluation},'' \emph{IEEE Journal on Selected Areas in Communications},
  vol.~32, no.~6, pp. 1164--1179, June 2014.

\bibitem{rappaport_mmwave_book}
T.~S. Rappaport, R.~W. Heath, R.~C. Daniels, and J.~N. Murdock,
  \emph{Millimeter wave wireless communications}.\hskip 1em plus 0.5em minus
  0.4em\relax Prentice Hall, 2015.

\bibitem{icc2018}
N.~Michelusi and M.~Hussain, ``Optimal beam-sweeping and communication in
  mobile millimeter-wave networks,'' in \emph{2018 IEEE International
  Conference on Communications (ICC)}, May 2018, pp. 1--6.

\bibitem{ita2017}
M.~Hussain and N.~Michelusi, ``Throughput optimal beam alignment in millimeter
  wave networks,'' in \emph{2017 Information Theory and Applications Workshop
  (ITA)}, Feb 2017, pp. 1--6.

\bibitem{ita2018}
R.~A. Hassan and N.~Michelusi, ``Multi-user beam-alignment for millimeter-wave
  networks,'' in \emph{2018 Information Theory and Applications Workshop
  (ITA)}, Feb 2018, pp. 1--6.

\bibitem{asilomar2017}
M.~Hussain and N.~Michelusi, ``Energy efficient beam-alignment in millimeter
  wave networks,'' in \emph{2017 51st Asilomar Conference on Signals, Systems,
  and Computers}, Oct 2017, pp. 1219--1223.

\bibitem{new_benchmark}
J.~Zhang, Y.~Huang, Q.~Shi, J.~Wang, and L.~Yang, ``Codebook design for beam
  alignment in millimeter wave communication systems,'' \emph{IEEE Transactions
  on Communications}, vol.~65, no.~11, pp. 4980--4995, Nov 2017.

\bibitem{interactiveexhaustive}
S.~Haghighatshoar and G.~Caire, ``The beam alignment problem in mmwave wireless
  networks,'' in \emph{2016 50th Asilomar Conference on Signals, Systems and
  Computers}, Nov 2016, pp. 741--745.

\bibitem{exhaustive}
C.~Jeong, J.~Park, and H.~Yu, ``{Random access in millimeter-wave beamforming
  cellular networks: issues and approaches},'' \emph{IEEE Communications
  Magazine}, vol.~53, no.~1, pp. 180--185, January 2015.

\bibitem{iterative}
V.~Desai, L.~Krzymien, P.~Sartori, W.~Xiao, A.~Soong, and A.~Alkhateeb,
  ``{Initial beamforming for mmWave communications},'' in \emph{48th Asilomar
  Conference on Signals, Systems and Computers}, Nov 2014.

\bibitem{MDP_ba}
J.~Seo, Y.~Sung, G.~Lee, and D.~Kim, ``Training beam sequence design for
  millimeter-wave mimo systems: A pomdp framework,'' \emph{IEEE Transactions on
  Signal Processing}, vol.~64, no.~5, pp. 1228--1242, March 2016.

\bibitem{radar}
N.~Gonzalez-Prelcic, R.~Mendez-Rial, and R.~W. Heath, ``{Radar aided beam
  alignment in MmWave V2I communications supporting antenna diversity},'' in
  \emph{2016 Information Theory and Applications Workshop (ITA)}, Jan 2016, pp.
  1--7.

\bibitem{lowfreq}
T.~Nitsche, A.~B. Flores, E.~W. Knightly, and J.~Widmer, ``Steering with eyes
  closed: Mm-wave beam steering without in-band measurement,'' in \emph{2015
  IEEE Conference on Computer Communications (INFOCOM)}, April 2015, pp.
  2416--2424.

\bibitem{va}
V.~Va, T.~Shimizu, G.~Bansal, and R.~W. Heath, ``Beam design for beam switching
  based millimeter wave vehicle-to-infrastructure communications,'' in
  \emph{2016 IEEE International Conference on Communications (ICC)}, May 2016,
  pp. 1--6.

\bibitem{inverse_finger}
V.~Va, J.~Choi, T.~Shimizu, G.~Bansal, and R.~W. Heath, ``Inverse multipath
  fingerprinting for millimeter wave v2i beam alignment,'' \emph{IEEE
  Transactions on Vehicular Technology}, vol.~67, no.~5, pp. 4042--4058, May
  2018.

\bibitem{alkhateeb}
A.~Alkhateeb, O.~E. Ayach, G.~Leus, and R.~W. Heath, ``Channel estimation and
  hybrid precoding for millimeter wave cellular systems,'' \emph{IEEE Journal
  of Selected Topics in Signal Processing}, vol.~8, no.~5, pp. 831--846, Oct
  2014.

\bibitem{marzi}
Z.~Marzi, D.~Ramasamy, and U.~Madhow, ``Compressive channel estimation and
  tracking for large arrays in mm-wave picocells,'' \emph{IEEE Journal of
  Selected Topics in Signal Processing}, vol.~10, no.~3, pp. 514--527, April
  2016.

\bibitem{ieee80215c}
``{IEEE} {Std} 802.15.3c-2009,'' \emph{IEEE Standard}, pp. 1--200, Oct 2009.

\bibitem{ieee80211ad}
``{IEEE} {Std} 802.11ad-2012,'' \emph{IEEE Standard}, pp. 1--628, Dec 2012.

\bibitem{multi_res}
S.~Noh, M.~D. Zoltowski, and D.~J. Love, ``Multi-resolution codebook and
  adaptive beamforming sequence design for millimeter wave beam alignment,''
  \emph{IEEE Transactions on Wireless Communications}, vol.~16, no.~9, pp.
  5689--5701, 2017.

\bibitem{bf_tradeoff}
V.~Raghavan, J.~Cezanne, S.~Subramanian, A.~Sampath, and O.~Koymen,
  ``Beamforming tradeoffs for initial ue discovery in millimeter-wave mimo
  systems,'' \emph{IEEE Journal of Selected Topics in Signal Processing},
  vol.~10, no.~3, pp. 543--559, April 2016.

\bibitem{sectored_model}
T.~Bai and R.~W. Heath, ``{Coverage and Rate Analysis for Millimeter-Wave
  Cellular Networks},'' \emph{IEEE Transactions on Wireless Communications},
  vol.~14, no.~2, pp. 1100--1114, Feb 2015.

\bibitem{rate_maximization2}
S.~Noh, M.~D. Zoltowski, and D.~J. Love, ``Multi-resolution codebook and
  adaptive beamforming sequence design for millimeter wave beam alignment,''
  \emph{IEEE Transactions on Wireless Communications}, vol.~16, no.~9, pp.
  5689--5701, Sept 2017.

\bibitem{extended}
A.~A.~M. Saleh and R.~Valenzuela, ``A statistical model for indoor multipath
  propagation,'' \emph{IEEE Journal on Selected Areas in Communications},
  vol.~5, no.~2, pp. 128--137, February 1987.

\bibitem{zorzi}
C.~N. Barati, S.~A. Hosseini, M.~Mezzavilla, T.~Korakis, S.~S. Panwar,
  S.~Rangan, and M.~Zorzi, ``{Initial Access in Millimeter Wave Cellular
  Systems},'' \emph{IEEE Transactions on Wireless Communications}, vol.~15,
  no.~12, pp. 7926--7940, Dec 2016.

\bibitem{jeffery_andrew_ba}
Y.~Li, J.~G. Andrews, F.~Baccelli, T.~D. Novlan, and C.~J. Zhang, ``Design and
  analysis of initial access in millimeter wave cellular networks,'' \emph{IEEE
  Transactions on Wireless Communications}, vol.~16, no.~10, pp. 6409--6425,
  Oct 2017.

\bibitem{rappaport_channel_model}
M.~R. Akdeniz, Y.~Liu, M.~K. Samimi, S.~Sun, S.~Rangan, T.~S. Rappaport, and
  E.~Erkip, ``Millimeter wave channel modeling and cellular capacity
  evaluation,'' \emph{IEEE Journal on Selected Areas in Communications},
  vol.~32, no.~6, pp. 1164--1179, June 2014.

\bibitem{beam_coherence_time}
V.~Va, J.~Choi, and R.~W. Heath, ``The impact of beamwidth on temporal channel
  variation in vehicular channels and its implications,'' \emph{IEEE
  Transactions on Vehicular Technology}, vol.~66, no.~6, pp. 5014--5029, June
  2017.

\bibitem{blockage}
M.~Gapeyenko, A.~Samuylov, M.~Gerasimenko, D.~Moltchanov, S.~Singh, M.~R.
  Akdeniz, E.~Aryafar, N.~Himayat, S.~Andreev, and Y.~Koucheryavy, ``On the
  temporal effects of mobile blockers in urban millimeter-wave cellular
  scenarios,'' \emph{IEEE Transactions on Vehicular Technology}, vol.~66,
  no.~11, pp. 10\,124--10\,138, Nov 2017.

\bibitem{shanon}
C.~E. Shannon, ``A mathematical theory of communication,'' \emph{Bell System
  Technical Journal}, vol.~37, 1948.

\bibitem{mmnets17}
M.~Hussain, D.~J. Love, and N.~Michelusi, ``{Neyman-Pearson Codebook Design for
  Beam Alignment in Millimeter-Wave Networks},'' in \emph{the 1st ACM Workshop
  on Millimeter-Wave Networks and Sensing Systems}, ser. mmNets '17, 2017.

\bibitem{Rangan}
S.~Rangan, T.~S. Rappaport, and E.~Erkip, ``Millimeter-wave cellular wireless
  networks: Potentials and challenges,'' \emph{Proceedings of the IEEE}, vol.
  102, no.~3, pp. 366--385, March 2014.

\bibitem{simon}
M.~K. Simon, \emph{{Probability Distributions Involving Gaussian Random
  Variables}}.\hskip 1em plus 0.5em minus 0.4em\relax Springer Pr., 2002.

\bibitem{allerton2018}
M.~Hussain and N.~Michelusi, ``Coded energy-efficient beam-alignment for
  millimeter-wave networks,'' in \emph{56th Annual Allerton Conference on
  Communication, Control, and Computing}, 2018, pp. 1--6, to appear.

\bibitem{bertsekas}
D.~P. Bertsekas, \emph{Dynamic programming and optimal control}.\hskip 1em plus
  0.5em minus 0.4em\relax Athena Scientific, 2005.

\bibitem{supp_doc}
\BIBentryALTinterwordspacing
M.~Hussain and N.~Michelusi, ``Energy-efficient interactive beam-alignment for
  millimeter-wave networks,'' 2018. [Online]. Available:
  \url{https://arxiv.org/abs/1805.06089}
\BIBentrySTDinterwordspacing

\end{thebibliography}

\end{document}